\newcommand*{\namedFunction}[3]{\ensuremath{#1_{#2}\left(#3\right)}}
\newcommand*{\namedFunctionCurly}[3]{\ensuremath{#1_{#2}\left\{#3\right\}}}
\newcommand*{\dist}[3][]{{\ensuremath{\text{dist}}}_{{#1}}({#2,#3})}
\newcommand*{\Dist}[2][]{{\ensuremath{\text{dist}}}_{{#1}}({#2})}
\newcommand*{\buy}[2][]{\text{buy}_{{#1}}({#2})}
\newcommand*{\Cost}[2][]{\text{cost}_{{#1}}({#2})}
\newcommand*{\layer}[2][]{\namedFunction{\ell}{#1}{#2}}
\newcommand*{\minset}[1]{\namedFunctionCurly{\min}{}{#1}}
\newcommand*{\maxset}[1]{\namedFunctionCurly{\max}{}{#1}}
\newcommand*{\autoset}[1]{\namedFunctionCurly{}{}{#1}}
\newcommand*{\depth}[1]{\ensuremath{\textrm{depth}}{}({#1})}
\newcommand*{\neig}[3][]{\ensuremath{\textrm{Neigh}_{#1}^{#2}\left(#3\right)}}
\newcommand*{\poa}[2][]{\ensuremath{\rho_{#1}\left(#2\right)}}
\newcommand*{\card}[1]{\ensuremath{\left|{#1}\right|}}
\newcommand*{\floor}[1]{\ensuremath{\left\lfloor{#1}\right\rfloor}}
\newcommand*{\ceil}[1]{\ensuremath{\left\lceil{#1}\right\rceil}}
    \newcommand*{\N}{\ensuremath{\mathds{N}}}
    \newcommand*{\R}{\ensuremath{\mathds{R}}}
\newcommand{\opt}{\ensuremath{\textrm{OPT}}}
\begin{document}

%%
%% The "title" command has an optional parameter,
%% allowing the author to define a "short title" to be used in page headers.
\title{The Impact of Cooperation in Bilateral Network Creation}
%\title{The Name of the Title Is Hope}

%%
%% The "author" command and its associated commands are used to define
%% the authors and their affiliations.
%% Of note is the shared affiliation of the first two authors, and the
%% "authornote" and "authornotemark" commands
%% used to denote shared contribution to the research.
\author{Tobias Friedrich}
\email{tobias.friedrich@hpi.de}
\orcid{0000-0003-0076-6308}
\affiliation{%
  \institution{Hasso Plattner Institute\\University of Potsdam}
  \city{Potsdam}
  \country{Germany}
}

\author{Hans Gawendowicz}
\email{hans.gawendowicz@hpi.de}
\orcid{0000-0001-8394-4469}
\affiliation{%
	\institution{Hasso Plattner Institute\\University of Potsdam}
	\city{Potsdam}
	\country{Germany}
}

\author{Pascal Lenzner}
\email{pascal.lenzner@hpi.de}
\orcid{0000-0002-3010-1019}
\affiliation{%
	\institution{Hasso Plattner Institute\\University of Potsdam}
	\city{Potsdam}
	\country{Germany}
}

\author{Arthur Zahn}
\email{arthur.zahn@student.hpi.de}
\orcid{0009-0007-2904-7098}
\affiliation{%
	\institution{Hasso Plattner Institute\\University of Potsdam}
	\city{Potsdam}
	\country{Germany}
}

%%
%% By default, the full list of authors will be used in the page
%% headers. Often, this list is too long, and will overlap
%% other information printed in the page headers. This command allows
%% the author to define a more concise list
%% of authors' names for this purpose.
\renewcommand{\shortauthors}{Friedrich et al.}

\begin{abstract}
  Many real-world networks, like the Internet or social networks, are not the result of central design but instead the outcome of the interaction of local agents that selfishly optimize their individual utility. The well-known Network Creation Game by~Fabrikant, Luthra, Maneva, Papadimitriou, and Shenker~\cite{fabrikant2003network} models this. There, agents corresponding to network nodes buy incident edges towards other agents for a price of~\(\alpha > 0\) and simultaneously try to minimize their buying cost and their total hop distance.
Since in many real-world networks, e.g., social networks, consent from both sides is required to establish and maintain a connection, \citet{corbo2005price} proposed a bilateral version of the Network Creation Game, in which mutual consent and payment are required in order to create edges. It is known that this cooperative version has a significantly higher Price of Anarchy compared to the unilateral version. On the first glance this is counter-intuitive, since cooperation should help to avoid socially bad states. However, in the bilateral version only a very restrictive form of cooperation is considered.

We investigate this trade-off between the amount of cooperation and the Price of Anarchy by analyzing the bilateral version with respect to various degrees of cooperation among the agents. With this, we provide insights into what kind of cooperation is needed to ensure that socially good networks are created.
As a first step in this direction, we focus on tree networks and present a collection of asymptotically tight bounds on the Price of Anarchy that precisely map the impact of cooperation. Most strikingly, we find that weak forms of cooperation already yield a significantly improved Price of Anarchy. In particular, the cooperation of coalitions of size 3 is enough to achieve constant bounds. Moreover, for general networks we show that enhanced cooperation yields close to optimal networks for a wide range of edge prices. Along the way, we disprove an old conjecture by \citet{corbo2005price}.
\end{abstract}

%%
%% The code below is generated by the tool at http://dl.acm.org/ccs.cfm.
%% Please copy and paste the code instead of the example below.
%%
\begin{CCSXML}
<ccs2012>
<concept>
<concept_id>10003752.10010070.10010099.10010110</concept_id>
<concept_desc>Theory of computation~Network formation</concept_desc>
<concept_significance>500</concept_significance>
</concept>
<concept>
<concept_id>10003752.10010070.10010099.10010104</concept_id>
<concept_desc>Theory of computation~Quality of equilibria</concept_desc>
<concept_significance>500</concept_significance>
</concept>
<concept>
<concept_id>10003752.10010070.10010099.10003292</concept_id>
<concept_desc>Theory of computation~Social networks</concept_desc>
<concept_significance>500</concept_significance>
</concept>
<concept>
<concept_id>10003752.10010070.10010099.10010100</concept_id>
<concept_desc>Theory of computation~Algorithmic game theory</concept_desc>
<concept_significance>500</concept_significance>
</concept>
</ccs2012>
\end{CCSXML}

\ccsdesc[500]{Theory of computation~Network formation}
\ccsdesc[500]{Theory of computation~Quality of equilibria}
\ccsdesc[500]{Theory of computation~Social networks}
\ccsdesc[500]{Theory of computation~Algorithmic game theory}

%%
%% Keywords. The author(s) should pick words that accurately describe
%% the work being presented. Separate the keywords with commas.
\keywords{Network Creation Games, Cooperation, Price of Anarchy}

\maketitle

\section{Introduction}
Many real-world problems are related to networks or connections between entities. Given this, research on networks is concerned with the creation of efficient networks with regard to various objective functions. Traditionally, such networks are created by a centralized algorithm, like in the cases of Minimum Spanning Trees \cite{graham1985history}, Topology Control Problems~\cite{rajaraman2002topology}, and Network Design Problems~\cite{magnanti1984network}.
Such central orchestration is a good model if the whole network belongs to one real-world entity, i.e., a firm designing its internal communication network, who governs the network and centrally pays for its infrastructure. However, most large real-world networks instead emerged from the interaction of multiple entities, each with their own goals, who each have control over a local part of the network. For example, the Internet is a network of networks where each subnetwork is centrally controlled by an Internet service provider. Hence, to understand the formation of such networks, game-theoretic agent-based models are needed.
For such models, one of the prime questions is to understand the impact of the agents' selfishness on the overall quality of the created networks.
This impact is typically measured by the Price of Anarchy~(PoA)~\cite{KP09}.

The Network Creation Game (NCG) by Fabrikant, Luthra, Maneva, Papadimitriou, Shenker~\cite{fabrikant2003network} is one of the most prominent game-theoretic models for the formation of networks. There, the agents correspond to nodes in a network and each agent can unilaterally build incident edges to other nodes for an edge price of~\(\alpha>0\). All agents simultaneously try to minimize (1) their total hop distance to everyone else and (2) the cost they incur for building connections. As both incentives are in direct conflict, the challenge is to analyze how this tension is resolved in the equilibria of the game.
The NCG found widespread appeal
and many variants have recently been studied.
One of the earliest variants is the Bilateral Network Creation Game (BNCG) by Corbo and Parkes~\cite{corbo2005price}, which models that not all real-world settings allow unilateral edge formation. For instance, social networks require mutual consent and both sides to invest their time and effort in order to maintain a connection. Consequently, the BNCG demands that both incident agents pay for an edge to establish it. From this arises a need for coordination. Thus, Corbo and Parkes~\cite{corbo2005price} did not analyze the standard Pure Nash Equilibrium (NE) as solution concept, but instead focused on the well-known concept of Pairwise Stability (PS)~\cite{jackson1996strategic}, in which two agents are allowed to cooperate to form a mutual connection. 

Interestingly, despite an abundance of literature on the NCG, its bilateral variant remains widely unexplored. This is even more astonishing given the known results on the PoA for both models: While the PoA with respect to NE of the NCG is known to be constant for most ranges of $\alpha$, the PoA with respect to PS of the BNCG was proven to be high~\cite{corbo2005price,demaine2009price}. Hence, the required cooperation for establishing edges leads to socially worse equilibrium states.
But cooperation among the agents should be beneficial and should reduce the social cost of equilibrium states. Thus, the problem with the analysis of the BNCG seems to be the drastically limited amount of cooperation allowed by Pairwise Stability. This directly gives rise to a very natural question: How much cooperation among the agents is actually needed to ensure a low Price of Anarchy?   

We answer this question by investigating the impact of different amounts of cooperation on the quality of the resulting equilibrium states, i.e., the social cost of the created networks.
We establish the positive result that allowing slightly more cooperation than allowed by PS already has a significant impact on the PoA. 

\subsection{Model and Notation}
\label{sec:model_and_notation}
For~\(i, j\in\N\), with~\(i\leq j\), we denote the set~\(\autoset{k\in\N \mid i\leq k\leq j}\) as \([i..j]\) and the set~\([1..i]\) as~\([i]\). Whenever we use the logarithm, the base is always~\(2\) unless specified otherwise.

\textbf{Graphs:} We model networks as \emph{graphs}, consisting of nodes and undirected edges, defined as a pair~\(G=(V_G, E_G)\), where~\(V_G\) is the set of nodes and~\(E_G\) the set of edges.
The number of nodes~\(\card{V_G}\) is~\(n_G > 0\). We omit the subscript if the graph is clear from the context. Each edge~\(e\in E_G\) is a subset of~\(V_G\) and consists of the two distinct nodes it connects. For two nodes~\(u,v\in V_G\), we use~\(uv\) for \(\{u,v\}\) and say that~\(u\) and~\(v\) are \emph{neighbors} in~\(G\) if~\(uv\in E_G\).

The \emph{distance}~\(\dist[G]{u}{v}\) between two nodes~\(u,v\in V\) in a graph~\(G\) is defined as the number of edges on a shortest path from~\(u\) to~\(v\) in $G$. Since our graphs are undirected, we have \(\dist{u}{v} = \dist{v}{u}\). Furthermore, we assume~\(\dist{u}{u} = 0\).
If no path exists, we consider the distance to be extremely large. For technical reasons, we define the constant~\(M\in\N\) and say~\(\dist{u}{v} = M\) in such cases. We elaborate on this constant~\(M\) below.
The \emph{total distance} of a node~\(u\in V\) to multiple other nodes~\(V'\subseteq V\) is defined by~\(\dist{u}{V'} = \sum_{v\in V'}\dist{u}{v}\). In the special case~\(V=V'\), we use the shorthand~\(\Dist{u} = \dist{u}{V}\) and call this the total distance cost of node~\(u\).
We define the \emph{extended neighborhood} of a node~\(u\in V\) as follows: for~\(i\in\N\), we refer to the set of all nodes of distance at most~\(i\) from~\(u\) as the set \(\neig[G]{\leq i}{u} = \autoset{v\in V\mid \dist{u}{v}\leq i}\). Analogously, we define~\(\neig{=i}{u}\) and note that~\(\neig{=1}{u}\) denotes the \emph{neighborhood} of~\(u\).

For a graph~\(G\) and~\(u,v\in V\), the graph~\(G-uv\) refers to~\(G\) without edge~\(uv\), i.e.,~\((V, E\setminus \autoset{uv})\). By contrast, if we consider adding~\(uv\) to~\(G\), we denote the resulting graph as~\(G+uv\).

\textbf{(Bilateral) Network Creation Game:}
The Network Creation Game consists of~\(n\) selfish \emph{agents} who are building a network among each other. These agents correspond to the node set~\(V\) of the resulting graph, so we will use the terms agent and node interchangeably. Each agent~\(u\in V\) has a \emph{strategy}~\(S_u\subseteq V\setminus\autoset{u}\), which specifies towards which other nodes the agent~\(u\) wants to create edges. The individual strategies of all agents are combined into a \emph{strategy vector}~\(S\), which encapsulates a state of the game.
The \emph{created graph}~\(G\) is derived from this strategy vector. We consider two versions, the \emph{Unilateral Network Creation Game (NCG)}~\cite{fabrikant2003network} and the \emph{Bilateral Network Creation Game (BNCG)}~\cite{corbo2005price}.
In the NCG, for two nodes~\(u,v\in V\), the edge~\(uv\) is part of~\(G\) exactly if~\(u\in S_v\) or~\(v\in S_u\). In contrast, in the BNCG, we have~\(uv\in E\) if and only if~\(u\in S_v\) and~\(v\in S_u\), i.e., both agents need to agree on the creation of the edge. In this work, we will mostly focus on the BNCG.

Building edges is not free in the (B)NCG. Instead, there is a parameter~\(\alpha\in\R_{>0}\) which determines the buying cost of each edge. An agent~\(u\in V\) incurs the buying cost~\(\buy[S]{u} = \alpha\card{S_u}\), so she has to pay for each target node in her strategy. Note that in the BNCG an agent~\(u\) has to pay for~\(v\in S_u\) even if~\(u\notin S_v\) and thus~\(uv\notin E\). Similarly, edges might be paid for twice in the NCG if~\(u\in S_v\) and~\(v\in S_u\). However, such inefficiencies will not arise in equilibria and hence we will ignore them. Given this, we have a bijection between strategy vectors and created graphs in the BNCG. Thus, we can abstract away from the underlying strategy vector~\(S\) and directly consider the created graph~\(G\).

Each agent~$u$ aims to minimize her \emph{total cost} in graph $G$, denoted by~$\Cost[G]{u}$ and defined as the sum of her buying cost and her total hop distance to all other agents, i.e.,
$$\Cost[G]{u} = \buy[G]{u} + \Dist[G]{u} = \alpha|S_u| + \sum_{v\in V_G}\dist[G]{u}{v}.$$

Remember that we define the distance between two disconnected nodes to be~\(M\). We set $M$ to some value larger than \(\alpha n^3\) to enforce that, for~\(i\in [n]\), an agent should prefer any graph where she can reach~\(i\) agents over any graphs where she can reach at most~\(i-1\) agents, but given the same reachability she should prefer to minimize her buying and distance cost.

\textbf{Solution Concepts:}
As all agents are selfish, the created graphs are the result of decentralized decisions instead of central coordination. 
When analyzing the properties of such games, we are especially interested in \emph{equilibria}, which describe strategy vectors that are stable against specific types of strategy changes. Non-equilibrium states might not persist,
because the agents would defect from such states in order to decrease their cost.
We consider different solution concepts, where each of them is characterized by the types of improving strategy changes it is stable against. We say that a strategy change is \emph{improving} for some agent if the change yields strictly lower cost for this agent.

The most prominent solution concept for the NCG is the \emph{Pure Nash Equilibrium (NE)}.
A strategy vector~\(S\) is in NE if and only if no agent~\(u\in V\) can strictly decrease her cost by changing her strategy~\(S_u\). However, the NE is not suited for the BNCG, since a node~\(u\in V\) may delete edges unilaterally by removing them from her strategy, but she cannot add new edges by changing only her own strategy.
Hence, a meaningful solution concept for the BNCG requires that multiple agents are able to coordinate to change their strategies in an atomic step. 
We consider the following solution concepts, presented in order of increasing amount of cooperation. Newly introduced concepts are marked with a "$\star$", concepts adapted from the unilateral version are marked with "$\dagger$".
\begin{itemize}%[itemsep=0.5pt,topsep=1pt,leftmargin=12pt]
 \item \textbf{Remove Equilibrium$^\star$(RE):} A strategy vector~\(S\) is a \emph{Remove Equilibrium (RE)} if no agent \(u\in V\) can improve by removing a single node from her strategy~\(S_u\).
 \item \textbf{Bilateral Add Equilibrium$^\dagger$(BAE):} A strategy vector~\(S\) is a \emph{Bilateral Add Equilibrium (BAE)}\footnote{Chauhan, Lenzner, Melnichenko, and Molitor~\cite{chauhan2017selfish} considered Add-Only Equilibria for a unilateral variant of the NCG. Also there, only edge additions are allowed. A bilateral version was also considered by Bullinger, Lenzner, and Melnichenko~\cite{BLM22}.} if there are no~\(u,v\in V\) such that both improve by adding~\(u\) to~\(S_v\) and~\(v\) to~\(S_u\).  
 \item \textbf{Pairwise Stability (PS)~\cite{jackson1996strategic}:} A strategy vector~\(S\) is \emph{pairwise stable (PS)}, if it is in Remove Equilibrium and in Bilateral Add Equilibrium.
 \item \textbf{Bilateral Swap Equilibrium$^\dagger$(BSwE):} A strategy vector~\(S\) is a \emph{Bilateral Swap Equilibrium (BSwE)}\footnote{Mihalák and Schlegel~\cite{mihalak2012asymmetric} study Asymmetric Swap Equilibria for the NCG. The BSwE extends this.} if there are no nodes \(u,v,w\in V\) with~\(u\in S_v\),~\(v\in S_u\) and~\(u\notin S_w\) or \(w\notin S_u\) such that~\(u\) and~\(w\) can improve by replacing~\(v\) by~\(w\) in~\(S_u\) and adding~\(u\) to~\(S_w\). 
 \item \textbf{Bilateral Greedy Equilibrium$^\dagger$(BGE):} A strategy vector~\(S\) is in \emph{Bilateral Greedy Equilibrium (BGE)}\footnote{For the NCG, Lenzner~\cite{lenzner2012greedy} defined the Greedy Equilibrium (GE), where no agent can improve by unilaterally adding, removing, or swapping a single incident edge. The BGE is the natural extension.} if it is pairwise stable and in Bilateral Swap Equilibrium.
 \item \textbf{Bilateral Neighborhood Equilibrium$^\star$(BNE):} A strategy vector~\(S\) is in \emph{Bilateral Neighborhood Equilibrium (BNE)} if there is no agent~\(u\in V\) with the following type of improving move.
Let~\(R\subseteq S_u\) and let~\(A\subseteq V\setminus S_u\). Removing the edges between~\(u\) and~\(R\) and adding the edges between~\(u\) and~\(A\) is an improving move if and only if~\(u\) and all nodes in~\(A\) strictly benefit from the whole change\footnote{Note that the allowed strategy changes in the BNE mirror the types of improving moves considered for NE in the NCG. Hence, the BNE is the natural extension of the NE to bilateral edge formation.}.% In contrast to the BmAE, the BNE is not only stable against multi-additions but also stable against combinations of multi-additions and multi-deletions.}.
\item \textbf{Bilateral ($k$-)Strong Equilibrium$^\dagger$(($k$-)BSE):} A strategy vector~\(S\) is in~\emph{Bilateral $k$-Strong Equilibrium ($k$-BSE)}\footnote{For the NCG, Andelman, Feldman, and Mansour~\cite{andelman2009strong} and de Keijzer and Janus~\cite{de2019strong} investigated the Strong Equilibrium (SE), which is stable against unilateral strategy changes by any coalition of agents. The BSE is the natural extension.} if there is no coalition~\(\Gamma\subseteq V\) of size at most~\(k\) such that there is the following type of improving move. The move can delete a subset of edges~\(R\subseteq E\), as long as for each edge~\(uv\in R\) it holds that~\(uv\cap\Gamma\neq\emptyset\). At the same time, it can add a set of new edges~\(A\subseteq 2^V\setminus E\), if for all~\(uv\in A\) it holds that~\(u\) and~\(v\) are both included in~\(\Gamma\). The move is improving if all nodes in~\(\Gamma\) strictly benefit from it.
A graph~\(G\) is in \emph{Bilateral Strong Equilibrium (BSE)} if it is in $n$-BSE.
\end{itemize}
Throughout the different solutions concepts above, we can see the trend of increasingly enhanced coordination, with the BSE admitting the strongest form of agent collaboration.

\textbf{Quality of Equilibria:}
The \emph{social cost} of~\(G\) is defined as the total cost incurred by all agents, denoted by~\(\Cost{G} = \sum_{u\in V}\Cost{u}\). We define the \emph{total distance cost} of $G$ as \(\Dist{G} = \sum_{u\in V}\Dist{u}\) and the \emph{total buying cost} of~\(G\) as \(\buy{G} = \sum_{u\in V}\buy{u}\).
Graphs with minimum social cost for given~\(n\) and~\(\alpha\) are called \emph{social optima}.

For a graph~\(G\) with $n$ nodes and parameter~\(\alpha\), we define the \emph{social cost ratio}~\(\poa{G}\) as follows. Let~\(\opt\) be a social optimum for~\(n\) and~\(\alpha\), then~\(\poa{G}=\frac{\Cost{G}}{\Cost{\opt}}\), i.e., the ratio of the social costs of $G$ and $\opt$. In particular,~\(\poa{G}\) is~\(1\) exactly if~\(G\) is a social optimum.

The \emph{Price of Anarchy (PoA)}~\cite{KP09} for a specific solution concept is a function of~\(n\in\N\) and~\(\alpha\in\R_{>0}\). For given~\(n\) and~\(\alpha\), let~\(\mathcal G\) denote the set of all graphs with~\(n\) nodes which meet the considered equilibrium definition. Then, the PoA for that~\(n\) and~\(\alpha\) is defined as \(\maxset{\poa{G} \mid G\in\mathcal G}\). We will mainly consider the PoA of tree networks, which is defined as the general PoA but there $\mathcal{G}$ is further refined to only contain equilibrium networks that are trees.
The PoA is a measure for the worst-case inefficiency and we are particularly interested in the asymptotics for the different solution concepts, i.e., in bounds on the PoA that depend on the number of agents $n$ and the edge price $\alpha$.

\subsection{Related Work}
The NCG was introduced by Fabrikant, Luthra, Maneva, Papadimitriou, and Shenker~\cite{fabrikant2003network}.
They showed that trees in NE have a PoA of at most~\(5\) and conjectured that there is a constant~\(A\in\N\) such that all graphs in NE are trees if~\(\alpha\geq A\).
While the original tree conjecture was disproven~\cite{albers2014nash}, it was reformulated to hold for~\(\alpha \geq n\). This is best possible, since non-tree networks in NE exist for $\alpha < n$~\cite{MMM13}.
A recent line of research~\cite{albers2014nash, MS10, MMM13, alvarez2017network, AM19, bilo2020tree, dippel2021improved} proved the adapted conjecture to hold for~\(\alpha > 3n-3\), further refined the constant upper bounds on the PoA for tree networks, and showed that the PoA is constant if $\alpha > n(1+\varepsilon)$, for any $\varepsilon > 0$. The currently best general bound on the PoA was established by Demaine, Hajiaghayi, Mahini, and Zadimoghaddam~\cite{demaine2012price}. They derived a PoA bound of~$o(n^\varepsilon)$, for any $\varepsilon>0$,
and gave a constant upper bound for~\(\alpha\in\mathcal{O}(n^{1-\varepsilon})\). Besides bounds on the PoA, the complexity of computing best response strategies and the convergence to equilibria via sequences of improving moves have been studied in ~\cite{fabrikant2003network} and \cite{kawald2013dynamics}, respectively.  

The NCG was also analyzed with regard to Strong Equilibria (SE), for which the PoA was shown to be at most~\(2\) in general~\cite{andelman2009strong} and at most~\(\frac{3}{2}\) for~\(\alpha\geq 2\)~\cite{de2019strong}.
Greedy Equilibria (GE) for the NCG have been introduced by Lenzner~\cite{lenzner2012greedy} who showed that any graph in GE is in 3-approximate NE and that for trees GE and NE coincide. Also, it is shown that if an agent can improve by buying multiple edges, then she can improve by buying a single edge. 

Many variants of the NCG have been studied: a version where agents want to minimize their maximum distance~\cite{demaine2012price}, variants with weighted traffic~\cite{albers2014nash}, or a version on a host graph~\cite{demaine2009price}. Also geometric variants have been considered, for example a variant where the agents want to minize their average stretch~\cite{MSW06} or a version on a host graph with weighted edges~\cite{bilo2019geometric, friedemann2021efficiency}. Moreover, also NCG variants with non-uniform edge prices~\cite{MMO14,CMH14,chauhan2017selfish,bilo2019geometric,BFLMM20,friedemann2021efficiency,BLM22}, locality~\cite{BGLP16,BGLP21,CL15}, and robustness~\cite{MMO15,CLMM16,EFLM20}.

We focus on the BNCG, proposed by Corbo and Parkes~\cite{corbo2005price}. Pairwise Stability dates back to Jackson and Wolinsky~\cite{jackson1996strategic}. So far, the PoA in the BNCG has only been analyzed for PS. In particular, an upper bound of~\(\mathcal{O}(\minset{\sqrt{\alpha}, n/\sqrt{\alpha}})\) was shown~\cite{corbo2005price} and proven to be tight~\cite{demaine2012price}. Corbo and Parkes~\cite{corbo2005price} conjecture that all graphs in NE are also pairwise stable.
Recently, Bilò, Friedrich, Lenzner, Lowski, and Melnichenko~\cite{BFLLM21} study a variant of the BNCG with non-uniform edge price. Also, a version with inverted cost function modeling social distancing was proposed~\cite{friedrich2022social}.

The BNCG has also been generalized by introducing cost-sharing of the edge prices~\cite{albers2014nash}. In this variant, every agent specifies for every possible edge a cost-share she is willing to pay. Then edges with total cost-shares of at least $\alpha$ are formed.
Moreover, Demaine, Hajiaghayi, Mahini, and Zadimoghaddam~\cite{demaine2009price}
introduced the Collaborative Equilibrium (CE), which is in-between PS and SE. A CE is stable against strategy changes by any coalition of agents that concern the joint cost-shares of any single edge.
In contrast to the strategy changes that we focus on, this implies that (coalitions of) agents can also create non-incident edges.   

\subsection{Our Contribution}
We analyze the Bilateral Network Creation Game with regard to different solution concepts in order to evaluate the impact of different levels of cooperation on the Price of Anarchy.

For this, we introduce various natural solution concepts with a wide range of allowed cooperation among the agents. We compare the subset relationships and
we discuss the relationship between solution concepts for the NCG and the BNCG.
Thereby, we disprove the old conjecture by Corbo and Parkes~\cite{corbo2005price} that all graphs in NE are also pairwise stable.

Our main contribution is a thorough investigation of the PoA for various degrees of allowed cooperation among the agents. See \Cref{table:tree_equilibria_results} for an overview. 
\begin{table}[h]
    \centering
    \begin{tabular}{llr}
        \noalign{\hrule height 1pt}
         \textbf{Equilibrium} & \textbf{PoA on Trees} & \textbf{Source} \\
        \noalign{\hrule height 1pt}
         PS &~\(\Theta(\minset{\sqrt{\alpha}, n/\sqrt{\alpha}})\) &~\(O\): \cite{corbo2005price},~\(\Omega\): \cite{demaine2012price}\\
         BSwE &~\(\Theta(\log \alpha)\) & \Cref{sec:swap_on_tree}\\
         BGE &~\(\Theta(\log \alpha)\) & \Cref{sec:greedy_on_tree}\\
         BNE &~\(\Theta(\log \alpha)\), for~\(\alpha\geq n^{1/2+\varepsilon}\), & \Cref{sec:neighborhood_on_tree}\\ &~\(\Theta(1)\), for~\(\alpha\leq \sqrt{n}\) & \Cref{sec:neighborhood_on_tree}\\
         3-BSE &~\(\Theta(1)\) & \Cref{sec:3-SE_on_tree}\\
        \noalign{\hrule height 1pt}
        \textbf{Equilibrium} & \textbf{PoA on General Graphs} & \textbf{Source} \\
        \noalign{\hrule height 1pt}
        BSE &~$\Theta(1)$, for \(\alpha\leq n^{1-\varepsilon}\) & \Cref{chapter:general_equilibria}\\
            &~$\Theta(1)$, for \(\alpha\geq n\log n\) & \Cref{chapter:general_equilibria}\\
             &~$\mathcal{O}\big(\frac{\log n}{\log\log\log n}\big)$, otherwise & \Cref{chapter:general_equilibria}\\
        \noalign{\hrule height 1pt}
    \end{tabular}%}
    \caption{Asymptotic PoA bounds for~\(\alpha\geq 1\) and~\(\alpha < n^{2-\varepsilon}\), for any $\varepsilon > 0$. \vspace*{-0.5cm}}
    \label{table:tree_equilibria_results}
\end{table}
As a first step in this direction, we mainly focus on tree networks in equilibrium. Such networks are of prime importance in the (B)NCG research, since the existence of equilibria that are trees is guaranteed for most of the parameter space\footnote{For $\alpha \geq 1$ a star on $n$ nodes is an equilibrium for all considered solution concepts.}, and because the PoA of tree networks in the unilateral NCG is constant~\cite{fabrikant2003network}. Moreover, real-world networks are typically sparse and tree-like. Thus, the analysis of tree equilibria serves as a natural first step towards understanding more complicated equilibria.

Most importantly, we show that studying tree networks yields valuable insights on the impact of cooperation for the BNCG. In particular, we observe multiple improvements to the asymptotic PoA as we increase the allowed amount of cooperation. While the PoA is known to be in \(\Theta(\minset{\sqrt{\alpha}, n/\sqrt{\alpha}})\) for PS~\cite{corbo2005price,demaine2012price}, which only allows cooperation for creating a single edge, we show that allowing cooperative edge swaps, as in Bilateral Swap Equilibria or in Bilateral Greedy Equilibria, already improves the PoA to~\(\Theta(\log \alpha)\). 
For~\(\alpha\geq n^{1/2 + \varepsilon}\), with~\(\varepsilon > 0\), we get the same PoA for trees in Bilateral Neighborhood Equilibrium (BNE), which is the bilateral version of the NE in the unilateral NCG. However, on the positive side, we show that the PoA for trees in BNE surprisingly turns constant if edges are cheap, i.e., for~\(\alpha\leq \sqrt{n}\).

Our most significant result is a constant PoA for tree equilibria using the Bilateral 3-Strong Equilibrium. At least on tree networks, this implies that very little agent cooperation is needed to ensure socially good stable states. Thus, for a system designer it is quite easy to enable the agents to escape from socially bad stable states: simply allow cooperation of coalitions of size 3. Contrasting this, we also show that the PoA for Bilateral 2-Strong Equilibria is in $\Omega(\log \alpha)$, i.e., no constant PoA can be achieved by considering smaller coalitions. Thus, we exactly pin-point the minimum amount of cooperation needed to ensure a constant PoA on tree networks.  

For general networks, we investigate the impact of enhanced cooperation on the PoA by studying Bilateral Strong Equilibria. Our results are similar to the best known PoA bounds for NE in the NCG. We prove a constant PoA for Bilateral Strong Equilibria for~\(\alpha\leq n^{1-\varepsilon}\), with~\(\varepsilon > 0\), and for~\(\alpha\geq n\log n\). For the $\alpha$ range in-between, we show that the PoA is in $o(\log n)$.
It remains open whether the PoA is constant for all $\alpha >0$ and we show that our techniques cannot be applied to obtain a constant PoA bound for~\(\alpha = n\).

All omitted details can be found in the appendix.

\section{Relationships of Equilibria}
\label{sec:relationship_between_equilibria}
We explore the subset relationships among the introduced solution concepts. See \Cref{fig:equilibrium_hierarchy} for our results. The relationships follow directly from the definitions, but showing that subsets are proper or showing non-comparability requires a rich set of examples.

\begin{figure}[h]
\centering
\begin{subfigure}{0.45\textwidth}
\centering
\includegraphics[height=5.5cm]{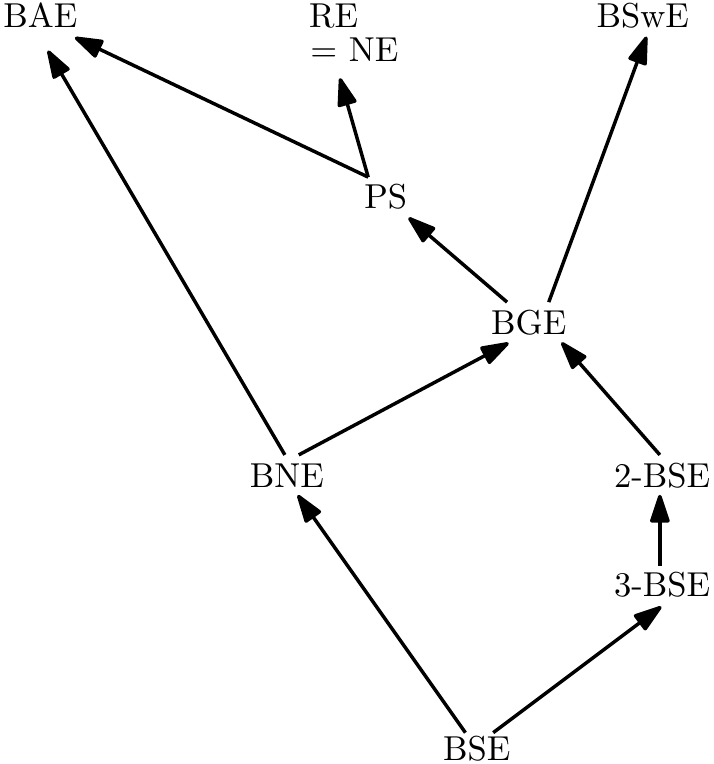}
\caption{The subset relationships between the considered solution concepts. Arrows point from subset to superset, all subset relations are proper.}
\label{fig:equilibrium_hierarchy}
\end{subfigure}
\hfill
\begin{subfigure}{0.45\textwidth}
\centering
\includegraphics[height=5cm]{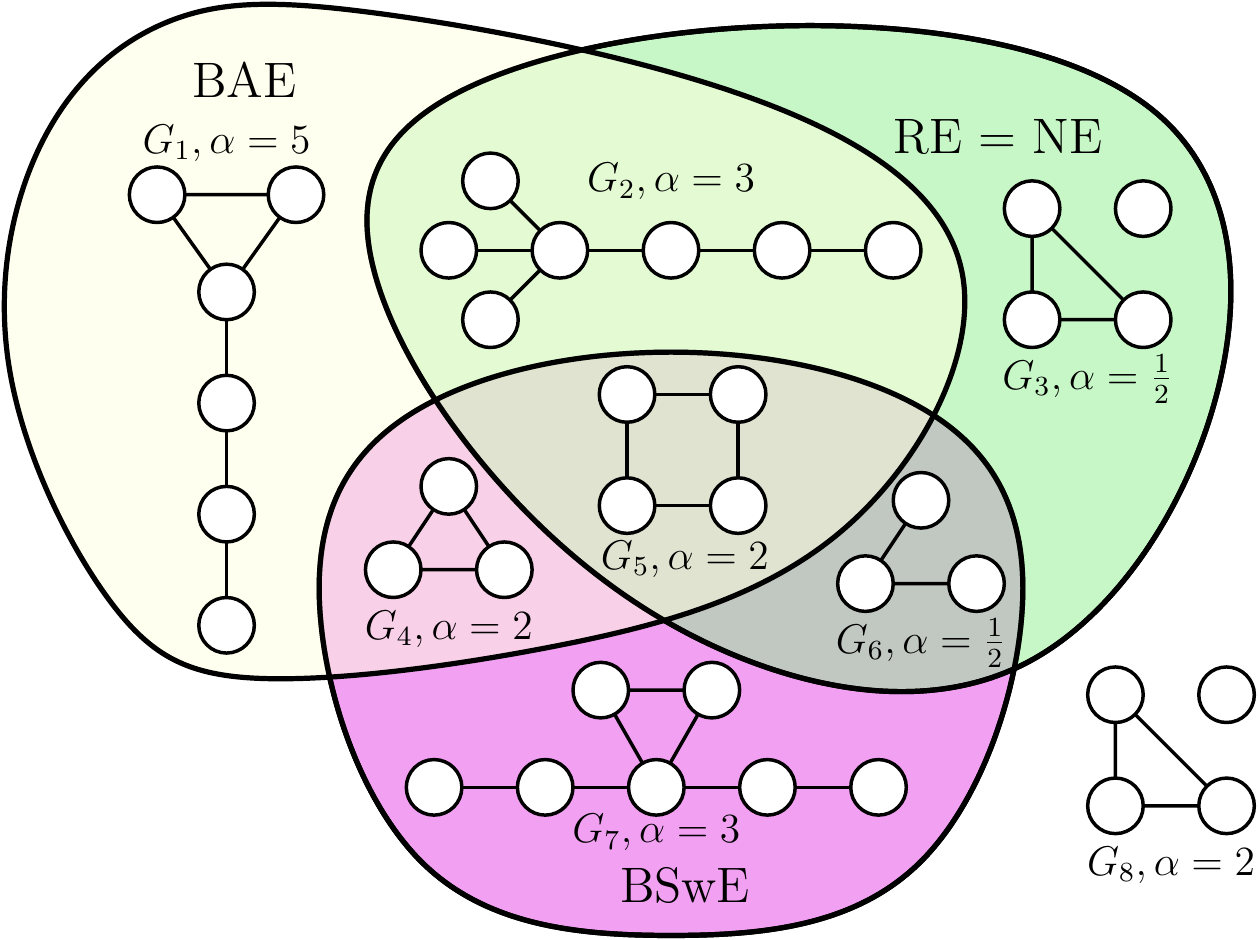}
\caption{Venn diagram showing the relation of RE, BAE, and BSwE. They are pairwise incomparable.}
\label{fig:base_equilibria_venn}
\end{subfigure}
\caption{Overview over the subset relationships between the solution concepts.}
\label{fig:equilibrium_hierarchy_venn}
\end{figure}

We find that the well-known Pairwise Stability~\cite{jackson1996strategic} is a superset of many of the solution concepts that we study, i.e., the BGE and BNE can be understood as stronger refinements of PS.

We compare the NCG and the BNCG and with regard to Remove Equilibria and Add Equilibria.
In the NCG, the corresponding definition of an Add Equilibrium considers that a single agent might add a single edge without any strategy changes of other agents.
For simplicity, we assume for the NCG that each edge of the graph~\(G\) is owned by exactly one incident agent. This allows us to model the \emph{edge assignment} as a function~\(f:E\to V\), where each edge is mapped to one of its incident nodes. Under these assumptions, a graph~\(G\) and edge assignment~\(f\) completely capture the strategy vector of the NCG.

\begin{restatable}{proposition}{thmone}
\label{theorem:uni_bi_add}
Let graph~\(G\) with edge assignment~\(f\) be in Add Equilibrium for the unilateral NCG, then~\(G\) is also in BAE in the BNCG. However, the reverse direction does not hold.
\end{restatable}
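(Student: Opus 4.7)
The plan is to handle the two directions of the proposition separately, using the unifying observation that a unilateral add in the NCG and a bilateral add in the BNCG produce the same resulting graph $G+uv$; only the distribution of the $\alpha$ payment differs between the two models.

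For the forward direction, I would proceed by contrapositive. Assume $G$ is not in BAE, so there exist agents $u,v \in V$ with $uv \notin E_G$ such that both strictly benefit from adding $u$ to $S_v$ and $v$ to $S_u$. Since equilibria in the BNCG have no one-sided subscriptions (as noted in the model, giving a bijection between strategy vectors and graphs), we have $v \notin S_u$ and $u \notin S_v$ originally, so each of $u$ and $v$ incurs exactly $+\alpha$ in additional buying cost under this move. Thus the strict improvement of $v$ forces $\Dist[G]{v} - \Dist[G+uv]{v} > \alpha$. Now, in the NCG with assignment $f$, consider the unilateral move in which $v$ adds $u$ to $S_v$, extending $f$ to $f(uv) = v$ so that $v$ owns the new edge. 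The resulting graph is again $G+uv$, so $v$'s distance savings are identical, and her buying cost again rises by exactly $\alpha$. Hence $v$ strictly improves in the NCG as well, contradicting the assumption that $(G,f)$ is in Add Equilibrium.

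For the reverse direction, I would exhibit a small counterexample exploiting asymmetric subtree sizes, since an edge is typically much more valuable to the endpoint whose side of the edge has the smaller subtree. Concretely, let $G$ be the 5-node tree with edges $\{uv, vw, wc_1, wc_2\}$, set $\alpha = 2$, and let $f$ be any edge assignment compatible with $G$. I would verify BAE by enumerating the five non-edges: for $uw$, the distance savings are $3$ for $u$ (one each to $w, c_1, c_2$) but only $1$ for $w$, so $w$ does not strictly improve; for $uc_i$, both endpoints save exactly $2 = \alpha$, giving no strict improvement; for $vc_i$ and for $c_1c_2$, at least one endpoint saves only $1 < \alpha$. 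Meanwhile, in the NCG, agent $u$ can unilaterally add the edge $uw$ for a buying-cost increase of $\alpha = 2$ and a distance saving of $3$, a strict improvement. Hence $(G,f)$ is not in Add Equilibrium, while $G$ is in BAE.

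I expect the reverse direction to be the main obstacle — not because the counterexample is hard to construct, but because verifying BAE requires care at the boundary $uc_i$, where the savings tie $\alpha$ exactly on both sides; strictness of the BAE improvement condition is precisely what saves the example. The forward direction, by contrast, is a clean one-line reduction once one notices that the paying agent's net cost change is identical in both the unilateral and the bilateral move.
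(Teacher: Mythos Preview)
Your argument is correct in both directions. The forward direction is exactly the clean reduction the paper has in mind (the paper simply cites the corresponding step in Corbo and Parkes), so there is nothing to add there beyond the minor remark that you need not appeal to ``equilibria have no one-sided subscriptions'': the paper's standing convention already identifies strategy vectors with graphs, so \(uv\notin E_G\) automatically gives \(u\notin S_v\) and \(v\notin S_u\).

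For the reverse direction you take a genuinely different route from the paper. The paper's counterexample is a considerably larger graph (several dozen nodes, partitioned into node classes \(V_A,V_B,V_E\) together with two special nodes \(c,d\)) for which the BAE check is carried out class by class, and the unilateral deviation is \(a_1\) buying \(a_1d\). Your five-node path-plus-fork at \(\alpha=2\) is more economical and the verification is a short finite enumeration; the crucial observation you flag---that the non-edges \(uc_i\) yield distance savings exactly \(\alpha\) on both sides and hence fail the strict-improvement test---is precisely what makes the small example work. Both approaches buy the same conclusion, but yours is self-contained and does not require a figure; the paper's larger example, by contrast, is more robust in the sense that it avoids the tie at \(\alpha\) and would survive under a weak-improvement variant of BAE.
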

As expected, there is no difference regarding Remove Equilibria.
\begin{restatable}{proposition}{thmtwo}
A graph~\(G\) is in Remove Equilibrium in the BNCG exactly if it is in Remove Equilibrium in the unilateral NCG for every edge assignment.
\end{restatable}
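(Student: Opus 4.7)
The plan is to exploit the fact that the cost change experienced by agent $u$ when a single incident edge $uv$ is removed is identical in both games: $u$'s buying cost drops by exactly $\alpha$, and her distance cost changes by exactly $\Dist[G-uv]{u} - \Dist[G]{u}$, since the underlying graph is the same. Thus removing $v$ from $S_u$ in the BNCG (so that the edge $uv$ disappears) and $u$ giving up ownership of edge $uv$ in the NCG (when $f(uv) = u$) produce precisely the same change in $\Cost{u}$. Under the standing assumption that equilibrium strategy vectors carry no inefficiencies (so that $v \in S_u$ iff $u \in S_v$ iff $uv \in E$, giving the bijection between strategy vectors and graphs), removing $v$ from $S_u$ in BNCG is an actual edge removal.

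For the forward direction, I would fix an arbitrary edge assignment $f$ for the NCG and consider any agent $u$ and any edge $uv$ with $f(uv) = u$. Removing this edge is the only type of single-edge removal available to $u$ in the NCG. By the observation above, the cost change for $u$ coincides with her cost change in the BNCG when she drops $v$ from $S_u$, which is non-negative because $G$ is in RE in the BNCG. Hence no agent can strictly improve in the NCG under $f$, and since $f$ was arbitrary, $(G,f)$ is in RE for every edge assignment.

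For the backward direction, I would argue contrapositively (or directly): suppose $u$ could improve in BNCG by deleting some $v \in S_u$. Since we are in the bijective regime, $uv \in E$. Choose an edge assignment $f$ with $f(uv) = u$; such an assignment exists because each edge just needs one of its endpoints as owner. Then in the NCG with this $f$, $u$ can remove $uv$ and experiences exactly the same (strictly negative) cost change, contradicting $(G,f)$ being in RE in the NCG. Equivalently, one shows directly that the inequality $\Dist[G-uv]{u} - \Dist[G]{u} \geq \alpha$ transfers between the two settings.

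I expect the proof to be essentially definitional, with no substantial obstacle; the only care needed is in invoking the bijection so that ``removing $v$ from $S_u$'' genuinely corresponds to deleting edge $uv$, and in noting that for the backward direction we may freely choose the edge assignment so that $u$ owns whichever incident edge we wish to examine.
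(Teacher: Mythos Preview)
Your proposal is correct and follows essentially the same approach as the paper: both directions hinge on the observation that removing an incident edge yields identical changes to buying and distance cost in the two games, with the backward direction handled contrapositively by choosing an assignment that makes $u$ the owner of the offending edge. Your extra care in invoking the bijection between strategy vectors and graphs is a minor elaboration, not a departure.
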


This brings us to refuting the conjecture by Corbo and Parkes~\cite{corbo2005price}, which states that every graph~\(G\) in NE in the NCG is also pairwise stable in the BNCG.
For this, we consider a graph in NE which is not in unilateral Remove Equilibrium for a different edge assignment.

\begin{proposition}
\label{theorem:corbo_conjecture_false}
There exists a graph~\(G\) and edge assignment~\(f\) such that the graph with this assignment is in NE in the unilateral NCG but~\(G\) is not pairwise stable in the BNCG.
\end{proposition}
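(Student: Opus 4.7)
The plan is to exhibit a concrete small graph~$G$ with an edge assignment~$f$ such that~$(G,f)$ is a NE in the unilateral NCG, and then invoke the preceding proposition about BNCG Remove Equilibria: a graph is in RE in the BNCG iff it is in RE in the unilateral NCG under \emph{every} edge assignment. A single reassignment~$f'$ of one edge of~$G$ that breaks NCG Remove Equilibrium will then suffice to break BNCG pairwise stability, without any BNCG-specific incentive analysis.

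The construction exploits an edge whose removal hurts one endpoint much more than the other. Concretely, I take~$G$ on five nodes~$\{u, v_1, v_2, a_1, b_1\}$ with edge set~$\{uv_1, uv_2, v_1v_2, v_1a_1, v_2b_1\}$: the node~$u$ sits between two ``hubs''~$v_1$ and~$v_2$ that are also joined to each other, each hub carrying one pendant leaf. Deleting~$uv_1$ forces~$u$ to detour through~$v_2$ to reach both~$v_1$ and~$a_1$, growing~$u$'s total distance by~$2$, while~$v_1$ only loses direct access to~$u$ and grows its distance by~$1$. I assign~$f$ so that~$u$ owns~$uv_1$ and~$uv_2$, the node~$v_1$ owns~$v_1v_2$ and~$v_1a_1$, and~$v_2$ owns~$v_2b_1$; and I set~$\alpha = 2$.

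The main technical step is verifying NE of~$(G,f)$, which reduces to a finite case analysis per agent. For~$u$, comparing all~$2^4$ strategies shows that several, including the current one, tie the minimum cost of~$10$; removing~$uv_1$ breaks even because the~$2$ extra units of distance exactly cancel the~$\alpha = 2$ savings. For the hubs, removing any owned edge either disconnects a pendant leaf (prohibitive by the disconnection penalty) or at best ties the current cost, and adding extra edges strictly worsens. For the leaves, the only interesting deviations are adding a shortcut such as~$a_1v_2$ or~$a_1b_1$, each of which saves exactly~$\alpha$ in total distance, hence again only ties. The knife-edge choice~$\alpha = 2$ is the subtle point: it makes every candidate deviation non-strict, which suffices for NE since defection requires strict improvement. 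Finding parameters where all these constraints are simultaneously compatible is the main obstacle --- at any other~$\alpha$ either a leaf profitably shortcuts or~$u$ profitably removes~$uv_1$.

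With NE established, the conclusion is immediate. Consider the assignment~$f'$ that agrees with~$f$ except that~$v_1$, not~$u$, owns~$uv_1$. Under~$f'$ in the NCG, $v_1$ would save~$\alpha = 2$ in buying cost but lose only~$1$ unit of distance by removing~$uv_1$, a strict improvement. Hence~$(G,f')$ is not in NCG Remove Equilibrium, and the cited proposition forces~$G$ to fail BNCG Remove Equilibrium, hence pairwise stability.
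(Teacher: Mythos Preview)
Your proof is correct and follows the same template as the paper: exhibit a concrete small graph with an edge assignment that is in NE for the NCG, then show that one endpoint of some edge strictly benefits from its removal when both sides pay. The paper's proof is terse (it relies on a figure and simply asserts the NE property), whereas you spell out the full NE verification and route the pairwise-stability failure through the RE characterization proposition rather than arguing directly in the BNCG; your example is also built around a cleanly isolated asymmetry (removing $uv_1$ costs $u$ two distance units but $v_1$ only one) at the knife-edge $\alpha=2$. These are cosmetic differences --- the underlying idea is the same.
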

\begin{proof}
\Cref{fig:example_graph_uni_nash_eq_not_bilateral_pair_eq} shows such a graph~\(G\). 
\begin{figure}[h!]
  \centering
  \includegraphics[width=0.25\textwidth]{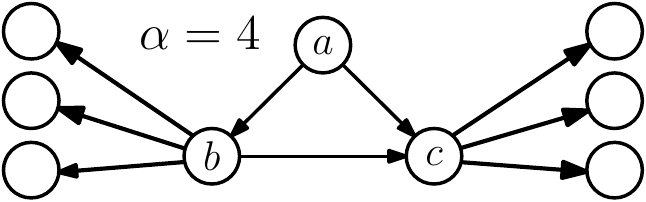}
\caption{Example of a graph that is in NE in the NCG but not PS in the BNCG. The edge owner assignment for the NCG is depicted by edges pointing away from their owner.}
\label{fig:example_graph_uni_nash_eq_not_bilateral_pair_eq}
\end{figure}
While the graph is in NE in the unilateral NCG, it is not pairwise stable in the BNCG because agent~\(b\) profits from removing the edge~\(ba\).
\end{proof}

Another contrast between unilateral and bilateral equilibria is the existence of non-tree equilibrium networks for high $\alpha$.
Let~\(C_n\) denote a cycle of~\(n\) nodes.
\cite{corbo2005price} already showed that for any~\(n \geq 3\), there is a range of~\(\alpha\in\Theta(n^2)\) for which~\(C_n\) is pairwise stable. We show that these~\(\alpha\) ranges can be refined further, so that they even apply for BSE. 
This implies that, contrary to the unilateral NCG, no tree conjecture is possible in the BNCG, as there can be non-tree equilibria for~\(\alpha\in\Theta(n^2)\).

\begin{restatable}{lemma}{lemmafournew}
\label{lemma:cycle_is_strong_eq}
Let~\(n\in\N_{\geq 3}\), then the cycle~\(C_n\) is in BSE for some range of~\(\alpha\in\Theta(n^2)\).
\end{restatable}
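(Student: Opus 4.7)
The plan is to pin down an explicit range $[\alpha_1, \alpha_2]$ with both endpoints of order $\Theta(n^2)$ such that $C_n$ is in BSE for every $\alpha$ in the range. Write $D_n := \lfloor n^2/4 \rfloor$ for the distance sum of any node in $C_n$; by vertex-transitivity, every agent has cost exactly $2\alpha + D_n$ in $C_n$. A direct calculation shows that removing any single edge of $C_n$ increases the distance cost of each of its two endpoints by exactly $\Delta_r := D_n - \lfloor n/2 \rfloor$. I therefore set $\alpha_2 := \Delta_r$ to rule out individual single-edge removals, and $\alpha_1 := D_n - 2n + 5$ (to be justified by the degree-bound below). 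Both are $\Theta(n^2)$ for all sufficiently large $n$; the few small values of $n$ can be verified by direct enumeration.

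Fix $\alpha \in [\alpha_1, \alpha_2]$ and suppose, for contradiction, that a coalition $\Gamma$ has an improving move yielding graph $G'$. If $G'$ is disconnected, every $u \in \Gamma$ incurs distance cost $\geq M \gg \alpha n^3 + nD_n$, contradicting improvement, so $G'$ is connected. The heart of the argument is the elementary bound $\Dist[G']{u} \geq 2(n-1) - \deg_{G'}(u)$ (each neighbor contributes at least $1$ and every other node at least $2$). Combined with $\buy[G']{u} = \alpha\,\deg_{G'}(u)$ (rational strategies only---any non-rational choice only increases cost), the improvement condition $\Cost[G']{u} < 2\alpha + D_n$ simplifies to
\[(\deg_{G'}(u) - 2)(\alpha - 1) < D_n - 2n + 4.\]
For $\alpha \geq \alpha_1$ this forces $\deg_{G'}(u) \leq 2$ for every $u \in \Gamma$. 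Agents outside $\Gamma$ cannot enlarge their strategies, so their $G'$-neighborhoods are subsets of their (at most two) cycle-neighborhoods and also satisfy $\deg \leq 2$. Hence $G'$ has maximum degree $2$; being connected, it must be either a Hamiltonian cycle or a Hamiltonian path on $n$ vertices.

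If $G'$ is a Hamiltonian cycle, it is isomorphic to $C_n$, so by vertex-transitivity every node again has cost $2\alpha + D_n$---no strict improvement. If $G'$ is a Hamiltonian path, then its internal (degree-$2$) nodes have unchanged buying cost and distance sum $\binom{i}{2} + \binom{n-i+1}{2} \geq D_n$ (with equality only at the median positions), so they cannot strictly improve either. The two endpoints save $\alpha$ in buying cost but pay an extra $\Delta_r$ in distance cost, which is a strict improvement iff $\alpha > \Delta_r = \alpha_2$, contradicting $\alpha \leq \alpha_2$. No member of $\Gamma$ can therefore strictly improve, contradicting the assumption; thus $C_n$ is in BSE throughout $[\alpha_1, \alpha_2] \subset \Theta(n^2)$. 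The step I expect to be the main obstacle is controlling the equality cases in the Hamiltonian-path analysis: since the internal-path distance bound is attained with equality at the median, one has to ensure that no clever rewiring by $\Gamma$ exploits this to move all coalition members to the median simultaneously. This is resolved implicitly, because any connected max-degree-$2$ rewiring that is not a path is itself a cycle, which falls under the previous case.
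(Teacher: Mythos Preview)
Your proof is correct and follows the same overall strategy as the paper: bound the degree of every coalition member in $G'$ by~$2$, deduce that $G'$ is either a Hamiltonian cycle or a Hamiltonian path, and rule out strict improvement in each case. Two minor differences are worth noting. First, you use the sharper lower bound $\Dist[G']{u}\ge 2(n-1)-\deg_{G'}(u)$, which gives the wider threshold $\alpha_1=D_n-2n+5$; the paper uses the cruder $\Dist[G']{u}\ge n-1$, yielding $\alpha>D_n-(n-1)$. Second, in the path case the paper argues that some endpoint of the path must lie in $\Gamma$ and then invokes RE, whereas you compute the cost at every path position and show that no node at all strictly improves. Your version is actually the more robust one: a coalition can remove two cycle edges and insert one new edge to produce a path whose endpoints both lie outside $\Gamma$ (e.g.\ $\Gamma=\{2,5\}$ on $C_5$ deleting $\{1,2\},\{4,5\}$ and adding $\{2,5\}$), so the paper's endpoint claim needs the additional observation---implicit in your argument---that internal path nodes never strictly improve either.
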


\section{Cooperation and the Price of Anarchy}
We present our main results on the impact of the degree of cooperation on the PoA. We start with preliminaries, then analyze the PoA of trees before we investigate general networks. 
\subsection{Preliminaries}
\label{sec:preliminaries}
The social optima of the BNCG have been identified by Corbo and Parkes~\cite{corbo2005price}.
For~\(\alpha< 1\), the clique is the only social optimum and has a cost of~\(\Cost{OPT} = n(n-1)(1+\alpha)\). However, the case with $\alpha < 1$ is of little interest, since it is always beneficial to buy an edge. In the following, we assume~\(\alpha\geq 1\), unless explicitly noted otherwise.

For~\(\alpha\geq 1\), the star is a social optimum, and for~\(\alpha > 1\), it is the only social optimum. There are~\(n-1\) edges, so the total buying cost is~\(2(n-1)\alpha\). The total distance cost among the outer nodes is~\(2(n-1)(n-2)\) while the total distance from and to the center node is~\(2(n-1)\). Hence,
\[\Cost{OPT} = 2(n-1)\alpha + 2(n-1)(n-2) + 2(n-1) = 2(n-1)(\alpha + n-1).\]

Based on a similar proof by Albers, Eilts, Even-Dar, Mansour, and Roditty~\cite{albers2014nash}, we can upper bound the PoA by analyzing the distance cost of a single node. This is very powerful as this implies that it is sufficient to show a small distance cost for a single node, while the buying cost automatically follows suit.

\begin{restatable}{proposition}{theoremfour}
\label{lemma:dist_u_bounds_poa_in_eqRemove}
Let~\(G\) be in RE and connected, then for any node~\(u\in V\) the PoA can be upper bounded by
\(\poa{G} \leq \frac{\alpha + \Dist{u}}{\alpha + n-1}.\qedhere\)
\end{restatable}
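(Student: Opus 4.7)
The plan is to establish $\Cost{G} \leq 2(n-1)(\alpha + \Dist{u})$, which by the formula $\Cost{\opt} = 2(n-1)(\alpha + n-1)$ from the preliminaries yields the claimed bound on $\poa{G}$. I would decompose $\Cost{G} = \buy{G} + \Dist{G}$ and bound each piece in terms of $\Dist{u}$.

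For the distance cost, I apply the triangle inequality $\dist[G]{v}{x} \leq \dist[G]{v}{u} + \dist[G]{u}{x}$ to each pair of distinct nodes $v \neq x$, obtaining $\Dist{v} \leq (n-2)\dist[G]{v}{u} + \Dist{u}$ for every $v$; summing over $v$ yields $\Dist{G} \leq 2(n-1)\Dist{u}$. This step uses only connectivity, not the RE assumption.

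For the buying cost, I first note that RE forces $v \in S_u$ iff $u \in S_v$: if some node paid for a one-sided target, removing it would save $\alpha$ at no distance cost, contradicting RE. Hence $\buy{G} = 2\alpha\card{E}$. If $G$ is a tree, then $\card{E} = n-1$, so $\buy{G} = 2(n-1)\alpha$ and combining with the distance bound immediately gives $\Cost{G} \leq 2(n-1)(\alpha + \Dist{u})$. For general connected $G$ in RE, I would reduce to the tree case by proving $\Cost{G} \leq \Cost{T_u}$, where $T_u$ is a BFS spanning tree of $G$ rooted at $u$; since $T_u$ preserves the shortest-path distances from $u$, it has the same $\Dist{u}$ value and inherits the same triangle-inequality distance bound. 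Each extra edge $e = vw \in E_G \setminus E_{T_u}$ adds $2\alpha$ to $\buy{G}$, but by RE applied to $e$ in $G$ we have $\Dist[G-e]{v} - \Dist{v} \geq \alpha$ and analogously for $w$, so the single-edge removal of $e$ yields at least $2\alpha$ in distance savings; intuitively these should exactly pay for the extra buying cost.

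The main obstacle is that RE only controls single-edge removals from $G$ itself, so aggregating the per-edge savings across the entire set $E_G \setminus E_{T_u}$ cannot be done by a naive iterative removal because the intermediate subgraphs are no longer in RE. I expect this step to require either a careful pairwise-distance comparison between $G$ and $T_u$ that charges each $2\alpha$ of savings to a distinct pair of nodes affected by a specific non-tree edge, or a more global argument exploiting the BFS structure to decouple the contributions of different non-tree edges and bound $\Dist{T_u} - \Dist{G}$ from below by $2\alpha \card{E_G \setminus E_{T_u}}$.
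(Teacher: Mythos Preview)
Your overall architecture matches the paper: prove $\Cost{G}\leq 2(n-1)(\alpha+\Dist{u})$ and divide by $\Cost{\opt}=2(n-1)(\alpha+n-1)$. The triangle-inequality bound on the total distance cost is also the same computation the paper performs (the paper does it in the BFS tree $T$, where $\Dist[T]{u}=\Dist[G]{u}$, but the algebra is identical).

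The genuine gap is exactly where you flag it: the reduction from a general connected RE graph to a spanning tree. Your per-edge accounting does not close, and the ``careful pairwise charging'' you hope for is not how the paper proceeds. The paper instead makes a \emph{per-node} comparison. It first invokes the earlier proposition that RE coincides with NE in the BNCG, so a single agent may remove \emph{all} of her incident non-tree edges in one move without improving. For each $v\in V$, after this multi-edge removal the resulting graph still contains the entire BFS tree $T$ (only edges incident to $v$ were dropped), hence $\Dist{v}$ in the modified graph is at most $\Dist[T]{v}$, and $v$'s buying cost there equals $\buy[T]{v}$. Since $v$ did not improve, $\Cost[G]{v}\leq\Cost[T]{v}$. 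Summing over all $v$ gives $\Cost{G}\leq\Cost{T}$ in one line, with no iterative removal and no need to track interactions between different non-tree edges.

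So the missing idea is: use RE$=$NE to let each agent drop her entire non-tree neighborhood at once, and compare node by node to the BFS tree rather than edge by edge.
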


Since \(\Dist{u} < n^2\) trivially holds in all connected graphs, we get the following corollary which implies a constant PoA for~\(\alpha\in \Omega(n^2)\).

\begin{restatable}{corollary}{corollaryfive}
\label{lemma:trivial_poa_upper_bound}
Let~\(G\) be a in RE and connected, then~\(\poa{G} \leq 1 + \frac{n^2}{\alpha}\).
\end{restatable}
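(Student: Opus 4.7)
The plan is to derive this directly as a consequence of \Cref{lemma:dist_u_bounds_poa_in_eqRemove} combined with a trivial diameter bound. Since \(G\) is connected on \(n\) nodes, the distance between any pair of nodes is at most \(n-1\), so for every node \(u\in V\) we have
\[
\Dist{u} \;=\; \sum_{v\in V}\dist[G]{u}{v} \;\leq\; (n-1)\cdot(n-1) \;<\; n^2.
\]
Plugging this into the bound from \Cref{lemma:dist_u_bounds_poa_in_eqRemove}, applied to any node \(u\), yields
\[
\poa{G} \;\leq\; \frac{\alpha + \Dist{u}}{\alpha + n-1} \;<\; \frac{\alpha + n^2}{\alpha + n-1} \;\leq\; \frac{\alpha + n^2}{\alpha} \;=\; 1 + \frac{n^2}{\alpha},
\]
which gives exactly the claimed inequality. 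The only place where the RE assumption and connectedness are actually used is to invoke \Cref{lemma:dist_u_bounds_poa_in_eqRemove}; the rest is the trivial fact that in any connected \(n\)-node graph a single node's total distance is at most \((n-1)^2\).

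There is no real obstacle here: the statement is essentially a packaged corollary intended to make the regime \(\alpha\in\Omega(n^2)\) visible at a glance, so the only thing worth being careful about is making sure the final step \(\frac{\alpha+n^2}{\alpha+n-1}\leq \frac{\alpha+n^2}{\alpha}\) is valid (it is, since \(n-1\geq 0\) makes the denominator on the left at least \(\alpha\)). Thus the whole argument collapses to two lines once \Cref{lemma:dist_u_bounds_poa_in_eqRemove} is in hand.
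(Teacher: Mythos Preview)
Your proof is correct and essentially identical to the paper's own argument: bound \(\Dist{u} < n^2\) trivially and plug into \Cref{lemma:dist_u_bounds_poa_in_eqRemove}, then drop the \(n-1\) from the denominator. There is nothing to add.
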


\subsection{The Price of Anarchy for Tree Networks}
\label{chapter:tree_equilibria}
In the following, we always root the tree~\(G\) at a node~\(r\in V\). Based on this root~\(r\), the \emph{layer} of a node~\(u\in V\) is~\(\layer{u} = \dist{r}{u}\). Each edge~\(uv\in E\) connects two nodes of adjacent layers, i.e., it holds that~\(\card{\layer{u} - \layer{v}} = 1\),
and if~\(\layer{u} = \layer{v} - 1\), we say that~\(u\) is the \emph{parent} of~\(v\) and~\(v\) is the \emph{child} of~\(u\). Each node has exactly one parent, the root~\(r\) has none. The \emph{depth} of a tree is~\(\depth{G} = \max_{v\in V}\layer{v}\), i.e., the maximum distance between the root~\(r\) and any other node.
For~\(u\in V\), let $T_u$ denote the subtree rooted at~\(u\) containing~\(u\) and all of its descendants, i.e., exactly the set of nodes~\(v\in V\) for which the unique~\(v\)-\(r\)-path in the tree contains node~\(u\). In particular, the subtree~\(T_r\) is the original graph~\(G\).
We abuse notation by treating $T_u$ like a node set, so we will write~\(v\in T_u\) when referring to a node in the subtree. We write~\(\depth{T_u}\) when referring to~\(\maxset{\dist{u}{v}\mid v\in T_u}\).

Another important concept for our proofs is the \emph{1-median}~\cite{kariv1979algorithmic} (referred to as center node in \cite{fabrikant2003network}).
A 1-median in a tree is a node~\(u\in V\) with the lowest distance cost. Equivalently, a 1-median can also be defined as a node whose removal from a tree with $n$ nodes will create connected components of size at most $\frac{n}{2}$.
Each tree has exactly one or two 1-medians. When referring to the 1-median of a subtree~\(T\), we also call it a~\emph{\(T\)-1-median}.
In the following proofs, we always consider~\(G\) to be rooted at a 1-median~\(r\in V\). So, for any~\(u\in V\) with~\(u\neq r\), it holds that~\(\card{T_u}\leq\frac{n}{2}\). In particular,
this implies for each non-root~\(u\in V\) that at least~\(\frac{n}{2}\) shortest paths contain~\(r\). Thus, getting closer to~\(r\) can lead to large cost reductions for the nodes. Many of our proofs are based on this.

\subsubsection{Bilateral Swap Equilibrium on Trees}
\label{sec:swap_on_tree}
We upper bound the PoA for trees in BSwE by~\(\Theta(\log \alpha)\). The proof for the asymptotically tight lower bound is presented in~\Cref{sec:greedy_on_tree}. This result shares some parallels with the results for unilateral Asymmetric Swap Equilibria~\cite{mihalak2012asymmetric, ehsani2015bounded}, where it is shown that they have a diameter in~\(\mathcal{O}(\log n)\) and that there is an edge assignment under which a complete binary tree is stable.

A key insight for our upper bound is that subtrees quickly fan out into small subtrees.
\begin{restatable}{lemma}{lemmasix}
\label{lemma:close_median}
If the tree~\(G\) with root and 1-median~\(r\) is in BSwE, then for~\(u\in V\) there is a~\(T_u\)-1-median~\(v\in T_u\) with
\(\layer{v} \leq \layer{u} + \frac{2\alpha}{n}.\qedhere\)
\end{restatable}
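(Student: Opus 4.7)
The plan is to find, whenever $u$ is far from the nearest $T_u$-1-median, an explicit cooperative edge swap that would be profitable for both participants, and then rule it out by the BSwE hypothesis. If $u$ itself is a 1-median of $T_u$, set $v = u$ and we are done. Otherwise, pick any $T_u$-1-median $v \in T_u$ with $v \neq u$, let $c$ be the child of $u$ in $T_u$ lying on the unique $u$--$v$-path, and set $d := \layer{v} - \layer{u} = \dist{u}{v}$.

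The swap I would analyze is: $u$ drops the edge $uc$ and, in cooperation with $v$, adds the edge $uv$. Since $v \in T_c$, the subtree $T_c$ remains attached to the rest of the tree via the new edge $uv$, so the graph stays connected --- $T_c$ is merely re-hung from $v$ instead of $c$. The cost bookkeeping decouples along $T_c$ and $V \setminus T_c$: for $u$, only distances to nodes of $T_c$ change, and the total change equals
$\Delta := \sum_{x \in T_c}\bigl(\dist[T_c]{v}{x} - \dist[T_c]{c}{x}\bigr)$;
for $v$, distances inside $T_c$ are unchanged, while every distance to a node in $V \setminus T_c$ shrinks by exactly $d-1$ (the path $v \to c \to u \to \cdots$ is shortened to $v \to u \to \cdots$), and $v$ pays one extra $\alpha$ for the new edge.

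The key step is to exploit that $v$ minimizes total distance in $T_u$. By the same bookkeeping applied to $v$ versus $c$,
\[ \dist{v}{T_u} - \dist{c}{T_u} \;=\; \Delta + (d-1)(|T_u|-|T_c|) \;\leq\; 0, \]
so $\Delta \leq -(d-1)(|T_u|-|T_c|)$; in particular, for $d \geq 2$ this is strictly negative and $u$ strictly benefits. The BSwE hypothesis then forces $v$ not to strictly benefit from this swap, giving $\alpha \geq (d-1)(n - |T_c|)$. Since $r$ is a $G$-1-median and $u \neq r$, we have $|T_u| \leq n/2$, hence $|T_c| \leq n/2 - 1$ and $n - |T_c| > n/2$, which rearranges to $d - 1 < \tfrac{2\alpha}{n}$, essentially yielding the claim.

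The hardest step I anticipate is the degenerate case $d = 1$, where $v = c$ and the swap above collapses. In that case I would invoke a complementary swap initiated by $v$: $v$ drops its edge to $u$ and adds an edge to $u$'s parent $p$ (which exists because $u \neq r$). A direct cost computation --- using that $v$ is a $T_u$-1-median with $u$ not one, forcing $|T_v| > |T_u|/2$, together with $|T_u| \leq n/2$ --- shows that $v$ always strictly benefits from this swap, so the BSwE hypothesis has to do the work on $p$'s side, squeezing out the required lower bound on $\alpha$ and giving the inequality in this remaining case.
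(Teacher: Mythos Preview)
Your approach is close to the paper's in spirit---find a cooperative swap that benefits the $T_u$-1-median $v$ and rule it out via BSwE---but you swap one level too low, and this costs you exactly the claimed constant.

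The paper's swap is around the \emph{parent} $p$ of $u$: agent $p$ drops $pu$ and adds $pv$. Then $p$ strictly benefits (her total distance into $T_u$ drops because $v$ is a $T_u$-1-median while $u$ is not, and her buying cost is unchanged), and $v$'s distance to every node in $V\setminus T_u$ shrinks by exactly $d=\layer{v}-\layer{u}$. Since $|V\setminus T_u|\geq n/2$, the BSwE hypothesis forces $\alpha\geq d\cdot\tfrac{n}{2}$, i.e.\ $d\leq\tfrac{2\alpha}{n}$, with no separate treatment of $d=1$ needed.

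Your swap around $u$ (dropping $uc$, adding $uv$) only shortens $v$'s outgoing paths by $d-1$, so the same bookkeeping yields $d-1<\tfrac{2\alpha}{n}$, which is off by one from the statement. More seriously, your patch for $d=1$ does not close the gap: in the swap where $v$ drops $vu$ and adds $vp$, the condition that $p$ not strictly benefit is merely $\alpha\geq |T_v|$. When $|T_u|$ is small---say $|T_u|=10$, $n=1000$, $\alpha=100$---this is easily satisfied, yet $\tfrac{2\alpha}{n}=0.2<1$, so the lemma requires $u$ itself to be a $T_u$-1-median. Your argument produces no contradiction here, whereas the paper's swap around $p$ does (since $v$'s gain would be $n-|T_u|=990>\alpha$).

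The fix is simply to move the pivot up one level: swap $pu$ for $pv$ instead of $uc$ for $uv$.
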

%\begin{comment}
\begin{proof}
If there are two~\(T_u\)-1-medians~\(v_1, v_2\in T_u\), then we choose the one closer to \(u\), i.e., we choose~\(v\) with~\(\layer{v} = \minset{\layer{v_1}, \layer{v_2}}\). If~\(u=r\), we have~\(u=v\) and the claim holds.

Otherwise, there is a parent~\(p\in V\) of~\(u\). If~\(u\) is not a 1-median of~\(T_u\), then~\(p\) prefers swapping~\(pu\) for~\(pv\) since that would reduce~\(\dist{p}{T_u}\). 
For agent~\(v\), accepting the proposed swap decreases her distance to~\(r\) by~\(\dist{v}{p} - 1 = \layer{v} - \layer{u}\). By definition of the 1-median, the path from~\(v\) to at least~\(\frac{n}{2}\) nodes must contain~\(r\), hence we get~\((\layer{v} - \layer{u})\frac{n}{2} \leq \alpha\) as~\(G\) is in BSwE. Rearranging this inequality concludes the proof.
\end{proof}
%\end{comment}

This allows us to obtain a bound on the depth of subtrees.
\begin{restatable}{lemma}{lemmaseven}
\label{lemma:tree_depth_in_swap_eq}
If the tree~\(G\) with root and 1-median~\(r\) is in BSwE, then for~\(u \in V\) we have that 
\(\depth{T_u} \leq \left(1+\frac{2\alpha}{n}\right) \log \card{T_u}.\qedhere\)
\end{restatable}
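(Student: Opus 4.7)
The plan is to prove the bound by strong induction on $|T_u|$, with Lemma~\ref{lemma:close_median} providing the recursive step. The base case $|T_u|=1$ is immediate since $\depth{T_u}=0=\log 1$, so I will focus on the inductive step. Using Lemma~\ref{lemma:close_median}, I fix a $T_u$-1-median $v\in T_u$ with $\dist{u}{v}\leq \layer{v}-\layer{u}\leq 2\alpha/n$. I then let $w\in T_u$ be a node maximizing $\dist{u}{w}$ and do a case analysis on the position of $w$ relative to the unique $u$-to-$v$ path inside $T_u$.

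The main case is when $w$ is a strict descendant of $v$, so $w\in T_{c}$ for some child $c$ of $v$ in $G$. By the characterization of the 1-median, removing $v$ from $T_u$ leaves every component of size at most $|T_u|/2$, so in particular $|T_{c}|\leq |T_u|/2$. Writing $\dist{u}{w}=\dist{u}{v}+1+\dist{c}{w}$ and invoking the inductive hypothesis on $T_c$ gives
\[
\dist{u}{w}\;\leq\;\frac{2\alpha}{n}+1+\Bigl(1+\frac{2\alpha}{n}\Bigr)\bigl(\log|T_u|-1\bigr)\;=\;\Bigl(1+\frac{2\alpha}{n}\Bigr)\log|T_u|,
\]
which is exactly the claimed bound.

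The other cases are handled similarly. If $w=v$ or $w$ lies strictly on the $u$-to-$v$ path, then $\dist{u}{w}\leq 2\alpha/n\leq (1+2\alpha/n)\log|T_u|$ whenever $|T_u|\geq 2$. The case I expect to be the real obstacle is when $w$ lies in a \emph{side branch}: $w\in T_{c'}$ for a child $c'$ of some node $x$ on the $u$-to-$v$ path with $c'$ not itself on that path. The trick here is to observe that $T_{c'}$ is disjoint from $T_v$ and hence contained in the component of $T_u\setminus\{v\}$ that contains $u$; the 1-median property of $v$ then yields $|T_{c'}|\leq|T_u\setminus T_v|\leq |T_u|/2$, allowing induction to apply. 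Moreover, since $x$ is a strict ancestor of $v$ along the path, $\dist{u}{x}+1\leq\dist{u}{v}\leq 2\alpha/n$, so combining with the inductive bound on $\dist{c'}{w}$ gives
\[
\dist{u}{w}\;\leq\;\frac{2\alpha}{n}+\Bigl(1+\frac{2\alpha}{n}\Bigr)(\log|T_u|-1)\;\leq\;\Bigl(1+\frac{2\alpha}{n}\Bigr)\log|T_u|.
\]

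The technical care will be in choosing the right decomposition of $T_u$ based on $v$ so that every subtree on which I recurse is actually a subtree $T_x$ of $G$ (so the inductive hypothesis, stated only for such subtrees, applies) and simultaneously has size at most $|T_u|/2$. The side-branch case is precisely where these two requirements come into tension, and resolving it by invoking the 1-median bound on the complementary component $T_u\setminus T_v$ is the crux of the argument.
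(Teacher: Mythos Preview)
Your proposal is correct and follows the same strong-induction strategy as the paper, using Lemma~\ref{lemma:close_median} together with the 1-median property to recurse on subtrees of at most half the size. The paper's version simply collapses your three cases into one observation: every \(w\in T_u\) with \(\layer{w}\geq\layer{v}+1\) has \(|T_w|\leq|T_u|/2\), since \(T_w\) avoids \(v\) and hence lies entirely in a single component of \(T_u\setminus\{v\}\); this yields \(\depth{T_u}\leq(\layer{v}-\layer{u}+1)+\max_w\depth{T_w}\) directly.
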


This already allows us to upper bound the depth and diameter of~\(G\) by~\(\mathcal{O}(\frac{\alpha}{n}\log n)\), which implies a PoA in~\(\mathcal{O}(\log n)\). However, we can improve this bound further for cases where~\(\alpha\) is significantly smaller than~\(n\). The next lemma states that the tree fans out very quickly in the beginning, specifically, all subtrees rooted in layer~\(2\) contain at most~\(\alpha\) nodes.

\begin{restatable}{lemma}{lemmaeight}
\label{lemma:subtree_cardinality_in_add_eq}
If the tree~\(G\) with root and 1-median~\(r\) is in BSwE, then it holds for~\(u\in V\) with~\(\layer{u} \geq 2\) that~\(\card{T_u} \leq \frac{\alpha}{\layer{u} - 1}\).
\end{restatable}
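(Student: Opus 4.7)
Plan: I will contradict the BSwE condition whenever \(\card{T_u} > \alpha/(\ell(u) - 1)\) by exhibiting an improving swap, namely the one in which agent \(u\) replaces her parent \(p\) by the root \(r\) in \(S_u\) and \(r\) adds \(u\) to \(S_r\). Since \(\ell(u) \geq 2\), the edge \(ur\) does not exist prior to the move, so this is a well-defined swap. The argument then reduces to two distance-cost computations: one showing \(u\) strictly benefits regardless of the tree shape, and one showing \(r\) benefits precisely under the assumed size bound.

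Write \(\ell = \ell(u)\) and let the ancestors of \(u\) on the path to \(r\) be \(u = a_0, a_1 = p, a_2, \ldots, a_\ell = r\), with \(c = a_{\ell-1}\) her layer-\(1\) ancestor. After the swap, \(T_u\) hangs from \(r\) via the new edge \(ur\), while the induced subgraph on \(V \setminus T_u\) is unaffected. Hence distances from \(u\) to nodes inside \(T_u\) are preserved, and for \(v \notin T_u\) the new distance is \(1 + \ell(v)\). Partitioning such \(v\) by the layer of their LCA with \(u\): if \(\textrm{LCA}(u,v) = a_j\), then \(\dist[G]{u}{v} = 2j - \ell + \ell(v)\), so the per-node change in \(u\)'s distance cost equals \(1 + \ell - 2j\). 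Setting \(n_j = \card{T_{a_j}} - \card{T_{a_{j-1}}}\) and using \(1 + \ell - 2j \leq \ell - 1\) for all \(j \geq 1\) while the coefficient equals \(1 - \ell\) at \(j = \ell\), the total distance change for \(u\) satisfies
\[\Delta_u \;\leq\; (\ell - 1)(\card{T_c} - \card{T_u}) + (1 - \ell)(n - \card{T_c}) \;=\; (\ell - 1)(2\card{T_c} - n - \card{T_u}).\]
Since \(r\) is a \(1\)-median, \(\card{T_c} \leq n/2\), yielding \(\Delta_u \leq -(\ell - 1)\card{T_u} < 0\); so \(u\) strictly profits.

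For \(r\), the swap incurs \(\alpha\) for the new edge, but her distance to each node \(v \in T_u\) drops from \(\ell + \dist[G]{u}{v}\) to \(1 + \dist[G]{u}{v}\) while her distances to \(V \setminus T_u\) are unaffected; her net gain is thus \((\ell - 1)\card{T_u} - \alpha\), which is strictly positive exactly when \(\card{T_u} > \alpha/(\ell - 1)\). Under this assumption both \(u\) and \(r\) strictly benefit, contradicting BSwE; hence \(\card{T_u} \leq \alpha/(\ell(u) - 1)\). The main subtlety is controlling \(\Delta_u\): the per-node contributions \(1 + \ell - 2j\) are positive for small \(j\) and negative for large \(j\), so without structural input \(u\) might not profit; the \(1\)-median property \(\card{T_c} \leq n/2\) is precisely what dominates the positive contributions (coming from descendants of \(u\)'s low ancestors) and drives the total change negative.
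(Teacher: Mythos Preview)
Your proof is correct and follows the same approach as the paper: both exhibit the swap in which \(u\) replaces the parent edge \(up\) by \(ur\), argue that \(r\) strictly benefits precisely when \(|T_u| > \alpha/(\ell(u)-1)\), and use the \(1\)-median property \(|T_c|\le n/2\) to show that \(u\) also strictly benefits. Your LCA-layer decomposition yielding \(\Delta_u \le (\ell-1)(2|T_c|-n-|T_u|)\) is a somewhat finer accounting than the paper's cruder split into \(V\setminus T_a\) versus \(T_a\setminus T_u\), but the substance is identical.
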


Finally, we can now combine the above lemmas to get a~\(\mathcal{O}(\log\alpha)\) upper bound on the PoA.
This allows us to obtain a~\(\mathcal{O}(\log\alpha)\) upper bound on the PoA. 

\begin{restatable}{theorem}{theoremnine}
\label{theorem:poa_tree_swap_eq}
If the tree~\(G\) is in BSwE, then~\(\poa{G} \leq 2+2\log\alpha\).
\end{restatable}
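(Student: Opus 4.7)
The plan is to apply the general PoA bound from \Cref{lemma:dist_u_bounds_poa_in_eqRemove} at the 1-median root $r$ and to bound $\Dist{r}$ by combining the two preceding structural lemmas. Since any edge of a tree is a bridge, removing it disconnects the graph and triggers the large penalty $M$, so (under the paper's convention that equilibrium strategies have no wasted entries) any tree in BSwE is in particular also in RE. Hence \Cref{lemma:dist_u_bounds_poa_in_eqRemove} yields $\poa{G} \leq (\alpha + \Dist{r})/(\alpha + n - 1)$, and it suffices to upper bound $\Dist{r} = \sum_{v \in V}\layer{v}$ by roughly $(n-1)\bigl(2 + (1+\tfrac{2\alpha}{n})\log\alpha\bigr)$.

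To obtain this bound on $\Dist{r}$, I would prove a uniform layer bound for every non-root node. For any $v$ with $\layer{v} \geq 2$, let $u$ be its unique ancestor on layer $2$. Since $v \in T_u$, we have $\dist{u}{v} \leq \depth{T_u}$. \Cref{lemma:subtree_cardinality_in_add_eq} gives $\card{T_u} \leq \alpha$, and plugging this into \Cref{lemma:tree_depth_in_swap_eq} yields $\depth{T_u} \leq (1 + \tfrac{2\alpha}{n})\log\alpha$. Hence $\layer{v} = 2 + \dist{u}{v} \leq 2 + (1 + \tfrac{2\alpha}{n})\log\alpha$, and the same expression trivially dominates the contribution $\layer{v} = 1$ of layer-$1$ nodes. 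Summing over the $n-1$ non-root nodes then gives $\Dist{r} \leq (n-1)\bigl(2 + (1+\tfrac{2\alpha}{n})\log\alpha\bigr)$.

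The last step is pure arithmetic. Substituting the bound into $(\alpha + \Dist{r})/(\alpha + n - 1)$ and splitting the fraction gives
\[\poa{G} \leq \frac{\alpha + 2(n-1)}{\alpha + n - 1} + \frac{(n-1)(n + 2\alpha)}{n(\alpha + n - 1)}\log\alpha.\]
The first fraction is at most $2$ since $\alpha + 2(n-1) \leq 2(\alpha + n - 1)$, and the second fraction is at most $2$ since, after clearing denominators, $(n-1)(n+2\alpha) \leq 2n(\alpha+n-1)$ reduces to $n(n-1) + 2\alpha \geq 0$, which always holds. Combining the two yields the claimed bound $\poa{G} \leq 2 + 2\log\alpha$.

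There is no conceptual obstacle once the two preceding lemmas are in hand; the only delicate point is the arithmetic in the last step. Specifically, to obtain the prefactor $2$ (rather than $3$) in front of $\log\alpha$, it is crucial to collect the layer-$1$ and layer-$\geq 2$ contributions into a single sum over all $n-1$ non-root nodes, rather than bounding them separately, and then to observe that the two resulting fractions each admit a clean factor-$2$ bound, giving exactly $2 + 2\log\alpha$.
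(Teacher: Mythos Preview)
Your proposal is correct and follows essentially the same route as the paper: root at a 1-median, combine \Cref{lemma:subtree_cardinality_in_add_eq} and \Cref{lemma:tree_depth_in_swap_eq} to bound each node's layer by \(2+(1+\tfrac{2\alpha}{n})\log\alpha\), sum to bound \(\Dist{r}\), and feed this into \Cref{lemma:dist_u_bounds_poa_in_eqRemove}. The only difference is cosmetic---the paper upper bounds \((n-1)\tfrac{2\alpha}{n}\) by \(2\alpha\) and then groups the numerator as \(\alpha(1+2\log\alpha)+(n-1)(2+\log\alpha)\) to read off the bound directly, whereas you keep the exact expression and verify the two fractions are each at most~\(2\); both yield the same constant.
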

\begin{proof}
We root~\(G\) at a 1-median~\(r\in V\).
We start by bounding~\(\depth{G}\). If \(\neig{=2}{r} = \emptyset\), we have~\(\depth{G} \leq 1\). Otherwise, we choose~\(u\in \neig{=2}{r}\) such that~\(\depth{T_u}\) is maximal, as this implies that~\(\depth{G} = 2+\depth{T_u}\).
By \Cref{lemma:subtree_cardinality_in_add_eq}, we can upper bound~\(\card{T_u}\) by~\(\alpha\) and with \Cref{lemma:tree_depth_in_swap_eq} it follows that~\(\depth{T_u} \leq \left(1+\frac{2\alpha}{n}\right)\log\alpha\).
Thus, we get
\[\depth{G} \leq 2 + \left(1+\frac{2\alpha}{n}\right)\log\alpha.\]
Now, we upper bound~\(\Dist{r}\) by~\((n-1)\depth{G}\). This gives
\[\Dist{r} \leq (n-1)\left(2 + \left(1+\frac{2\alpha}{n}\right)\log\alpha\right) \leq (n-1)(2 + \log\alpha) + 2\alpha\log\alpha.\]
To conclude the proof, we apply \Cref{lemma:dist_u_bounds_poa_in_eqRemove} to upper bound~\(\poa{G}\) based on~\(\Dist{r}\) with the inequality
\begin{align*}
\poa{G} &\leq \frac{\alpha + \Dist{r}}{\alpha + n-1}%\\
\leq \frac{\alpha + (n-1)(2+\log\alpha) + 2\alpha\log\alpha}{\alpha + n-1}\\
&= \frac{\alpha(1 + 2\log\alpha) + (n-1)(2 + \log\alpha)}{\alpha + n-1}%\\
\leq 2 + 2\log\alpha. \qedhere
\end{align*}
\end{proof}

This result shows that on tree networks swapping an edge is more powerful than only adding or removing an edge. This is good news, as organizing a swap only requires little coordination. However, combining all three operations as considered in the study of (Bilateral) Greedy Equilibria does not grant additional asymptotic improvements, as we shall see next.

\subsubsection{Bilateral Greedy Equilibrium on Trees}
\label{sec:greedy_on_tree}
In \Cref{sec:swap_on_tree}, we have shown that the PoA for BSwE on trees is in~\(\mathcal{O}(\log \alpha)\). This bound also carries over to BGE as they are a subset of BSwE. Now, we show that this bound is asymptotically tight. Note that the lower bound only applies for~\(\alpha\leq n^{2-\varepsilon}\), with~\(\varepsilon>0\), since by~\Cref{lemma:trivial_poa_upper_bound}, for~\(\alpha\in\Omega(n^2)\) the PoA of any connected graph is constant.

To motivate BGE further, we start with showing that on trees they are equivalent to $2$-BSE.
\begin{restatable}{proposition}{theoremten}
Let~\(G\) be a tree. Then~\(G\) is in BGE if and only if it is in $2$-BSE.
\end{restatable}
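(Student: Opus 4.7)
The plan is to prove the two inclusions separately, with the easy direction first. For \emph{$2$-BSE $\Rightarrow$ BGE}, I would simply observe that every RE, BAE, and BSwE improving move is a special case of an improving move by a coalition of size at most $2$: a single-edge removal by an agent $u$ is the $1$-coalition move $\Gamma=\{u\}$, $R=\{uv\}$, $A=\emptyset$; a single-edge addition by two agents is $\Gamma=\{u,v\}$, $R=\emptyset$, $A=\{uv\}$; and a swap in which $u$ drops $uv$, $u$ buys $uw$, and $w$ consents is $\Gamma=\{u,w\}$, $R=\{uv\}$, $A=\{uw\}$. Hence stability against all $2$-coalition moves implies all three conditions defining BGE.

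For the converse, I would fix a tree $G$ in BGE and an arbitrary candidate deviation by a coalition $\Gamma$ with $|\Gamma|\le 2$, removal set $R\subseteq E$, and addition set $A\subseteq 2^V\setminus E$, and show that the deviation is never strictly improving by combining an edge-counting argument with a short case split. The first step is to rule out disconnecting moves: because $M>\alpha n^3$ by the standing model assumption, losing reachability to any single node inflates the distance cost of each newly disconnected coalition member by more than any possible saving in buying cost, so an improving move must leave the graph connected. The second step is to use the edge count: starting from a tree with $n-1$ edges, the resulting graph has $n-1+|A|-|R|$ edges, so connectivity on $n$ vertices forces $|R|\le|A|$. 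Since the $2$-BSE definition only allows adding edges between two coalition members, $|A|\le 1$, with equality only when $|\Gamma|=2$ and the unique edge $uw$ between $u,w\in\Gamma$ is not already present.

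The remaining case split on $(|A|,|R|)$ is then short. The case $(0,0)$ is the null move; the case $(1,0)$ is adding $uw$ with no deletion, which is precisely a BAE deviation and hence ruled out because $G$ is in BGE and thus in PS. In the case $(1,1)$, the deleted edge must lie on the unique cycle created by adding $uw$, namely the $u$-$w$-path $u=p_0,p_1,\dots,p_k=w$ in $G$ together with the new edge; it must also be incident to a coalition member, so it can only be $e=up_1$ or $e=wp_{k-1}$. Both are exactly BSwE swap moves---for $u$ replacing $p_1$ by $w$, or for $w$ replacing $p_{k-1}$ by $u$---and are ruled out because $G$ is in BSwE. The main obstacle will simply be handling the boundary cases cleanly (when $|\Gamma|=1$, $A=\emptyset$ is forced and the edge count gives $R=\emptyset$; when $k=1$, $uw$ already lies in $E$ so the argument collapses to $|A|=0$); beyond this bookkeeping no technical difficulty is expected.
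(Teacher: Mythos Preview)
Your proposal is correct and follows essentially the same approach as the paper: both directions are handled by observing that the elementary RE/BAE/BSwE moves are special cases of $2$-coalition moves, and conversely that on a tree any connectivity-preserving $2$-coalition move satisfies $|R|\le|A|\le 1$ and therefore reduces to a single addition or a single swap. The paper's proof is simply terser---it invokes the already-established inclusion $2$-BSE $\subseteq$ BGE and then argues the contrapositive in two sentences---while you spell out the $(|A|,|R|)$ case split and the boundary cases explicitly; the underlying argument is the same.
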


Now we introduce the \emph{stretched binary tree} that will be used for the PoA bound of~\(\Omega(\log\alpha)\). 
A \emph{stretched binary tree}~\(T\) with parameters~\(d\in\N\) and~\(k\in\N_{\geq 1}\) is defined as follows.
Let~\(B\) be a complete binary tree of depth~\(d\) with root~\(r\). For~\(u\in V_B\setminus\autoset{r}\), we define~\(P_u = \autoset{u^i}_{i\in[k-1]}\cup\autoset{u}\) and define the node set of~\(T\) as~\(V_T = \autoset{r}\cup\bigcup_{u\in V_B\setminus\autoset{r}}P_u\). For~\(uv\in E_B\), where~\(u\) is the parent, the tree~\(T\) contains the edges~\(uv^1, v^1v^2,\dots,v^{k-1}v\). See \Cref{fig:stretched_binary_tree} for an example of a 3-stretched binary tree.
\begin{figure}[h]
\centering
\begin{subfigure}[b]{.45\textwidth}
  \centering
  \includegraphics[width=0.3\textwidth]{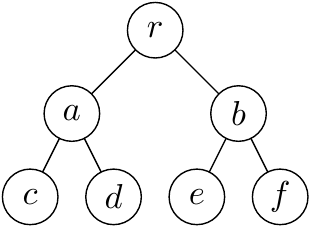}
\caption{Complete binary tree~\(B\) with~\(d=2\).}
\label{fig:complete_binary_tree}
\end{subfigure}\\%
\begin{subfigure}[b]{.45\textwidth}
  \centering
  \includegraphics[width=0.75\textwidth]{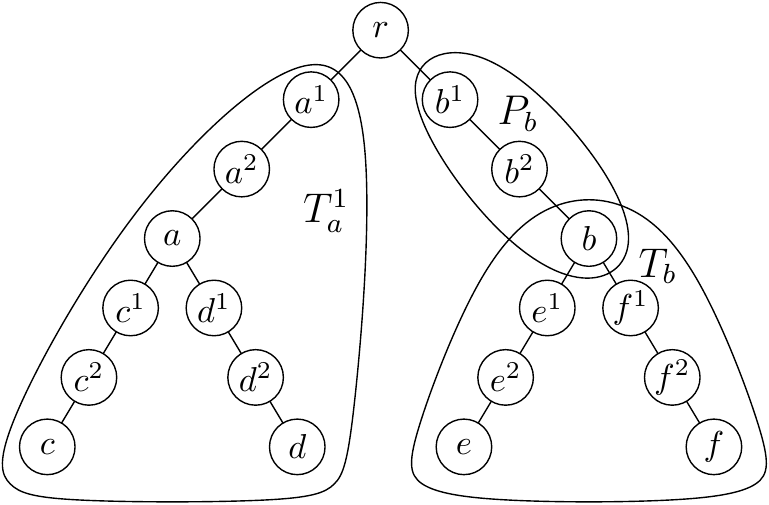}
\caption{Stretched binary tree~\(T\) with~\(d=2, k=3\).}
\label{fig:k_stretched_binary_tree}
\end{subfigure}
\caption{A complete binary tree~\(B\) (top) and the associated 3-stretched binary tree~\(T\) (bottom).}
\label{fig:stretched_binary_tree}
\end{figure}
The resulting graph has \(\left(\card{V_B}-1\right)k+1 = \left(2^{d+1}-2\right)k+1\) nodes and is a tree. For \(u,v\in V_B\), it holds that \(\dist[T]{u}{v} = k\cdot\dist[B]{u}{v}\) and thus \(\depth{T} = k\cdot\depth{B}\).
We also call~\(T\) a~\emph{\(k\)-stretched binary tree}.
For~\(i\in[k-1]\) and~\(u\in B\), we refer to the subtree rooted at~\(u^i\) as~\(T_u^i\), as this is easier to read as~\(T_{u^i}\). Moreover, we use \(u^k\) as an alias for \(u\).

The intuition for using $k$-stretched binary trees is as follows. A binary tree has logarithmic depth in the number of nodes. But since the PoA formula divides the cost by~\(\alpha + n - 1\), a depth of~\(\log n\) will get dominated by the edge cost for~\(\alpha\geq n\log n\). Hence, we stretch the tree to preserve the distance cost. In particular, we will later choose~\(k\in\Theta(\frac{\alpha}{n})\).

\begin{comment}
Now we prove a sufficient condition under which a stretched binary tree is in BAE.
\begin{restatable}{lemma}{lemmaeleven}
\label{lemma:stretched_binary_tree_in_eqAdd}
Let~\(T\) be a~\(k\)-stretched binary tree and~\(\alpha\geq 5kn\). For~\(u,v \in V\) with~\(\layer{u} \leq \layer{v}\), agent~\(u\) does not benefit from adding~\(uv\). This implies that~\(T\) is in BAE.
\end{restatable}

Next, we prove a sufficient condition under which a stretched binary tree is in BSwE.
\begin{restatable}{lemma}{lemmatwelve}
\label{lemma:stretched_binary_tree_in_eqSwap}
Let~\(T\) be a~\(k\)-stretched binary tree, then~\(T\) is in BSwE for~\(\alpha \geq 7kn\).
\end{restatable}

Since \(T\) is in RE because it is a tree, in BAE by~\Cref{lemma:stretched_binary_tree_in_eqAdd}, and in BSwE by~\Cref{lemma:stretched_binary_tree_in_eqSwap}. Hence, using the definition of BGE, we get the following:
\end{comment}

\begin{restatable}{proposition}{theoremthirteen}
\label{theorem:stretched_tree_in_Greedy}
Let~\(T\) be a~\(k\)-stretched binary tree, then \(T\) is in BGE for~\(\alpha\geq 7kn\).
\end{restatable}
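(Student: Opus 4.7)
The plan is to verify the three conditions whose conjunction defines BGE---Remove Equilibrium, Bilateral Add Equilibrium, and Bilateral Swap Equilibrium---and then conclude by the identity $\textrm{BGE} = \textrm{PS}\cap\textrm{BSwE} = (\textrm{RE}\cap\textrm{BAE})\cap\textrm{BSwE}$. The RE condition is immediate: since $T$ is a tree, every edge is a bridge, so removing it disconnects the graph and inflates the removing agent's distance cost by at least $M > \alpha n^3$, which dwarfs the $\alpha$ saved in buying cost. Hence no single removal is improving.

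For BAE, I would fix any $u, v\in V_T$ with $\layer{u}\leq\layer{v}$ and upper bound the distance saving obtained by $u$ upon adding edge $uv$. In a tree, this saving equals $\sum_{x\in V_T}\max\bigl(0,\, \dist[T]{u}{x} - 1 - \dist[T]{v}{x}\bigr)$, and only nodes $x$ lying on the $v$-side of the unique $u$--$v$ path contribute; for each such $x$ the per-node saving is at most $\dist[T]{u}{v}\leq\depth{T} = k\cdot\depth{B}$. Combining this with the balanced binary structure of the underlying tree $B$ (so that $\card{T_v}$ halves every $k$ levels of $T$), the total saving can be bounded by $5kn$. Since $\alpha\geq 7kn > 5kn$, no addition is improving for $u$, so $T\in\textrm{BAE}$.

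For BSwE, consider a proposed swap in which $u$ replaces $v$ by $w$ in $S_u$ and $w$ adds $u$ to $S_w$. Removing $uv$ splits $T$ into two components, and the swap yields a connected graph only if $w$ lies in the component cut off from $u$; otherwise the $M$-penalty dominates and the move is ruinous. In the connected case the same shortcut/cycle analysis applies, now accounting simultaneously for the shortcut gained via $u$ and the loss from the severed $uv$-path, yielding that the distance saving of $w$---the agent who pays the extra $\alpha$---is at most $7kn$, so $\alpha\geq 7kn$ blocks the swap. The main obstacle is this last piece of bookkeeping: unlike BAE, both a deletion and an addition simultaneously alter distances in opposing directions, so one must exploit the stretched-path geometry---specifically that every branching point of $B$ is separated from its parent by a chain of $k$ edges in $T$, so that any "long" shortcut $uw$ must bypass several $k$-chains---to pair benefiting and harmed nodes carefully and extract the clean $7kn$ bound.
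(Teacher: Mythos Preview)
Your decomposition into $\textrm{RE}\cap\textrm{BAE}\cap\textrm{BSwE}$ and your treatment of RE and BAE match the paper's approach (the paper proves BAE as a separate lemma with the same $5kn$ bound on the higher-layer agent's saving).

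The BSwE sketch, however, contains a genuine gap. You claim that the distance saving of $w$ (the agent paying the extra $\alpha$) is always at most $7kn$, but this is false in general. Consider the case where $v$ is a \emph{child} of $u$, so that connectivity forces $w\in T_v$. After the swap, $w$'s distance to every node outside $T_v$ drops by exactly $\dist[T]{u}{w}-1$; since $\card{V\setminus T_v}$ can be close to $n$ and $\dist[T]{u}{w}$ can be as large as $\depth{T_v}\in\Theta(k\log(n/k))$, $w$'s saving can be $\Theta(kn\log(n/k))\gg 7kn$. The paper handles this case by arguing that \emph{$u$} fails to benefit: $u$'s only possible gains lie within $T_v$, and swapping to a node past the fourth $B$-node on the $u$--$w$ path makes $u$'s distances within $T_v$ worse overall. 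Conversely, in the case where $u$ is a child of $v$ (an ``upward'' swap with $w\notin T_u$), the paper does eventually bound $w$'s saving by $7kn$, but only \emph{after} first using $u$'s incentives to force $\layer{w}<\layer{u}+5k$, which caps $\dist[T]{u}{w}$ and hence $\card{T_u}\cdot(\dist[T]{u}{w}-1)$. So the argument genuinely alternates between the two agents' incentives across a case split on which of $u,v$ is the parent; bounding $w$ alone is not enough, and your final paragraph's ``pairing'' intuition does not by itself recover the missing $u$-side bound.
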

%\begin{proof}
%The graph~\(T\) is in RE because it is a tree, in BAE by~\Cref{lemma:stretched_binary_tree_in_eqAdd}, and in BSwE by~\Cref{lemma:stretched_binary_tree_in_eqSwap}.
%\end{proof}

Finally, we can lower bound the PoA for large~\(\alpha\).
\begin{restatable}{proposition}{theoremfourteen}
For sufficiently large~\(\eta\in\N\) and~\(\eta \leq \alpha\leq \eta^{2-\gamma}\) with~\(0<\gamma \leq 1\), there exists a stretched tree~\(G\) with~\(\frac{\eta}{42} \leq n \leq \frac{\eta}{14}\) nodes such that~\(G\) is in BGE and~\(\poa{G}\geq \frac{25}{32} + \frac{1}{96}\gamma\log\eta\).
\end{restatable}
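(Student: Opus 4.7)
The plan is to construct an explicit $k$-stretched binary tree $G$ with parameters $d$ and $k$ chosen to depend on $\eta$ and $\alpha$, apply Proposition~\ref{theorem:stretched_tree_in_Greedy} to conclude $G$ is in BGE, and then lower bound the Price of Anarchy by estimating the total distance cost $\Dist{G}$.

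Parameter choice. I would first pick $d$ so that $2^{d+1}$ lies in a window of the form $[c_1 \eta^2/\alpha,\, c_2 \eta^2/\alpha]$ for suitable constants $c_1 < c_2$, and then pick $k$ to be the largest integer such that $n := (2^{d+1}-2)k + 1 \leq \eta/14$. The window on $2^{d+1}$ is chosen so that simultaneously (i) $n \geq \eta/42$ (the rounding loss in choosing $k$ is at most $2^{d+1}-2$, which by the upper bound in the window is at most $\eta/21$), and (ii) $7nk \leq \alpha$ (since $nk \leq \frac{\eta}{14} \cdot \frac{\eta/14}{2^{d+1}-2} \leq \frac{\eta^2}{196 \cdot c_1 \eta^2/(2\alpha)} = \Theta(\alpha)$ with the constant controllable below $1/7$ by making $c_1$ large enough). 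Because $\alpha \leq \eta^{2-\gamma}$ gives $\eta^2/\alpha \geq \eta^\gamma$, the choice yields $d \geq \gamma \log \eta - O(1)$, and Proposition~\ref{theorem:stretched_tree_in_Greedy} then places $G$ in BGE.

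Lower bound on $\Dist{G}$. The root $r$ of $G$ has exactly two neighbors, and removing $r$ splits $G$ into two symmetric subtrees $T_L, T_R$, each of size $(n-1)/2$. For every $u \in T_L$ and $v \in T_R$, $\dist{u}{v} = \layer{u} + \layer{v}$, so summing over ordered pairs in both directions and using symmetry yields
\[
\Dist{G} \;\geq\; 2 \sum_{u \in T_L,\, v \in T_R} (\layer{u} + \layer{v}) \;=\; (n-1)\,\Dist{r}.
\]
Next, I compute $\Dist{r}$ layer by layer: each node at $B$-layer $j \in [d]$ contributes a chain of $k$ nodes at $T$-layers $(j-1)k+1, \ldots, jk$. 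Summing gives
\[
\Dist{r} \;=\; \sum_{j=1}^d 2^j \Bigl(k^2(j-1) + \tfrac{k(k+1)}{2}\Bigr) \;\geq\; k^2 \bigl((d-2)\,2^{d+1}+4\bigr) \;\geq\; (d-2)\,k\,(n-1),
\]
and therefore $\Dist{G} \geq (d-2)\,k\,(n-1)^2$.

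Combining and main obstacle. Plugging into
\[
\poa{G} \;=\; 1 + \frac{\Dist{G} - 2(n-1)^2}{2(n-1)(\alpha + n - 1)},
\]
and using $\alpha \geq \eta \geq 14 n$ to bound $\alpha + n - 1 \leq 2\alpha$, I obtain $\poa{G} \geq 1 + \frac{(n-1)k((d-2) - 2/k)}{4\alpha}$. The parameter choice makes $(n-1)k = \Theta(\alpha)$ with the constant tunable, so this reduces to $\poa{G} \geq 1 + (d - O(1))/C$ for an explicit constant $C$. Since $d \geq \gamma \log \eta - O(1)$ and $\poa{G} \geq 1 > 25/32$ already provides a slack of $7/32 = 21/96$, the additive constants can be absorbed for $\eta$ large enough, yielding $\poa{G} \geq 25/32 + \gamma \log \eta / 96$. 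The main obstacle is the tight numerical bookkeeping: the feasible window for $2^{d+1}$ has constant multiplicative width, and one must argue that it always contains an integer power of two and then carefully track how the discretization propagates through the BGE condition, the size range, and the final constant $1/96$; in the regime where $\gamma \log \eta$ is small, the bound holds trivially via $\poa{G} \geq 1$.
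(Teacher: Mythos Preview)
Your overall architecture matches the paper's: build a $k$-stretched binary tree, invoke Proposition~\ref{theorem:stretched_tree_in_Greedy} for BGE, then lower bound $\Dist{G}$ and hence $\poa{G}$. The PoA algebra and your $\Dist{G}\geq (n-1)\Dist{r}$ argument via the two symmetric halves are fine.

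The gap is in the parameter selection. You fix $d$ first with $2^{d+1}\in[c_1\eta^2/\alpha,\,c_2\eta^2/\alpha]$ and then take $k$ maximal with $n\le\eta/14$. But your two verification steps impose incompatible constraints on $c_1,c_2$ at the endpoint $\alpha=\eta$. For the size lower bound you use ``rounding loss $\leq 2^{d+1}-2 \leq \eta/21$'', which forces $c_2\le 1/21$ there. For the BGE condition you bound $nk \leq \tfrac{\eta}{14}\cdot\tfrac{\eta/14}{2^{d+1}-2}\le \alpha/(98c_1)$ and then say this is ``controllable below $1/7$ by making $c_1$ large enough'', which needs $c_1\ge 1/14$. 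These cannot hold simultaneously with $c_1<c_2$, so the feasible window for $2^{d+1}$ collapses (its multiplicative width is at most $4/3$, not $\geq 2$), and there are infinitely many $\eta$ for which no integer $d$ sits inside it. You flag exactly this as ``the main obstacle'', but the proposal does not resolve it.

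The paper sidesteps the issue by reversing the order: choose $k=\lceil\alpha/\eta\rceil$ first and then $d$ maximal subject to $n\le\eta/14$. With $k$ fixed, Lemma~\ref{lemma:stretched_tree_parameter_bounds} immediately gives $\eta/42\le n\le\eta/14$, and $7nk\le 7(\eta/14)(\alpha/\eta+1)=\alpha/2+\eta/2\le\alpha$ is direct since $\alpha\ge\eta$. The depth bound $d\ge\log\bigl(\tfrac{\eta/14}{6k}\bigr)\ge\gamma\log\eta-O(1)$ then falls out, and the PoA computation (via the average-layer Lemma~\ref{lemma:average_depth_in_stretched_tree} and $\Dist{G}\ge n\Dist{r}$) produces the stated constants cleanly. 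Your approach is salvageable in principle---the crude bound $nk\le(\eta/14)^2/(2^{d+1}-2)$ is very loose when $k\in\{1,2\}$, and a sharper estimate there widens the window---but as written the argument does not close.
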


Since we construct a lower bound for a large range of~\(\alpha\), we need to be able to scale our graph, so we can provide many different values of~\(n\) for a given~\(\alpha\). Thus, the next definition combines many stretched trees into a single graph.

We define a \emph{stretched tree star}~\(G\) with stretch factor~\(k\in\N_{\geq 1}\), target subtree size~\(t\in\R_{\geq 2k+1}\) and target size~\(\eta\in\N_{\geq 2t+1}\) as follows.
Let~\(T\) be a stretched tree with parameter~\(k\) and~\(d\) maximal subject to~\(|T|\leq t\). Then~\(G\) consists of a root~\(r\) with~\(\ceil{\frac{\eta-1}{|T|}}\) copies of~\(T\) as child subtrees.

Finally, we can use stretched tree stars to lower bound the PoA with respect to BGE. 

\begin{comment}
Next, we lower bound~\(\poa{G}\) based on the parameters.

\begin{restatable}{lemma}{lemmabeforefifteen}
\label{lemma:stretched_tree_star_poa}
Let~\(G\) be a stretched tree star with~\(k,t,\eta\in\N\), then 
\(\poa{G} \geq \frac{nk\left(\log\left(\frac{t}{k}\right) - \frac{9}{2}\right)}{2(\alpha + n - 1)}.\qedhere\)
\end{restatable}
\end{comment}

\begin{restatable}{theorem}{theoremfifteen}\label{theorem:poa_BGE_lowerbound}
For sufficiently large~\(\alpha\) and~\(\eta\in\N_{\geq \alpha}\), a stretched tree star~\(G\) with~\(\eta \leq n \leq \frac{3\eta}{2}\) nodes exists, such that~\(G\) is in BGE and
\(\poa{G}\geq \frac{1}{4}\log\alpha - \frac{17}{8}.\qedhere\)
\end{restatable}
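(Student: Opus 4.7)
The plan is to specialize the stretched tree star construction with parameters that force its diameter to grow as $\log \alpha$, then verify both the BGE condition and the claimed PoA lower bound. Since the hypothesis $\eta \geq \alpha$ implies $n \geq \alpha$, the natural choice is the minimum stretch $k = 1$ together with $t = \lfloor \alpha / 7 \rfloor$; this makes each subtree $T$ a complete binary tree of size $|T| \leq \alpha/7$, so the condition $\alpha \geq 7 k |T|$ of \Cref{theorem:stretched_tree_in_Greedy} is met and each copy of $T$ is in BGE in isolation. The stretched tree star $G$ formed by attaching $\lceil (\eta-1)/|T| \rceil$ copies of $T$ to a central root $r$ then has $n = 1 + \lceil (\eta-1)/|T| \rceil \cdot |T| \in [\eta, \eta + |T|] \subseteq [\eta, \tfrac{3\eta}{2}]$, provided $\alpha$ is large enough that the technical conditions $t \geq 2k+1$ and $\eta \geq 2t+1$ of the stretched tree star definition are satisfied.

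Next I would verify that $G$ itself is in BGE. Remove equilibrium is immediate, since disconnecting a tree triggers the prohibitive penalty $M$. For BAE and BSwE moves whose incident edges lie within a single copy of $T$, the proof of \Cref{theorem:stretched_tree_in_Greedy} transfers: the sibling subtrees only add a constant term to every node's total distance and thus create no fresh incentive. The harder case is a cross-subtree move involving $u \in T_A$ and $v \in T_B$. Here the current distance between $u$ and $v$ through $r$ is at most $2\,\mathrm{depth}(T) + 2 = O(\log \alpha)$, and the only nodes whose distance from $u$ can decrease after adding $uv$ are those of $T_B$, since distances to every other subtree continue to route through $r$. Summing the per-node savings, bounded via $|T_B| \leq \alpha/7$, and running the symmetric analysis for $v$, one shows that at least one of the two agents fails to recoup the $\alpha$ edge cost. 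This cross-subtree verification is the main obstacle of the proof.

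Finally, I would lower-bound the PoA by the distance contribution of cross-subtree node pairs. With $m \approx \eta/|T|$ subtree copies, the number of ordered pairs lying in distinct copies is $m(m-1)|T|^2 = \Theta(n^2)$, and decomposing $\mathrm{dist}(u,v) = \mathrm{dist}_T(u, \rho_A) + 2 + \mathrm{dist}_T(\rho_B, v)$ shows that each pair contributes distance on average $\Theta(\mathrm{depth}(T)) = \Theta(\log(\alpha/7))$; hence $\Dist{G} \geq c \cdot n^2 \log\alpha - O(n^2)$ for a positive constant $c$. Dividing $\Cost{G} \geq \Dist{G}$ by $\Cost{\opt} = 2(n-1)(\alpha + n - 1) \leq 4n(n-1)$, using $n \geq \alpha$, and tracking the additive constants coming from $\mathrm{depth}(T) \geq \log(\alpha/7) - 1$ then gives $\poa{G} \geq \tfrac14\log\alpha - \tfrac{17}{8}$ for $\alpha$ sufficiently large.
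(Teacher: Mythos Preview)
Your overall plan matches the paper's: build a stretched tree star with \(k=1\) and \(t=\Theta(\alpha)\), verify BGE, then extract the logarithmic PoA from \(\alpha\le n\). The gap is in the cross-subtree equilibrium check.

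With \(t=\lfloor\alpha/7\rfloor\), your sketch bounds the savings of the lower-layer endpoint \(u\) by (per-node savings) \(\times\ |T_B|\), where the per-node savings is the diameter \(O(\log\alpha)\). That product is \(\Theta\bigl((\alpha/7)\log\alpha\bigr)\gg\alpha\), so it does \emph{not} show that \(u\) declines the edge; the symmetric bound for \(v\) fails for the same reason, so ``at least one agent fails to recoup \(\alpha\)'' is not established. A correct bound requires the layer-by-layer accounting inside the proof of \Cref{lemma:stretched_binary_tree_in_eqAdd}, which yields a total saving that is \emph{linear} in the subtree size rather than (diameter)\(\times\)(size). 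You also do not treat the cross-subtree swap case at all, and the argument behind \Cref{lemma:stretched_binary_tree_in_eqSwap} is considerably more delicate than for additions.

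The paper avoids this case analysis by taking \(t=\alpha/15\) rather than \(\alpha/7\). For any add or swap, the lower-layer endpoint does not get closer to \(r\), so the move only affects distances inside at most two child subtrees together with the root. That induced subgraph \(T'=T_a\cup T_b\cup\{r\}\) is itself a complete binary tree with \(|T'|=2|T|+1\le\alpha/7\), and now \Cref{lemma:stretched_binary_tree_in_eqAdd} and \Cref{lemma:stretched_binary_tree_in_eqSwap} apply \emph{directly} to \(T'\) (they require \(\alpha\ge 5|T'|\) and \(\alpha\ge 7|T'|\), both satisfied). One observation replaces your entire within-copy/cross-copy split. The PoA bound then drops out of \Cref{lemma:stretched_tree_star_poa} with \(k=1\) and \(t=\alpha/15\); the constant \(17/8\) is exactly \(\tfrac14\bigl(\log 15+\tfrac92\bigr)\) after bounding \(\log 15<4\).
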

\begin{proof}
	We construct~\(G\) as a stretched tree star with the parameters~\(k=1\),~\(t = \frac{\alpha}{15}\), and~\(\eta\) as provided. The graph~\(G\) is evidently in RE since it is a tree.
	
	In order to show that~\(G\) is in BAE, consider~\(u,v\in V\) with \(\layer{u}\leq\layer{v}\) and~\(uv\notin E\). Moreover, let~\(a,b\in V\) be children of~\(r\) such that \(u,v\in T_a\cup T_b\cup\autoset{r}\). Let~\(T'\) denote the subgraph induced by \(T_a\cup T_b\cup\autoset{r}\). As~\(u\) does not get any closer to the root~\(r\) by adding the edge~\(uv\), it suffices to only consider~\(T'\) in our analysis, which is a binary tree with at most~\(\frac{\alpha}{7}\) nodes (for sufficiently large~\(\alpha\)). Hence, we can apply \Cref{lemma:stretched_binary_tree_in_eqAdd} to conclude that~\(G\) is in BAE, as~\(7k|T'|\leq \alpha\).
	
	Analogously, for a swap~\(u,v,w\in V\) with~\(uv\in E, uw\notin E\) and~\(\layer{u}\leq\layer{w}\), we can again define~\(T'\) and conclude that the change is restricted to this complete binary tree. So, by applying \Cref{lemma:stretched_binary_tree_in_eqSwap}, we conclude that~\(G\) is in BSwE.
	
	Hence, the graph~\(G\) is in BGE.
	It remains to show the logarithmic lower bound on the PoA. \Cref{lemma:stretched_tree_star_poa} provides the lower bound
	\[\poa{G} \geq \frac{nk\left(\log\left(\frac{t}{k}\right) - \frac{9}{2}\right)}{2(\alpha + n - 1)} = \frac{n(\log t - \frac{9}{2})}{2(\alpha + n - 1)}.\]
	We conclude the proof by upper bounding~\(2(\alpha + n - 1)\) in the denominator by~\(4n\) and simplifying as follows:
	\[\poa{G} \geq \frac{n\left(\log t - \frac{9}{2}\right)}{4n} = \frac{\log \frac{\alpha}{15} - \frac{9}{2}}{4} = \frac{1}{4}\log\alpha - \frac{17}{8}.\qedhere\]
\end{proof}
Remember that networks in BGE are a subset of the networks in BSwE and hence the PoA upper bound of $\mathcal{O}(\log \alpha)$ from \Cref{theorem:poa_tree_swap_eq} also holds for networks in BGE. Thus, \Cref{theorem:poa_BGE_lowerbound} establishes a tight bound on the PoA for tree networks in BGE.  

\subsubsection{Bilateral Neighborhood Equilibrium on Trees}
\label{sec:neighborhood_on_tree}
Here we prove that the PoA of BNE is in~\(\Theta(\log \alpha)\) for~\(\alpha\geq n^{1/2+\varepsilon}\), with~\(\varepsilon > 0\), so it remains asymptotically unchanged in comparison to BSwE and BGE. However, the PoA surprisingly changes to being constant for~\(\alpha\leq\sqrt{n}\). Thus, the additional coordination improves the asymptotic PoA for small values of~\(\alpha\). For omitted details see \Cref{app:neighborhood_on_tree}.

Since networks in BNE are also in BSwE, the PoA upper bound of $\mathcal{O}(\log \alpha)$ from \Cref{theorem:poa_tree_swap_eq} carries over. Now we show that this bound is tight for~\(\alpha\geq n^{1/2+\varepsilon}\), with~\(\varepsilon > 0\). For this, we again use stretched tree stars, but this time we have to check for stability with respect to BNE. 
\begin{restatable}{lemma}{lemmasixteen}
\label{lemma:stretched_tree_star_neighborhood}
Let~\(G\) be a stretched tree star with parameter~\(k\in\N\) based on a stretched tree~\(T\). Let~\(k=1\) or~\(\alpha\geq 6kn\). Then the graph~\(G\) is in BNE if
\(\frac{3n\cdot\depth{G}}{\alpha}+1 \leq \frac{\alpha}{3\card{T}\cdot\depth{G}}.\qedhere\)
\end{restatable}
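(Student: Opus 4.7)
I would proceed by contradiction: suppose some agent $u$ has an improving BNE move, consisting of removing the edges to a set $R\subseteq S_u$ and adding the edges to a set $A\subseteq V\setminus S_u$, such that $u$ and every $v\in A$ strictly decrease their cost. Write $R' = \{uw : w\in R\}$, $A' = \{uv : v\in A\}$, and $G' = G - R' + A'$, and derive a violation of the stated inequality from these two strict-improvement requirements.

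The first step is to rule out disconnecting moves. Since $M > \alpha n^3$ dominates any achievable saving, $G'$ must remain connected from $u$'s viewpoint. Because $G$ is a tree, removing the $|R|$ edges in $R'$ splits $G$ into $|R|+1$ components, one of which contains $u$; each of the other $|R|$ components must be reconnected to $u$ via a vertex of $A$, hence $|A|\geq |R|$. The second step is an upper bound on $|A|$ coming from $u$'s strict improvement: since $\diam{G}\leq 2\depth{G}$, we have $\Dist[G]{u}\leq 2n\depth{G}$, and combining the inequality $\alpha(|A|-|R|) < \Dist[G]{u}-\Dist[G']{u}$ with a careful accounting of the saving contributed by each removed-added pair yields $|A|\leq \frac{3n\depth{G}}{\alpha}+1$.

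The third step is the core of the proof: bounding the savings of each partner $v\in A$. Since every new edge emanates from $u$, any new shortest path from $v$ to a vertex $w$ must begin with $v\to u$ and possibly chain through another new partner before returning to the residual tree $G-R'$. Exploiting the stretched tree star structure---the graph consists of copies of the stretched binary tree $T$ joined at $r$, each $T$-copy having $|T|$ vertices and the overall diameter being $2\depth{G}$---the set of vertices whose distance to $v$ strictly decreases is confined to $O(|T|)$ vertices, and each contributes at most $O(\depth{G})$ after multiplying by the at most $|A|$ alternative routes. Thus $v$'s total saving is bounded by $3|T|\depth{G}\cdot|A|$, and the strict improvement requirement that $v$'s saving exceed $\alpha$ forces $|A|>\frac{\alpha}{3|T|\depth{G}}$. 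Combining with the Step~2 bound gives $\frac{\alpha}{3|T|\depth{G}}<\frac{3n\depth{G}}{\alpha}+1$, contradicting the lemma's hypothesis.

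The main obstacle will be Step~3: correctly bounding a single partner's savings when multiple edges are simultaneously added and removed. A naive per-edge analysis, in the style of the arguments behind \Cref{lemma:stretched_binary_tree_in_eqAdd} and \Cref{lemma:stretched_binary_tree_in_eqSwap}, is insufficient because new shortcuts can interact---a path from $v$ might pick up a second shortcut through some other $v'\in A$, thereby double-counting the benefit. Pinning down the correct combinatorial factor requires carefully charging each saved distance unit to a specific added edge and $T$-copy. The two hypotheses, $k=1$ and $\alpha\geq 6kn$, suggest that the analysis splits into two regimes: the unstretched case, where $T$ behaves like a plain binary tree and adapting the BAE-style argument suffices, and the sufficiently cheap stretched case, where the BSwE machinery from \Cref{lemma:stretched_binary_tree_in_eqSwap} carries over after additional care for the multi-edge interactions.
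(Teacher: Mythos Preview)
Your outline matches the paper's overall strategy---contradiction, connectivity forces $|A|\ge|R|$, then squeeze $|A|$ from above using $u$'s improvement and from below using a partner's improvement---but Step~3 contains a genuine gap. The claim that \emph{every} partner $v\in A$ has saving bounded by $3\card{T}\cdot\depth{G}\cdot|A|$ is false. Take a partner $w\in A$ sitting deep in its $T$-copy while some other partner $v'\in A$ lies close to the root $r$: the chain $w\to u\to v'$ brings $w$ within two hops of layer $\layer{v'}$, and hence much closer to $r$. Once $w$ gets closer to $r$, its distance to essentially all $n$ nodes decreases, not merely to $O(\card{T})$ of them, so $w$'s saving can be of order $n\cdot\depth{G}$. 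That is not dominated by $\card{T}\cdot\depth{G}\cdot|A|$ unless $|A|$ is already as large as the number of $T$-copies---precisely what you are trying to refute.

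The fix, and the paper's key idea, is to apply the lower-bound argument to one \emph{specific} partner: the $v\in A'$ of minimum layer (where $A'=A\setminus T_u$). In the relevant case $\layer{v}<\layer{u}-1$, this $v$ cannot get closer to $r$ through any chain $v\to u\to v'$, because every other $v'\in A'$ has $\layer{v'}\ge\layer{v}$ and $u$ itself lies even deeper. Hence $v$'s distance can only decrease toward nodes lying in the at most $|A'|$ many $T$-copies that contain an element of $(A'\cup\{u\})\setminus\{v\}$, which yields the desired lower bound on $|A'|$. Two further points: the paper first disposes of the cases $u=r$ and ``all changes stay inside $T_u$'' separately (the former is where an unbounded $|R|$ is handled, so your Step~2 also needs this split); and the hypothesis $k=1$ or $\alpha\ge 6kn$ is invoked \emph{only} in the internal-changes case, to rule out $u$ swapping a child edge deeper into the same child-subtree via \Cref{lemma:stretched_binary_tree_in_eqSwap}, not to bifurcate the entire argument into two regimes as you suggest.
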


Now, we can use \Cref{lemma:stretched_tree_star_neighborhood} to show a lower bound on the PoA for BNE on trees.

\begin{restatable}{theorem}{theoremseventeen}
\label{theorem:neighborhood_tree_large_alpha}
The following statements hold:
\begin{itemize}
 \item[(i)] For~\(0<\varepsilon \leq 1\), sufficiently large~\(\eta\in\N\) and~\(9\eta \leq \alpha\leq \eta^{2-\varepsilon}\), there exists a BNE~\(G\), with~\(\eta \leq n \leq \frac{3}{2}\eta\), such that the inequality
\(\poa{G}\geq \frac{\varepsilon}{168}\log \alpha - \frac{3}{28}\) holds.
\item[(ii)] For~\(0<\varepsilon \leq \frac{1}{2}\), sufficiently large~\(\eta\in\N\) and~\(\eta^{1/2+\varepsilon} \leq \alpha \leq \eta\), there exists a BNE~\(G\), with~\(\eta \leq n \leq \frac{3}{2}\eta\), such that the inequality
\(\poa{G}\geq \frac{1}{4}\varepsilon\log\alpha - \frac{9}{8}\) holds.
\end{itemize}
\end{restatable}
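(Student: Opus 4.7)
The plan is to mirror the construction used in \Cref{theorem:poa_BGE_lowerbound}: in both regimes I would exhibit $G$ as a stretched tree star (with target size $\eta$, stretch factor $k$, and target subtree size $t$ to be tuned), verify that $G$ is in BNE via \Cref{lemma:stretched_tree_star_neighborhood}, and finally extract the desired $\Omega(\varepsilon \log \alpha)$ lower bound on $\poa{G}$ from the same stretched-tree-star PoA estimate
\[
\poa{G}\ \geq\ \frac{nk\bigl(\log(t/k)-\tfrac{9}{2}\bigr)}{2(\alpha+n-1)}
\]
that drives the BGE proof. The only genuinely new work, compared with the BGE argument, is meeting the much stronger BNE inequality $\frac{3n\cdot\depth{G}}{\alpha}+1 \leq \frac{\alpha}{3\card{T}\cdot\depth{G}}$ from \Cref{lemma:stretched_tree_star_neighborhood}, and the split between parts (i) and (ii) reflects exactly which side of that inequality is binding.

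For part (i), $9\eta\le\alpha\le\eta^{2-\varepsilon}$, I would choose $k$ so that $\alpha\approx 6kn$, which just satisfies the side hypothesis $\alpha\ge 6kn$ of \Cref{lemma:stretched_tree_star_neighborhood}, and then take $t/k=\alpha^{\varepsilon/c}$ for a constant $c>2-\varepsilon$ (say $c=3$). With these choices, $\depth{G}=\Theta(k\log(t/k))$ and $\card{T}=\Theta(t)$, so the LHS of the BNE condition becomes $\Theta(\log(t/k))$ (the $3n\depth{G}/\alpha$ term is $\Theta(\log(t/k))$ because $\alpha\approx 6kn$) while the RHS is $\Theta\!\left(n/(t\log(t/k))\right)$. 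The stability condition thus reduces to $t\log^2(t/k)\lesssim n$, which after substituting the choice of $t/k$ becomes $\alpha^{1+\varepsilon/c}\lesssim\eta^2$; this is ensured by $\alpha\le\eta^{2-\varepsilon}$ for $c>2-\varepsilon$. The PoA bound then reads $\poa{G}=\Omega(\log(t/k))=\Omega(\varepsilon\log\alpha)$ because the denominator $2(\alpha+n-1)$ is $\Theta(\alpha)=\Theta(kn)$.

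For part (ii), $\eta^{1/2+\varepsilon}\le\alpha\le\eta$, I would instead take $k=1$, which waives the side hypothesis entirely, and choose $t$ with $\log t=\Theta(\varepsilon\log\alpha)$ (for instance $t=\alpha^{\varepsilon/2}$). Here $\depth{G}=\Theta(\log t)$ and $\card{T}=\Theta(t)$, and the BNE condition rearranges to $t\log^2 t\lesssim\alpha^2/n$. Using $n=\Theta(\eta)$ and $\alpha\ge\eta^{1/2+\varepsilon}$ we have $\alpha^2/n=\Omega(\eta^{2\varepsilon})$, while the left-hand side is at most $\alpha^{\varepsilon/2}\log^2\alpha\le\eta^{\varepsilon/2}\log^2\eta$, which is dominated for large enough $\eta$. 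The stretched-tree-star PoA bound now has denominator $\Theta(n)$ (since $\alpha\le\eta=\Theta(n)$), so $\poa{G}=\Omega(\log t)=\Omega(\varepsilon\log\alpha)$, as required.

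The main obstacle is the explicit constants $1/168$, $3/28$, $1/4$, and $9/8$ stated in the theorem. Matching them requires a careful bookkeeping of the multiplicative slack in two places: first, in how sharply one sets $k$ relative to $\alpha/(6n)$ so that the BNE inequality is satisfied without being wasteful; and second, in ensuring that the $-9/2$ additive loss inside the PoA estimate does not absorb too large a fraction of $\log(t/k)$, which forces $t/k$ to be at least polynomially large in $\alpha$ before the lower bound becomes meaningful. I expect this to be routine but tedious, amounting essentially to the same calibration performed in \Cref{theorem:poa_BGE_lowerbound}, only now parameterized by $\varepsilon$ to track the threshold $\eta^{1/2+\varepsilon}$ at which the BNE condition, and with it the logarithmic lower bound, starts to fail.
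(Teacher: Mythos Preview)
Your proposal is correct and follows essentially the same route as the paper: both parts construct a stretched tree star with target size $\eta$, verify BNE via \Cref{lemma:stretched_tree_star_neighborhood}, and then read off the $\Omega(\varepsilon\log\alpha)$ bound from the stretched-tree-star PoA estimate. The only cosmetic difference is the parameterization---the paper sets $t=\eta^{1-\varepsilon/2}$ with $k=\lfloor\alpha/(9\eta)\rfloor$ in part~(i) and $t=\eta^{\varepsilon}$ with $k=1$ in part~(ii), i.e.\ powers of $\eta$ rather than your powers of $\alpha$---but the resulting verification of the BNE inequality and the final PoA calculation are the same computations you describe, and your anticipated bookkeeping for the explicit constants is exactly what the paper carries out.
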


Looking at the range for~\(\alpha\) in \Cref{theorem:neighborhood_tree_large_alpha},
we see that no lower bound for~\(\alpha\leq \sqrt{n}\) is derived. In fact, we cannot apply \Cref{lemma:stretched_tree_star_neighborhood} for~\(\alpha\leq\sqrt{n}\), as inserting~\(\alpha = \sqrt{n}\) gives the inequality
\[3\sqrt{n}\cdot\depth{G} + 1 \leq \frac{\sqrt{n}}{3\card{T}\cdot\depth{G}},\]
which does not hold for any legal parameters.
This is not a fault of our technique, but instead, we show that the PoA is actually constant for this range of~\(\alpha\).

\begin{restatable}{theorem}{theoremnineteen}\label{theorem:BNE_constant_poa}
Let~\(G\) be a tree with $n>15$. If~\(G\) is in BNE for~\(\alpha \leq \sqrt{n}\), then~\(\poa{G} \leq 4\).
\end{restatable}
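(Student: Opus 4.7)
My plan is to apply Proposition~\ref{lemma:dist_u_bounds_poa_in_eqRemove} with $u = r$ chosen as a 1-median of $G$, reducing the theorem to proving $\Dist{r} \leq 3\alpha + 4(n-1)$.

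The key leverage is that for $\alpha \leq \sqrt{n}$ and $n > 15$, every node $v$ at depth $d \geq 2$ strictly benefits from the addition of $rv$: writing $p_1$ for the depth-$1$ ancestor of $v$, the 1-median property gives $\card{T_{p_1}} \leq n/2$, so each of the $\geq n/2$ nodes $w \notin T_{p_1}$ contributes $d - 1 \geq 1$ to $v$'s distance savings, yielding total savings of at least $(d-1) n/2 > \alpha$. Consequently, in every BNE deviation by $r$ whose added neighbors all lie at depth $\geq 2$, the ``new neighbors benefit'' clause is automatic, and BNE reduces to the single condition that $r$ itself does not benefit.

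Using single-edge deviations, this leverage yields two local constraints: (i) every node $u$ is a 1-median of its own subtree (so $\card{T_c} \leq \card{T_u}/2$ for every child $c$ of $u$, and thus subtree sizes halve along every root-to-leaf path), and (ii) the per-node bound $(d-1)\card{T_v} \leq \alpha$ for every $v$ at depth $d \geq 2$, exactly as in the proofs of Lemmas~\ref{lemma:close_median} and~\ref{lemma:subtree_cardinality_in_add_eq}. To turn these local constraints into a constant-PoA bound, I would invoke a multi-edge BNE deviation in which $r$ simultaneously adds edges to all path-interior nodes $p_2, \ldots, p_d$ along a path from $r$ to a deepest descendant of some child $p_1$ while removing $rp_1$. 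Writing $n_i = \card{T_{p_i}}$, a telescoping calculation of $r$'s per-level distance savings gives $r$'s net benefit as $2 n_2 + \sum_{i=3}^{d} n_i - n_1 - \alpha(d-2)$, and the BNE non-improvement condition combined with $n_i \leq n_1/2^{i-1}$ forces $\card{T_{p_1}} = O(\alpha \cdot d)$.

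Summing over all children of $r$ using $\sum_{p_1} \card{T_{p_1}} = n - 1$ together with the two constraints above would then yield the desired $\Dist{r} \leq 3\alpha + 4(n - 1)$. The main obstacle is reaching the exact constant $4$: this requires carefully combining the exponential decay of subtree sizes from the 1-median property with the $(d-1)\card{T_v} \leq \alpha$ bound, and the arithmetic closes only thanks to both $\alpha \leq \sqrt{n}$ and the hypothesis $n > 15$, the latter absorbing small-case cross-terms that would otherwise push the final constant above $4$.
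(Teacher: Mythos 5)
Your overall frame matches the paper's (root at a 1-median $r$, bound $\Dist{r}$, apply \Cref{lemma:dist_u_bounds_poa_in_eqRemove}), and your two local constraints are correct and do carry over to BNE: for $\alpha\leq\sqrt{n}$, \Cref{lemma:close_median} indeed forces every node to be a 1-median of its own subtree, and \Cref{lemma:subtree_cardinality_in_add_eq} gives $(\layer{v}-1)\card{T_v}\leq\alpha$. The gap is in the multi-edge deviation, and it is fatal. First, the inequality you extract points the wrong way: BNE gives $2n_2+\sum_{i=3}^{d}n_i\leq n_1+\alpha(d-2)$, and since $n_2\leq n_1/2$ by (i) and $n_i\leq\alpha/(i-1)$ by (ii), the left-hand side never exceeds the right-hand side no matter how large $n_1$ is --- a larger $n_1$ only makes the constraint easier to satisfy, so nothing like $\card{T_{p_1}}=O(\alpha d)$ follows. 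Second, and more fundamentally, a deviation that walks down a single root-to-leaf path cannot control the breadth of the tree. Concretely, let $\alpha=\sqrt{n}$ and let $r$ have a child $p_1$ with $m$ children, each the root of a balanced binary tree on roughly $\alpha$ nodes, with $m\alpha\approx n/2$. This configuration satisfies (i), (ii), and your path inequality for every choice of path, yet $\Dist{r}=\Theta(n\log\alpha)$. So the constraint set you assemble can only yield a $\Theta(\log\alpha)$ bound on the PoA, not the constant $4$; the issue is not the final arithmetic but a missing structural constraint.

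The missing idea is a deviation that aggregates over many branches at once. The paper centers the neighborhood change at a \emph{deepest} node $u$, which adds edges to $r$ and to the $k=\min\{\lfloor n/\alpha-1\rfloor, i\}$ layer-$3$ nodes with the largest subtrees, taken across the whole tree. Since $u$ and those layer-$3$ nodes all provably benefit, BNE forces $r$ --- here a member of the \emph{added} set $A$, not the center --- not to benefit, which yields $\sum_{j=1}^{k}\card{T_{c_j}}<\alpha$ and hence that fewer than $\alpha$ nodes lie below layer $3$. Combined with the $\log\alpha$ depth bound inherited from BSwE, this gives $\Dist{r}\leq 3(n-1)+\alpha\log\alpha$, and $n>15$ closes the arithmetic to $4$. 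Your single-path move at $r$ has no mechanism for bounding the aggregate mass of deep subtrees across branches, which is exactly the quantity that must be small.
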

\begin{proof}
Let~\(r\) denote the 1-median and root of~\(G\).
Let~\(u\in V\) be a node of maximum layer. If~\(\layer{u} \leq 2\), the claim holds. So, we assume that~\(\layer{u} \geq 3\). Moreover, it holds that~\(\alpha < \frac{n}{2}\).
We assume that~\(n > 4\) and thus~\(\alpha < \frac{n}{2}\), as otherwise the diameter is at most~\(3\) and the claim holds.

With~\(i = \card{\neig{=3}{r}}\), let~\(\autoset{c_j}_{j\in[i]}\) denote the nodes in layer~\(3\) sorted descendingly by their subtree size. Let~\(k=\minset{\floor{\frac{n}{\alpha}-1}, i}\).
We consider the following change around~\(u\). Agent~\(u\) buys an edge towards~\(r\) and towards the nodes in~\(\autoset{c_j}_{j\in[k]}\setminus\autoset{u}\).

Agent~\(u\) pays for at most~\(\frac{n}{\alpha}\) additional edges, so her buying cost increases by at most~\(n\). Connecting to~\(r\) decreases her distance cost by at least~\(2\frac{n}{2}\). Further, agent~\(u\) profits from the new direct connections to the nodes in~\(\autoset{c_j}_{j\in[k]}\setminus\autoset{u}\), so the overall change is beneficial for~\(u\). Each agent~\(c\in\autoset{c_j}_{j\in[k]}\) profits because her distance to~\(r\) decreases by~\(1\), so her distance cost decreases by at least~\(\frac{n}{2}\) while she only has to pay for one additional edge.
Then, agent~\(r\) must not benefit from the proposed change, as~\(G\) is in BNE.
As agent~\(r\) decreases her distance to each node in~\(c\in\autoset{c_j}_{j\in[k]}\setminus\autoset{u}\) by~\(1\) and to~\(u\) by~\(2\), it must hold that
\(\sum_{j=1}^k \card{T_{c_j}} < \alpha.\)
Then, if \(i > k\), this allows us to upper bound \(\card{T_{c_{k+1}}}\) as follows
\[\card{T_{c_{k+1}}} < \frac{\alpha}{k} < \frac{\alpha}{\frac{n}{\alpha}-2} \leq \frac{\alpha}{\frac{n}{2\alpha}} = 2.\]
So, if the node \(c_{k+1}\) exists, then it has no children.
Thus, only the agents \(c\in\autoset{c_j}_{j\in[k]}\) have descendants beyond layer~\(3\). Carrying over from BGE, we get for \(c\in\autoset{c_j}_{j\in[k]}\) that \(\depth{T_c} \leq \log\alpha\).
We conclude that \(\Dist{r} \leq 3(n-1) + \alpha\log\alpha\) and apply \Cref{lemma:dist_u_bounds_poa_in_eqRemove} to get
\[\poa{G} \leq \frac{\alpha + 3(n-1) + \alpha\log \alpha}{\alpha + (n-1)} \leq 3 + \frac{\sqrt{n}\left(1 + \log\sqrt{n}\right)}{n} \leq 4.\qedhere\]
\end{proof}
Note that \Cref{theorem:BNE_constant_poa} at least partially recovers a well known positive result from the unilateral NCG using the NE as solution concept: that the PoA of tree networks is constant. While for the unilateral NCG with NE this holds for all edge prices $\alpha$, we get the contrasting result that for the BNCG using the BNE, this is only true for $\alpha\leq \sqrt{n}$. However, in the following we will see that we actually can guarantee a constant PoA on trees for all $\alpha$ if we allow coalitions of size 3 to cooperate.

\subsubsection{Bilateral 3-Strong Equilibrium on Trees}
\label{sec:3-SE_on_tree}
We show that the PoA for 3-BSE on trees is constant. Thus, allowing coalitions of three agents provides us with the same asymptotics as NE in the unilateral NCG, at least on trees.

The intuition behind this result is derived from \Cref{sec:neighborhood_on_tree}, in particular from the PoA lower bound for BNE. In the proof of~\Cref{lemma:stretched_tree_star_neighborhood}, we encountered a situation where an agent~\(u\in V\) from a deep layer attempted to connect to a node~\(v\in V\) from a layer closer to the root. Adding the edge~\(uv\) would reduce the distance cost of agent~\(u\) significantly, so agent~\(u\) was willing to buy many extra edges to incentivize agent~\(v\) to accept the connection, but it ultimately failed to provide enough value to offset the increased buying cost of agent~\(v\).

For 3-BSE we consider the following: What if agent~\(u\) does not need to convince agent~\(v\) to buy an extra edge, but instead to swap an existing one? The swap is possible since agents~\(u\) and~\(v\) are part of a coalition, and agent~\(u\) can collaborate with a third member of the coalition to provide the incentive. This idea leads us to our key lemma, which states that all but one child-subtrees of a node must be shallow.

\begin{restatable}{lemma}{lemmatwenty}
\label{lemma:3SE_shallow_subtrees}
Let~\(G\) be a tree in 3-BSE with root and~\(1\)-median~\(r\), then every node~\(u\in V\) has at most one child~\(c\in V\) such that \(\depth{T_c} > 2\ceil{\frac{4\alpha}{n}} + 1\).
\end{restatable}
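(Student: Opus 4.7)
I would proceed by contradiction. Suppose there is a node $u\in V$ with two children $c_1,c_2$ such that $\depth{T_{c_i}}>2D+1$ for $i=1,2$, where $D:=\ceil{4\alpha/n}$. The goal is to construct a coalition $\Gamma$ of three agents together with an improving strategy change, contradicting that $G$ is in 3-BSE.

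Since $\depth{T_{c_i}}\geq 2D+2$, each $T_{c_i}$ contains a descending path of length at least $2D+2$. Along this path I would pick a node $v_i$ at depth $D+1$ below $c_i$, so that $d_i:=\dist[G]{u}{v_i}=D+2$ and the subtree $T_{v_i}$ still has depth at least $D+1$, giving $|T_{v_i}|\geq D+2$. Let $p_i$ denote $v_i$'s parent in $G$. I would take $\Gamma=\{u,v_1,v_2\}$ and propose the change that deletes the two edges $v_1 p_1$ and $v_2 p_2$ and adds the two edges $u v_1$ and $u v_2$. This is a valid 3-BSE move since each deleted edge is incident to a coalition member and each added edge has both endpoints in $\Gamma$; the graph remains a tree because each $T_{v_i}$ reattaches to $u$ directly and all other edges are untouched.

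Each $v_i$ has zero net buying change (one parent edge replaced by an edge to $u$) and strictly benefits in distance: $v_i$'s distance to every node outside $T_{v_i}$ drops by at least $d_i-1=D+1$. Since $r$ is a $1$-median, at least $n/2$ of these nodes lie outside $T_{c_i}$, giving a distance saving of at least $(D+1)\cdot n/2\geq 2\alpha$ for $v_i$, which also absorbs the small residual terms coming from nodes in $T_{c_i}\setminus T_{v_i}$ (bounded via $v_i$'s choice in the middle of the long path). For $u$, the buying cost rises by $2\alpha$, but every node in $T_{v_1}\cup T_{v_2}$ becomes $D+1$ closer, giving a distance saving of at least $(D+1)(|T_{v_1}|+|T_{v_2}|)\geq 2(D+1)(D+2)$.

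The main obstacle is ensuring that $2(D+1)(D+2)>2\alpha$ across all relevant $\alpha$. Using $D\geq 4\alpha/n$ this is immediate when $\alpha$ is comparable to or larger than $n^2$, but becomes delicate for smaller $\alpha$. To handle this regime I would refine the choice of $v_i$: in configurations where the deep subtrees are bushy I would take $v_i$ to be the $T_{c_i}$-$1$-median (which by \Cref{lemma:close_median} lies at layer at most $\layer{c_i}+2\alpha/n$ and whose rooted subtree contains at least $|T_{c_i}|/2$ nodes), trading a smaller per-node distance gain for many more affected nodes; for thin deep paths I would invoke the BSwE subtree-size bound \Cref{lemma:subtree_cardinality_in_add_eq} to obtain additional structural constraints that either rule out the configuration outright or provide the missing slack. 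Combining these case analyses should deliver the strict $\Omega(\alpha)$ benefit for $u$ that closes the contradiction.
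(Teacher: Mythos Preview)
Your coalition $\Gamma=\{u,v_1,v_2\}$ puts the entire extra buying cost $2\alpha$ on $u$, and $u$'s distance saving is $\sum_i(d_i-1)\card{T_{v_i}}$. The problem is that there is no usable lower bound on $\card{T_{v_i}}$: take $\alpha=n$ (so $D=4$) and let both $T_{c_1},T_{c_2}$ be bare paths of length $2D+2=10$. Then for any choice of $v_i$ on the path the product $(d_i-1)\card{T_{v_i}}$ is at most roughly $(D+1)^2=25$, so $u$'s total saving is at most about $50$, hopelessly below $2\alpha=2n$. Your refinements do not rescue this: using the $T_{c_i}$-$1$-median gives $v_i=c_i$ whenever $c_i$ already is the $1$-median (e.g.\ if $c_i$ additionally has many leaf children), which yields no move at all; and \Cref{lemma:subtree_cardinality_in_add_eq} only gives \emph{upper} bounds on subtree sizes, the wrong direction for bounding $u$'s benefit from below.

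The paper avoids this by never asking the ``hub'' agent to pay for extra edges. Its coalition is $\{x,z,z'\}$, where $z,z'$ are the deep endpoints and $x$ is a midpoint on the path to $z$. The move adds $xz$ and $zz'$ and \emph{deletes} $xy$ (the child edge of $x$ on the path to $z$), so $x$'s buying cost is unchanged. Nodes $z,z'$ strictly benefit because they move much closer to $r$ and at least $n/2$ nodes lie on that side. For $x$ there is no a priori guarantee; instead the paper argues: if $x$ does not strictly benefit, the only loss is toward $T_y$ and the only gain is toward $T_{x'}$, which forces $\card{T_{x'}}<\card{T_y}$. Running the symmetric coalition $\{x',z',z\}$ on the other side yields $\card{T_x}<\card{T_{y'}}$. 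Together with $\card{T_y}<\card{T_x}$ and $\card{T_{y'}}<\card{T_{x'}}$ this is a contradiction. The key idea you are missing is this swap-plus-symmetry trick, which sidesteps any need to lower bound subtree sizes.
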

\begin{proof}
Assume towards a contradiction that there are two children~\(c, c'\) of node~\(u\) whose subtrees have at least depth~\(2\ceil{\frac{4\alpha}{n}} + 2\).
Then, there is a node~\(z\in T_c\) such that~\(\layer{z} = \layer{u} + 2\ceil{\frac{4\alpha}{n}} + 3\). Moreover, on the path from~\(c\) to~\(z\), there are nodes~\(x,y\in T_c\) such that~\(y\) is the child of~\(x\) and~\(\layer{x} = \layer{u} + \ceil{\frac{4\alpha}{n}} + 2\). Analogously, we define the nodes~\(x', y', z'\in T_{c'}\). These paths are visualized in \Cref{fig:3SE_two_long_paths} along with an annotation of their layers relative to~\(u\).
\begin{figure}[h]
\centering
\includegraphics[width=0.3\textwidth,page=1]{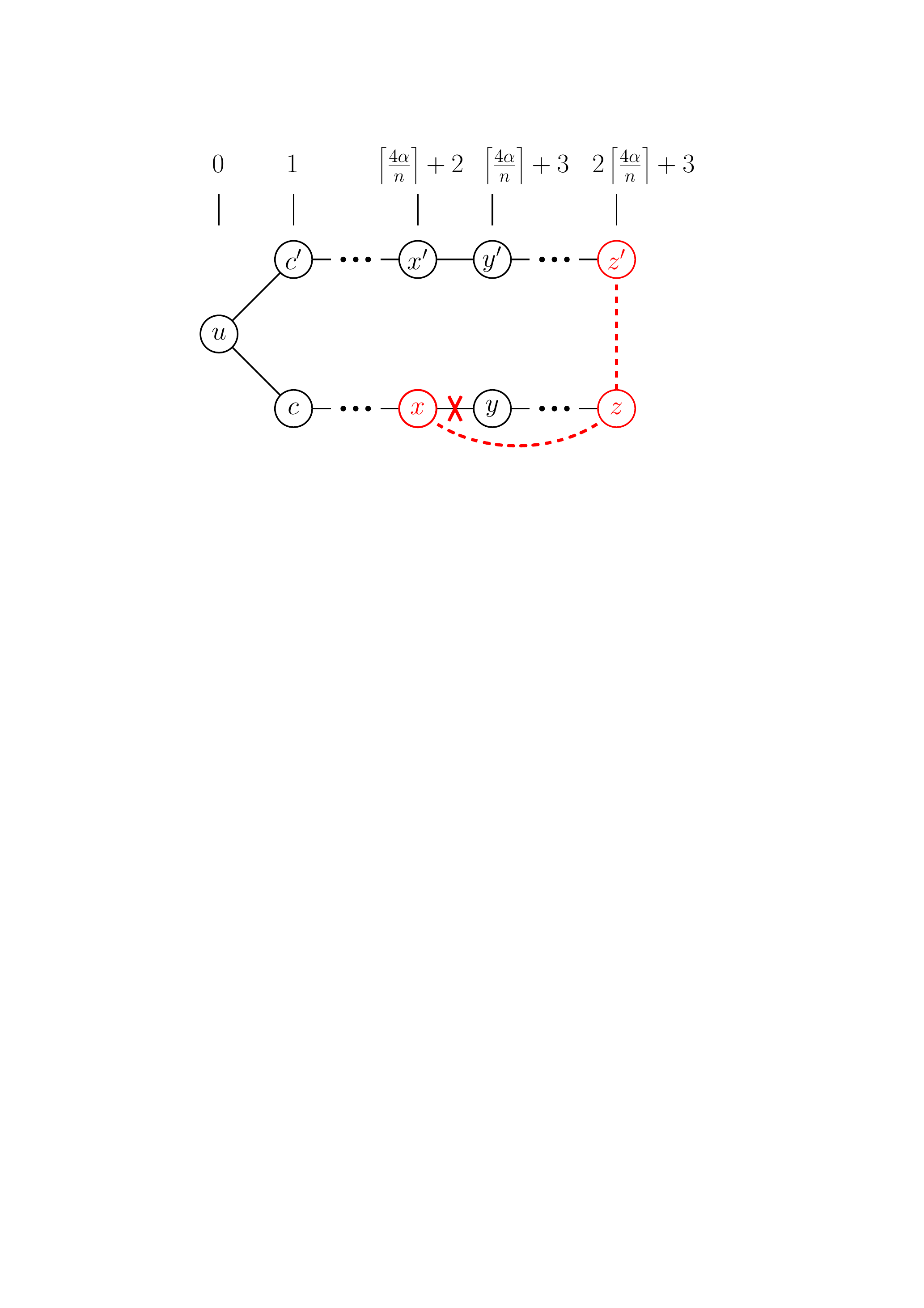}
\caption{The paths considered in the proof of \Cref{lemma:3SE_shallow_subtrees}. The proposed coalitional change is indicated in red (dashed red lines are to be built, the edge $xy$ is to be removed). Nodes outside of the two paths are omitted for simplicity. The labels on top show the layer of the nodes relative to the layer of~\(u\).}
\label{fig:3SE_two_long_paths}
\end{figure}

We now consider a coalitional move by agents~\(x\),~\(z\) and~\(z'\), in which the edges~\(xz\) and~\(zz'\) get added while edge~\(xy\) is removed.
This move decreases the distance from~\(z\) to~\(x\) by~\(\ceil{\frac{4\alpha}{n}}\) and by extension the distance to~\(u\). Since the path from~\(z\) to at least \(\frac{n}{2}\) nodes must contain~\(u\) and since the path to~\(x\) also gets shorter, the distance cost for~\(z\) decreases by at least \[\ceil{\frac{4\alpha}{n}}\cdot \left(\frac{n}{2}+1\right) > 2\alpha.\] Thus, even when considering the increased buying cost, agent~\(z\) profits from this coalitional move.
Likewise, the distance cost for~\(z'\) decreases by at least \[\left(\ceil{\frac{4\alpha}{n}}-1\right)\cdot \frac{n}{2} > \alpha,\] so~\(z'\) also profits by the proposed change.

As~\(G\) is in 3-BSE, agent~\(x\) must not benefit from the coalition. The buying cost of agent~\(x\) remains unchanged, so her distance cost must not improve. The only nodes towards which the distance from~\(x\) may increase are the nodes in~\(T_y\). This increase per node can be upper bounded by the distance increase towards~\(y\), which is \(\ceil{\frac{4\alpha}{n}}\) as the new shortest path goes through~\(z\).
On the other hand, the distance from~\(x\) to~\(x'\) decreases from \(2\ceil{\frac{4\alpha}{n}}+4\) down to~\(2+\dist{z'}{x'} = \ceil{\frac{4\alpha}{n}}+3\), this is an improvement by \(\ceil{\frac{4\alpha}{n}}+1\). And by extension, the distance to each node in~\(T_{x'}\) also improves by at least this amount.

As we assume that the distance cost of agent~\(x\) is not reduced by the coalitional move, we deduce that \[\left(\ceil{\frac{4\alpha}{n}}+1\right) \card{T_{x'}} \leq \ceil{\frac{4\alpha}{n}} \card{T_y},\] which implies that \(\card{T_{x'}} < \card{T_y}\). But, by symmetry, we can also consider the coalition~\(x', z', z\), which gives us the bound \(\card{T_x} < \card{T_{y'}}\). In combination with the inequalities \(\card{T_y} < \card{T_x}\) and \(\card{T_{y'}} < \card{T_{x'}}\), we conclude that this is a contradiction, so there cannot be two different child-subtrees with a depth greater than \(2\ceil{\frac{4\alpha}{n}} + 1\).
\end{proof}

With this insight, we now proceed to show a constant PoA.
\begin{theorem}
\label{lemma:3SE_root_dist_cost}
Let~\(G\) be a tree in 3-BSE, then~\(\poa{G}\leq 25\).
\end{theorem}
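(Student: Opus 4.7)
The plan is to apply \Cref{lemma:dist_u_bounds_poa_in_eqRemove} to a 1-median $r$ of $G$: since the PoA is at most $(\alpha + \Dist{r})/(\alpha + n - 1)$, it suffices to bound $\Dist{r}$ by $\mathcal{O}(\alpha + n)$ in order to obtain a constant PoA. The key structural tool is \Cref{lemma:3SE_shallow_subtrees}, which forces $G$ into a ``caterpillar'' shape that can be exploited as follows.

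First, I would root $G$ at a 1-median $r$ and iteratively construct a heavy path $v_0 = r, v_1, \ldots, v_L$, where by \Cref{lemma:3SE_shallow_subtrees} each $v_{i+1}$ is the unique child of $v_i$ (if any) whose subtree has depth greater than $D := 2\ceil{4\alpha/n} + 1$, continuing until $v_L$ has no such child. All subtrees branching off the heavy path then have depth at most $D$, and in particular $\depth{T_{v_L}} \leq D + 1$. To bound $\Dist{r}$, I would use the identity $\Dist{r} = \sum_{v \neq r} |T_v|$ and decompose the sum into the contribution of the heavy path and the contribution of the shallow branches. For each shallow subtree $T_c$ of depth at most $D$, the inner contribution $\sum_{v \in T_c} |T_v|$ is at most $(1+D)|T_c|$; summing over all shallow branches yields at most $(1+D)(n-1-L)$. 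A telescoping identity cancels the heavy-path self-distance $L(L+1)/2$ against the ``offsets'' $\sum_i i \cdot |F_i|$ (where $F_i$ denotes the off-path descendants of $v_i$ whose nearest heavy-path ancestor is $v_i$), collapsing the bound to $\Dist{r} \leq \sum_{i=1}^L |T_{v_i}| + (1+D)(n-1-L)$.

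Next, I would bound the heavy-path sum using the 1-median property $|T_{v_1}| \leq n/2$ together with \Cref{lemma:subtree_cardinality_in_add_eq}, which gives $|T_{v_i}| \leq \alpha/(i-1)$ for $i \geq 2$. The length $L$ itself is controlled by noting that $v_L$'s subtree has depth exactly $D+1$, so $|T_{v_L}| \geq D + 2$; combined with the upper bound $|T_{v_L}| \leq \alpha/(L-1)$ from \Cref{lemma:subtree_cardinality_in_add_eq}, this gives $L - 1 \leq \alpha/(D+2)$. Using $D + 2 \geq 8\alpha/n + 3$, tracking the exact constants in the resulting estimate of $\Dist{r}$, and plugging into \Cref{lemma:dist_u_bounds_poa_in_eqRemove} yields $\poa{G} \leq 25$.

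The main obstacle is the delicate balance between the harmonic-looking sum $\sum_{i=2}^L \alpha/(i-1)$ along the heavy path and the $(1+D)\cdot n$ term from the shallow branches. The value $D = 2\ceil{4\alpha/n} + 1$ of \Cref{lemma:3SE_shallow_subtrees} is calibrated precisely so that $(1+D)\cdot n = \mathcal{O}(\alpha+n)$, while the matching lower bound $|T_{v_L}| \geq D+2$ combined with $|T_{v_L}| \leq \alpha/(L-1)$ forces $L$ to be small enough that the heavy-path sum also fits into $\mathcal{O}(\alpha+n)$ rather than incurring the $\Theta(\log)$ overhead that is present in the \Cref{sec:swap_on_tree} analysis of BSwE. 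The explicit constant $25$ then comes out by propagating the constants through \Cref{lemma:dist_u_bounds_poa_in_eqRemove}.
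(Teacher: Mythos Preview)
Your overall strategy---root at a 1-median $r$, use \Cref{lemma:3SE_shallow_subtrees} to single out a heavy path with shallow side branches, and bound $\Dist{r}$ for use in \Cref{lemma:dist_u_bounds_poa_in_eqRemove}---matches the paper. Your decomposition $\Dist{r}\leq \sum_{i=1}^{L}\card{T_{v_i}}+(1+D)(n-1-L)$ is also correct. The gap is in how you bound the heavy-path term $\sum_{i=1}^{L}\card{T_{v_i}}$.

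\Cref{lemma:subtree_cardinality_in_add_eq} only gives the harmonic estimate $\card{T_{v_i}}\leq \alpha/(i-1)$, so $\sum_{i=2}^{L}\card{T_{v_i}}\leq \alpha\sum_{j=1}^{L-1}1/j=\Theta(\alpha\log L)$. Your length bound $L-1\leq \alpha/(D+2)$ is correct but does not make $L$ bounded: for instance when $\alpha=n$ one has $D=2\ceil{4}+1=9$ and $L-1\leq n/11$, so $L=\Theta(n)$ and the heavy-path sum is $\Theta(n\log n)$. Plugging into \Cref{lemma:dist_u_bounds_poa_in_eqRemove} then yields only $\poa{G}=\Theta(\log n)$, not a constant---precisely the logarithmic overhead you were hoping to avoid. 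The constant $25$ cannot be recovered from this estimate.

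The missing ingredient is \Cref{lemma:close_median}, which the paper uses in place of \Cref{lemma:subtree_cardinality_in_add_eq}. With $j=\ceil{2\alpha/n}+1$ it yields $\card{T_{v_{i+j}}}\leq \tfrac{1}{2}\card{T_{v_i}}$ along the heavy path, i.e., \emph{geometric} rather than harmonic decay. Grouping the path into blocks of length $j$ then gives $\sum_{i\geq 1}\card{T_{v_i}}\leq j\sum_{k\geq 0}2^{-k}\card{T_{v_1}}\leq 2jn=\mathcal{O}(\alpha+n)$, from which the paper obtains $\Dist{r}\leq 24\alpha+4n$ and hence $\poa{G}\leq 25$.
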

\begin{proof}
Let \(d=\depth{G}\) and \(r\in V\) be the 1-median and root of \(G\). For a node~\(u\) with~\(\layer{u}=d\), consider the path from~\(r\) to~\(u\) denoted by \(r=u^0,u^1,\dots,u^d=u\). For~\(i\in [d]\), we denote the subtree rooted in~\(u^i\) by~\(T_u^i\) for better readability.

We derive an upper bound on~\(\Dist{r}\) based on the path to~\(u\). To do so, for any~\(i,j\in[d]\) with~\(i+j\leq d\), we split \(\dist{r}{T_u^i}\) into its two components \(\dist{r}{T_u^{i+j}}\) and \(\dist{r}{T_u^i\setminus T_u^{i+j}}\).

For any node~\(v\in T_u^i\setminus T_u^{i+j}\), there is a~\(k\in[i,i+j]\) such that~\(u^k\) is the deepest common ancestor of~\(u\) and~\(v\). By \Cref{lemma:3SE_shallow_subtrees}, we have that the layer of~\(v\) is at most \(k + 2\ceil{\frac{4\alpha}{n}} + 2\), as otherwise~\(u^k\) would have two child-subtrees with a layer exceeding the threshold. 
To keep our formulas simple, we define \(x = 2\ceil{\frac{4\alpha}{n}} + 2\) and follow that~\(\layer{v} \leq i+j+x\).

Inserting this into our upper bound yields the following recursive inequality
\[\dist{r}{T_u^i} \leq \dist{r}{T_u^{i+j}} + \card{T_u^i\setminus T_u^{i+j}}(i+j+x).\]

We extend our notation for~\(i>d\), such that~\(T_u^i\) denotes an empty subgraph. This allows us to transform the recursion into an infinite sum. Moreover, we fix~\(i = 0\) since we are interested in \(\Dist{r} = \dist{r}{T_u^0}\). Thus, we get
\begin{align*}
	\Dist{r}&\leq \sum_{k=0}^\infty \card{T_u^{kj}\setminus T_u^{(k+1)j}}((k+1)j+x)\\
	&\leq xn + j\sum_{k=0}^\infty \card{T_u^{kj}\setminus T_u^{(k+1)j}}(k+1).
\end{align*}

Now, we choose \(j = \ceil{\frac{2\alpha}{n}} + 1\), as this ensures, by \Cref{lemma:close_median}, for any~\(i\in\N\) that ~\(T_u^{i+j}\) contains at most half as many nodes as~\(T_u^i\).
While this already gives us the bound \(\big|T_u^{kj}\big| \leq 2^{-k}n\), we insert \(2^{-(k+1)}n\) for \(\big|T_u^{kj}\setminus T_u^{(k+1)j}\big|\) in the sum since this represents the worst case where the deep trees contain as many nodes as possible. This yields
\[\Dist{r}\leq xn + j\sum_{k=0}^\infty 2^{-(k+1)}n(k+1) = xn + jn\sum_{k=1}^\infty 2^{-k}k = xn +2jn.\]
Substituting~\(x\) and~\(j\) by their underlying values finally yields the result
\begin{align*}
	\Dist{r}&\leq \left(2\ceil{\frac{4\alpha}{n}} + 2\right)n + 2\left(\ceil{\frac{2\alpha}{n}} + 1\right)n\\
	&\leq 16\alpha + 2n + 8\alpha + 2n = 24\alpha + 4n.
\end{align*}
For~\(n<5\), the proposed bound holds for any tree, and for~\(n\geq 5\), we can upper bound~\(4n\) by~\(5(n-1)\). Finally, applying \Cref{lemma:dist_u_bounds_poa_in_eqRemove} yields
\[\poa{G} \leq \frac{\alpha + 24\alpha + 5(n-1)}{\alpha + n - 1} \leq 25.\qedhere\]
\end{proof}
Thus, we have proven that already a very limited form of cooperation, i.e., joint coalitional moves of coalitions of size at most 3, guarantees a constant PoA for tree networks. This is a strongly positive result since from an agents' point of view, negotiating changes within such small coalitions seems feasible. Such coalitional moves might be much easier to coordinate than an improving move of some agent in the BNE setting, where potentially many agents must simultaneously evaluate and agree to the change.  

\subsection{The Price of Anarchy for General Networks}
\label{chapter:general_equilibria}
We now consider general graphs. This means our arguments cannot rely anymore on reducing the distance to a~\(1\)-median or on the path between two nodes being unique. This makes reasoning about the equilibria significantly more difficult.

As a first step in this direction, we investigate if cooperation of the agents can guarantee good equilibria at all, i.e., we focus our analysis on the most powerful solution concept, the BSE, for which strategy changes by coalitions of arbitrary size are permitted, as long as all members of the coalition benefit from the change. We show that the PoA is constant for~\(\alpha\geq n\log n\) and for~\(\alpha\leq n^{1-\varepsilon}\), with~\(\varepsilon > 0\), but demonstrate that our technique cannot yield constant bounds for~\(\alpha=n\).

We start by showing that the set of social optima and BSE coincide for~\(\alpha\leq 1\). Then, for~\(\alpha>1\),  we show a constant PoA for most ranges of~\(\alpha\).

It is already known that for~\(\alpha\leq 1\) any pairwise stable graph is socially optimal~\cite{corbo2005price}, so any BSE also must be socially optimal. Hence, the following proposition shows that BSE exist for such~\(\alpha\). Moreover, it shows that the situation for~\(\alpha > 1\) is more complicated.

\begin{restatable}{proposition}{theoremtwentytwo}
For~\(\alpha< 1\), the clique is the only BSE. For~\(\alpha=1\), only graphs with diameter at most~\(2\) are in BSE. For~\(\alpha > 1\), the star is in BSE, but so are other graphs.
\end{restatable}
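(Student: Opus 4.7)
The statement splits naturally by the value of $\alpha$, and in each case I would prove both the negative side (non-claimed structures are unstable) and the positive side (the claimed graph survives all coalitional moves).

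For $\alpha < 1$, I would first rule out every non-clique. Given two non-adjacent nodes $u,v$, the coalition $\{u,v\}$ adds the edge $uv$: each pays $\alpha<1$ and each gains at least~$1$ in distance cost because $\dist{u}{v}$ drops from at least~$2$ to~$1$. Hence any non-clique has an improving $2$-coalition and the clique is the only BSE candidate. To verify the clique itself is in BSE, observe that no edges can be added, so any coalitional move is a pure deletion of some set $R$ of edges incident to the coalition. For an agent $u\in\Gamma$ with $k$ incident deletions, the buying cost drops by $\alpha k$, while each of these $k$ edges increases the corresponding pairwise distance from $1$ to at least $2$, raising the distance cost by at least $k$; the total change is at least $(1-\alpha)k\geq 0$, and is $0$ only when $k=0$. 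Thus no coalition member strictly improves.

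For $\alpha=1$, the same adding-an-edge argument kills any graph of diameter greater than~$2$: two endpoints $u,v$ of a length-$\geq 3$ shortest path jointly add $uv$, each paying~$1$ while saving at least $\dist{u}{v}-1\geq 2$ in distance to the other. So every BSE has diameter at most~$2$; the star shows this range is non-empty by the direct check carried out for $\alpha>1$ below.

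For $\alpha>1$, the plan has two parts. To show the star $S$ with center $c$ is in BSE, consider any coalition $\Gamma$ with proposed deletion set $R$ and addition set $A$. Any edge of $R$ has the form $cu$ with $u\in\Gamma$, and any edge of $A$ joins two coalition members. If $R=\emptyset$, each new edge $uu'\in A$ charges $\alpha>1$ to each endpoint while reducing $\dist{u}{u'}$ from $2$ to $1$, so every agent incident to a new edge is strictly worse off, and any coalition member without a new edge sees zero change. If $R\neq\emptyset$, pick any leaf $u$ with $cu\in R$. Since $u$ must not be disconnected (otherwise $u$'s cost explodes via the penalty $M$), $u$ receives at least one new edge, giving $k_u\geq 1$. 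Crucially, any non-coalition leaf $w$ still has $c$ as its sole neighbor, so in the new graph $\dist{u}{w}=d_c^{G'}(u)+1\geq 3$, an increase of at least~$1$ per non-coalition leaf; the distance to $c$ itself also grows by at least $1$. Balancing the buying change $\alpha(k_u-1)$, the at most $k_u$ distance savings from new incident edges, and the distance increases above, yields a cost change for $u$ bounded below by $(\alpha-1)k_u-\alpha+n-|\Gamma\cap L|$. This is already non-negative unless $|\Gamma\cap L|$ is close to $n$, and in that regime I would close the gap by additionally accounting for the distance increases from $u$ to other coalition leaves whose shortest $u$-path now also detours through $c$. For the existence of BSE other than the star, I would simply invoke \Cref{lemma:cycle_is_strong_eq}, which yields the cycle $C_n$ as a BSE for a range of $\alpha\in\Theta(n^2)$, all of which lie in $\alpha>1$.

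The main obstacle is the $R\neq\emptyset$ subcase of the star's BSE verification: the coarse bound above comes out to zero precisely when $|\Gamma\cap L|=n-1$ and $k_u=1$, so the argument must carefully track the extra distance growth from $u$ to its former "$c$-twins" among the other coalition leaves to get strict non-improvement. The other cases are routine edge accounting once the central structural observation (non-coalition leaves keep $c$ as their unique neighbor) is in place.
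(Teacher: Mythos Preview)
Your treatment of $\alpha<1$ and of the ``diameter $>2$ is unstable'' direction for $\alpha=1$ is fine and essentially matches the paper. The real issue is the star verification for $\alpha>1$, where you explicitly leave a gap: in the boundary case $k_u=1$ and $|\Gamma\cap L|=n-1$ your bound degenerates to $0$, and the promised ``extra distance growth'' accounting is not carried out. It is not obvious that this ad hoc case analysis closes cleanly, because once all leaves are in the coalition they can rebuild an arbitrary graph among themselves, and you would need a global argument that at least one leaf fails to strictly improve.

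The paper sidesteps all of this with a single observation you are missing: for any connected graph and any node $u$ with degree $|E_u|$, one has $\Cost{u}\geq |E_u|(\alpha-1)+2(n-1)$, since the $|E_u|$ neighbors are at distance $1$ and the remaining $n-1-|E_u|$ nodes are at distance at least $2$. For $\alpha>1$ this is minimized at $|E_u|=1$, giving $\alpha+2n-3$, which is \emph{exactly} the cost of every leaf in the star. Hence no leaf can strictly improve under \emph{any} coalitional move, so no leaf will join a coalition; the only possible coalition is $\{c\}$ alone, which can only delete edges and disconnect the graph. This replaces your entire $R=\emptyset$ / $R\neq\emptyset$ case split by a two-line argument. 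The same lower bound also gives, for $\alpha=1$, that every node in a diameter-$\leq 2$ graph already meets the bound $2(n-1)$ and hence such graphs are in BSE~--- this is the sufficiency direction you only asserted via the star example, whereas the paper proves the full characterization. Finally, for ``other BSE exist'' the paper uses the path $P_4$ with $\alpha=100$ rather than \Cref{lemma:cycle_is_strong_eq}; your citation works too.
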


Next, we consider the PoA for~\(\alpha>1\). For any given~\(n\) and~\(\alpha\), we show that the agent with the highest cost in an arbitrary graph implies an upper bound on the PoA. The considered graph does not even need to be in BSE. Consequently, it suffices to identify graphs where all agents have a low cost in order to bound the PoA.

\begin{restatable}{lemma}{lemmatwentythree}
\label{lemma:worst_node_bounds_se_cost}
Let~\(G\) be a graph with~$n>1$ and let~\(u\in V_G\) be the agent with the highest cost. Then, for every graph~\(H\) in BSE with~\(n\) agents and the same~\(\alpha\) it holds that
\(\poa{H} \leq \frac{\Cost[G]{u}}{\alpha + n - 1}.\qedhere\)
\end{restatable}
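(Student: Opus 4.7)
The plan is to reduce the claim to the distance-cost bound in \Cref{lemma:dist_u_bounds_poa_in_eqRemove}. Every singleton $\{v\}$ is a valid coalition of size~$1$, and the only deviation it can execute is to delete incident edges of~$v$, so any BSE is in particular in RE. Hence for every $v\in V_H$ we have $\poa{H}\le \frac{\alpha+\Dist[H]{v}}{\alpha+n-1}$, and the task reduces to exhibiting a single witness $v\in V_H$ whose distance cost satisfies $\Dist[H]{v}\le \Cost[G]{u}-\alpha$.

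To locate this witness, I would invoke the grand coalition $\Gamma=V_H$. After identifying $V_G$ with $V_H$ via an arbitrary bijection, the grand coalition is allowed to transform $H$ into any graph $G^{*}$ on $V_H$ isomorphic to $G$, because every deleted edge trivially has an endpoint in $\Gamma$ and every added edge has both endpoints in $\Gamma$. Since $H$ is a BSE, this deviation cannot strictly improve every member of $\Gamma$ simultaneously, so there exists some $v\in V_H$ with $\Cost[H]{v}\le \Cost[G^{*}]{v}$. Because $u$ attains the maximum cost in $G$ and $G^{*}$ is only a relabelling of $G$, this gives $\Cost[H]{v}\le \Cost[G]{u}$.

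It remains to convert this total-cost bound into the required distance-cost bound, for which I need $v$ to have positive degree in $H$: then $\buy[H]{v}\ge \alpha$ and hence $\Dist[H]{v}\le \Cost[H]{v}-\alpha\le \Cost[G]{u}-\alpha$. The main (minor) obstacle is therefore verifying that $H$ is connected. This follows from the BSE property together with the choice $M>\alpha n^{3}$: in a disconnected graph each agent incurs distance cost at least~$M$, whereas every agent in any connected graph on $n$ vertices has total cost at most $\alpha(n-1)+(n-1)^{2}$, which is strictly smaller than $M$. Hence the grand coalition could strictly benefit every agent by switching to, e.g., a spanning star, contradicting the BSE property of $H$. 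With connectedness established and $n>1$, every vertex has degree at least $1$, so substituting $\Dist[H]{v}\le \Cost[G]{u}-\alpha$ into the RE bound yields precisely $\poa{H}\le \frac{\Cost[G]{u}}{\alpha+n-1}$, as claimed.
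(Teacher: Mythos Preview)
Your proof is correct and follows essentially the same approach as the paper: use the grand-coalition deviation from $H$ to (a relabelled copy of) $G$ together with connectedness and the RE bound of \Cref{lemma:dist_u_bounds_poa_in_eqRemove}. Your version is in fact slightly more streamlined, since the paper first fixes the minimum-cost agent and does a case split, whereas you extract the non-improving witness~$v$ directly from the BSE condition and avoid the case analysis.
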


Equipped with \Cref{lemma:worst_node_bounds_se_cost}, we can now construct graphs where the worst-off agent has a low cost. We do so by building trees in which the cost is distributed evenly over the agents.

\begin{restatable}{lemma}{lemmatwentyfour}
\label{lemma:worst_node_in_regular_tree}
For~\(d\in \N_{\geq 2}\) and~\(n\in\N\), let~\(G\) be an almost complete~\(d\)-ary tree with~\(n\) agents, then for all~\(u\in V\) holds that \(\Cost{u}\leq  (d+1)\alpha + 2(n-1)\log_d n\).
\end{restatable}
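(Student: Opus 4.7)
The plan is to decompose $\Cost{u} = \buy{u} + \Dist{u}$ and bound each summand by directly exploiting the structural regularity of the $d$-ary tree. Since Lemma~\ref{lemma:worst_node_bounds_se_cost} uses $G$ only as a reference graph (not as an equilibrium), we are free to fix the strategy profile of $G$ so that each agent buys precisely the subset of its incident edges that yields the desired graph (say, every node buys the edge to each of its children). Nothing game-theoretic is needed beyond this bookkeeping.

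For the buying cost, I would observe that in any rooted $d$-ary tree each node has at most $d$ children and at most $1$ parent, so degree at most $d+1$. Hence for any strategy profile realizing $G$ as a bilateral tree we may assume $|S_u|\leq d+1$, giving $\buy{u}\leq(d+1)\alpha$.

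For the distance cost, the idea is simply $\Dist{u} \leq (n-1)\cdot\diam{G}$ combined with a depth bound. Since a complete $d$-ary tree of depth $h$ has $\tfrac{d^{h+1}-1}{d-1}\geq d^{h}$ nodes, the depth of an almost complete $d$-ary tree on $n$ nodes is at most $\log_d n$. The diameter of a rooted tree is at most twice its depth, so $\diam{G}\leq 2\log_d n$. Therefore $\Dist{u}\leq 2(n-1)\log_d n$, and summing the two bounds yields the claim.

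The main (minor) obstacle is purely definitional: pinning down precisely what "almost complete $d$-ary tree" means so that the depth is truly at most $\log_d n$ (and not $\log_d n + O(1)$), which for $d\geq 2$ follows because appending one partial last level to a complete tree on $\geq d^{h}$ nodes does not increase the depth bound we need. Apart from this, the argument is a direct two-line calculation with no combinatorial difficulty, which is appropriate given the lemma is a plug-in tool for the subsequent PoA bound via Lemma~\ref{lemma:worst_node_bounds_se_cost}.
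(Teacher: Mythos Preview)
Your proposal is correct and matches the paper's proof essentially line for line: bound the buying cost by $(d+1)\alpha$ via the degree bound (at most $d$ children plus one parent), bound the depth by $\log_d n$, and hence the diameter by $2\log_d n$, yielding $\Dist{u}\le 2(n-1)\log_d n$. The paper's version is just the same two observations stated in two sentences without your additional remarks on the edge-ownership convention and the ``almost complete'' caveat.
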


Now we can set~\(d\) depending on~\(\alpha\). For \(\alpha\geq n\log n\), it suffices to consider binary trees to obtain a good PoA, as the logarithmic distances get dominated by the buying cost.

\begin{restatable}{theorem}{theorembeforetwentysix}
\label{theorem:se_poa_large_alpha}
Let~\(G\) be in BSE and~\(\alpha \geq n\log n\), then~\(\poa{G} \leq 5\).
\end{restatable}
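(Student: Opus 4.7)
The plan is to instantiate the two preceding lemmas with $d=2$, exploiting the fact that once $\alpha$ is sufficiently large the logarithmic distance cost in a balanced binary tree is dominated by the edge price, so the bound essentially collapses to the $3\alpha$ term coming from the buying cost.

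First, I would let $G$ be an almost complete binary tree on $n$ agents, i.e., I apply \Cref{lemma:worst_node_in_regular_tree} with $d=2$. This yields, for every agent $u \in V_G$, the inequality
\[\Cost[G]{u} \leq 3\alpha + 2(n-1)\log n.\]
In particular, the worst-off agent in $G$ satisfies the same bound, so by \Cref{lemma:worst_node_bounds_se_cost} every $H$ in BSE on $n$ agents with the same $\alpha$ obeys
\[\poa{H} \leq \frac{3\alpha + 2(n-1)\log n}{\alpha + n - 1}.\]

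Next I would use the hypothesis $\alpha \geq n\log n$ to absorb the distance term into the $\alpha$ term in the numerator. Specifically, $2(n-1)\log n \leq 2n\log n \leq 2\alpha$, so the numerator is at most $5\alpha$, and since clearly $5\alpha \leq 5(\alpha + n - 1)$, we obtain $\poa{H} \leq 5$.

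There is essentially no obstacle here: the proof is a one-line computation given the two lemmas. The only small point to make explicit is that $G$ itself need not be in BSE; \Cref{lemma:worst_node_bounds_se_cost} is a comparison-to-any-reference-graph statement, and the binary tree is used purely as a convenient witness whose worst-case per-agent cost is easy to control. If a tighter constant is desired, one can check the inequality $3\alpha + 2(n-1)\log n \leq 5(\alpha + n - 1)$ directly for all $n \geq 2$, but $5$ is already what the theorem claims, so no further work is needed.
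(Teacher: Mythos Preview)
Your proposal is correct and follows essentially the same route as the paper: instantiate \Cref{lemma:worst_node_in_regular_tree} with $d=2$, feed the resulting bound into \Cref{lemma:worst_node_bounds_se_cost}, and use $\alpha \geq n\log n$ to reduce the fraction to at most $5$. The only cosmetic difference is that the paper splits the fraction as $3 + \frac{2(n-1)\log n}{n\log n} \leq 5$ whereas you bound the numerator by $5\alpha$ directly; both are the same one-line computation.
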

\begin{proof}
We apply \Cref{lemma:worst_node_in_regular_tree} with~\(d = 2\) and combine it with \Cref{lemma:worst_node_bounds_se_cost} to get 
\[\poa{G} \leq \frac{3\alpha + 2(n-1)\log n}{\alpha + n - 1} \leq 3 + \frac{2(n-1)\log n}{n\log n} \leq 3 + 2 = 5.\qedhere\]
\end{proof}

For~\(\alpha\) significantly smaller than~\(n\), it is necessary to keep the distances small. This can only be achieved by scaling the node degrees. This is possible since the buying cost is low.

\begin{restatable}{theorem}{theoremtwentysix}
For~\(\varepsilon\in\R_{>0}\) and~\(\alpha \leq n^{1-\varepsilon}\), let~\(G\) be in BSE, then~\(\poa{G} \leq 3+\frac{2}{\varepsilon}\).
\end{restatable}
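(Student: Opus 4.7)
The plan is to apply \Cref{lemma:worst_node_bounds_se_cost} together with \Cref{lemma:worst_node_in_regular_tree} by constructing a witness tree $G'$ on $n$ nodes whose branching degree $d$ is tuned to the range of $\alpha$. Concretely, I would pick $d = \ceil{n^{\varepsilon}}$: this is the value at which the buying cost $(d+1)\alpha$ from \Cref{lemma:worst_node_in_regular_tree} roughly matches $n$, while the distance term $2(n-1)\log_d n$ is forced to stay at most $\frac{2(n-1)}{\varepsilon}$, since $d \geq n^{\varepsilon}$ implies $\log_d n \leq \frac{1}{\varepsilon}$.

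With this choice, $d \leq n^{\varepsilon}+1$ combined with $\alpha \leq n^{1-\varepsilon}$ yields $(d+1)\alpha \leq (n^{\varepsilon}+2)n^{1-\varepsilon} = n + 2n^{1-\varepsilon}$. Plugging the resulting bound on the worst-agent cost $\Cost[G']{u}$ into \Cref{lemma:worst_node_bounds_se_cost} gives
\[\poa{G} \;\leq\; \frac{n + 2n^{1-\varepsilon} + 2(n-1)/\varepsilon}{\alpha + n - 1} \;\leq\; 3 + \frac{2}{\varepsilon},\]
where the last inequality uses $\alpha + n - 1 \geq n - 1$ and absorbs the lower-order term $\frac{1 + 2n^{1-\varepsilon}}{n-1}$ into the constant $2$ (which is valid for $n$ above a modest threshold depending only on $\varepsilon$).

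The main obstacle is bookkeeping rather than a structural difficulty, and there are two finicky points worth flagging. First, \Cref{lemma:worst_node_in_regular_tree} requires $d \geq 2$, so for $n < 2^{1/\varepsilon}$ the choice $\ceil{n^{\varepsilon}} = 1$ must be replaced, e.g.\ by $d = 2$; this case is harmless because both $n$ and $\alpha \leq n^{1-\varepsilon}$ are then bounded by constants depending only on $\varepsilon$, so any BSE automatically has PoA at most a function of $\varepsilon$, and one can verify this is dominated by $3 + \frac{2}{\varepsilon}$. Second, achieving the clean coefficient $3$ in front of the $\frac{2}{\varepsilon}$ term (rather than a larger constant like $5$ as in \Cref{theorem:se_poa_large_alpha}) is exactly what forces the tight choice $d = \ceil{n^{\varepsilon}}$: a looser choice such as $d = n^{\varepsilon}\log n$ would blow up the $(d+1)\alpha$ term and spoil the constant.
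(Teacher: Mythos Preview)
Your approach is essentially identical to the paper's: choose $d=\ceil{n^{\varepsilon}}$, apply \Cref{lemma:worst_node_in_regular_tree} and \Cref{lemma:worst_node_bounds_se_cost}, and bound the two summands exactly as you do. Two small remarks: your worry about $d\geq 2$ is unfounded, since $n\geq 2$ and $\varepsilon>0$ give $n^{\varepsilon}>1$ and hence $\ceil{n^{\varepsilon}}\geq 2$ automatically; and the threshold you introduce in the final inequality is avoided in the paper by using the standing assumption $\alpha\geq 1$, so that $\alpha+n-1\geq n$ and the bound becomes $1+2n^{-\varepsilon}+\tfrac{2}{\varepsilon}\leq 3+\tfrac{2}{\varepsilon}$ for all $n$.
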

\begin{proof}
We apply \Cref{lemma:worst_node_in_regular_tree} with~\(d = \ceil{ n^\varepsilon}\) and combine it with \Cref{lemma:worst_node_bounds_se_cost} to get
\[\poa{G} \leq \frac{\left(\ceil{n^\varepsilon}+1\right)\alpha + 2(n-1)\log_{\ceil{ n^\varepsilon}} n}{\alpha + n - 1}.\]
We upper bound the first summand by using~\(\ceil{ n^\varepsilon} + 1\leq n^\varepsilon + 2\) and inserting the bound for~\(\alpha\) to get that~\(\left(n^\varepsilon + 2\right)\alpha \leq n + 2n^{1-\varepsilon}\).
For the second summand, we do a change of base to convert~\(\log_{n^\varepsilon} n = \frac{\log n}{\log{n^\varepsilon}} = \frac{1}{\varepsilon}\) and upper bound with~\(\frac{2(n-1)}{\varepsilon}\). This gives us the inequality
\[\poa{G} \leq \frac{n + 2n^{1-\varepsilon} + \frac{2(n-1)}{\varepsilon}}{\alpha + n - 1}.\]
Since~\(\alpha \geq 1\), we lower bound the denominator by~\(n\) and conclude the proof with
\[\poa{G} \leq 1 + 2n^{-\varepsilon} + \frac{2}{\varepsilon} \leq 3 + \frac{2}{\varepsilon}.\qedhere\]
\end{proof}

This bound, however, is only constant with regard to a specific~\(\varepsilon\), so~\(\frac{2}{\varepsilon}\) can become arbitrarily large. Moreover, the preceding two theorems still leave a gap for~\(n^{1-\varepsilon} < \alpha < n\log n\), which is similar to the gap for NE in the NCG, for which no constant PoA has yet been established~\cite{AM19}.
By using binary trees, we can show a PoA in \(o(\log n)\) for general~\(\alpha\).

\begin{theorem}
\label{theorem:strong_eq_general_bound}
Let~\(G\) be in BSE, then \[\poa{G} \leq 2 + \log\log n + 2\frac{\log n}{\log\log\log n}.\qedhere\]
\end{theorem}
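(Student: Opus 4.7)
The plan is to instantiate \Cref{lemma:worst_node_bounds_se_cost} with an almost-complete $d$-ary tree as the comparison graph and bound the maximum agent cost via \Cref{lemma:worst_node_in_regular_tree}. Since the witness graph need not itself be in BSE, we are free to tune the degree $d$ to balance the two terms in \Cref{lemma:worst_node_in_regular_tree}: the buying contribution $(d+1)\alpha$ grows linearly in $d$, while the distance contribution $2(n-1)\log_d n$ decays like $1/\log d$. The preceding results cover the endpoints of this trade-off ($d=2$ gives the bound for $\alpha\geq n\log n$, and $d=\ceil{n^\varepsilon}$ gives the bound for $\alpha\leq n^{1-\varepsilon}$); to close the remaining gap around $\alpha\approx n$, the correct intermediate choice is $d=\ceil{\log\log n}$.

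With this choice of $d$, combining the two lemmas gives
\[
\poa{G}\leq \frac{(d+1)\alpha+2(n-1)\log_d n}{\alpha+n-1}\leq (d+1)+2\log_d n,
\]
where the first fraction is bounded by dropping $n-1$ from the denominator (yielding $d+1$) and the second by dropping $\alpha$ (yielding $2\log_d n$, valid because $\alpha\geq 1$). By construction $d+1\leq \log\log n+2$, and since $d\geq \log\log n$ we get $\log d\geq \log\log\log n$, hence $\log_d n=\log n/\log d\leq \log n/\log\log\log n$. Summing these two estimates produces exactly the claimed bound.

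There is a small technicality: \Cref{lemma:worst_node_in_regular_tree} requires $d\geq 2$, and the right-hand side of the claim is only meaningful when $\log\log\log n>0$, which effectively restricts the main argument to $n$ large enough that $\log\log n\geq 2$. For the finitely many smaller values of $n$ the PoA is bounded by an absolute constant (e.g.\ via \Cref{lemma:trivial_poa_upper_bound}, since $n$ and hence $n^2/\alpha$ are bounded once $\alpha\geq 1$ is fixed), so the stated inequality holds trivially.

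The main obstacle is conceptual rather than computational: identifying that $d=\ceil{\log\log n}$ is the balanced choice that yields a sublogarithmic bound uniformly in $\alpha$. Once this is in place, both lemmas apply essentially off the shelf, and the remaining manipulations are routine.
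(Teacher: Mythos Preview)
Your proof is correct and essentially identical to the paper's: both choose $d=\lceil\log\log n\rceil$, combine \Cref{lemma:worst_node_in_regular_tree} with \Cref{lemma:worst_node_bounds_se_cost}, split the resulting fraction into the $(d+1)$-term and the $2\log_d n$-term, and finish via the change-of-base identity $\log_d n=\log n/\log d$. The paper does not explicitly address the small-$n$ technicality you flag, but otherwise the arguments match line for line.
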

\begin{proof}
We apply \Cref{lemma:worst_node_in_regular_tree} with~\(d = \ceil{\log\log n}\) and combine it with \Cref{lemma:worst_node_bounds_se_cost} to get
\begin{align*}
	\poa{G}&\leq \frac{(\ceil{\log\log n}+1)\alpha + 2(n-1)\log_{\ceil{ \log\log n}} n}{\alpha + n - 1}\\
	&\leq \log\log n + 2 + \frac{2(n-1)\log_{\log\log n} n}{n - 1}.
\end{align*}
By a change of base, we convert~\(\log_{\log\log n} n\) to~\(\frac{\log n}{\log\log\log n}\). The proof concludes with
\begin{align*}
	\poa{G} &\leq 2 + \log\log n + \frac{2(n-1)\frac{\log n}{\log\log\log n}}{n - 1}\\
	&= 2 + \log\log n + 2\frac{\log n}{\log\log\log n}. \qedhere
\end{align*}
\end{proof}
\Cref{theorem:strong_eq_general_bound} gives a~\(\mathcal{O}(\frac{\log n}{\log\log\log n})\) bound which is the same as~\(\mathcal{O}(\frac{\log \alpha}{\log\log\log \alpha})\) for the remaining gap~\(n^{1-\varepsilon} < \alpha < n\log n\). We suspect that additional improvements can be achieved by refining the parameter~\(d\).
Nonetheless, it remains open whether there is a constant PoA for~\(\alpha\) close to~\(n\). What we do know, however, is that such a bound cannot be obtained by our technique. The reason is that for a constant PoA for~\(\alpha = n\), the cost cannot remain evenly distributed across the agents as we scale-up~\(n\), as we see now.

\begin{restatable}{proposition}{theoremtwentynine}
There is no constant~\(p\in\R\) such that for all~\(n\in\N\) and~\(\alpha=n\) there exists a graph~\(G\) such that for all agents~\(u\in V\) it holds that \(\frac{\Cost[G]{u}}{\alpha + n - 1} \leq p\). 
\end{restatable}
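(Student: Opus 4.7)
My plan is to argue by contradiction: if some constant $p\in\R$ had the stated property, then for every $n\in\N$ there would be a graph $G=G_n$ on $n$ vertices (with $\alpha=n$), together with a realising strategy vector, such that every agent $u$ satisfies $\Cost[G]{u}\leq p(\alpha+n-1)<2pn$. I will extract from this single inequality both an upper bound on the maximum degree of $G_n$ and a lower bound on $\Dist[G]{u}$, and show that they become incompatible once $n$ exceeds a constant depending only on $p$. Notice that this uses no equilibrium hypothesis on $G_n$ whatsoever.

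Since $\Cost[G]{u}=\alpha|S_u|+\Dist[G]{u}$ and both summands are non-negative, the hypothesis implies $\alpha|S_u|<2pn$ and $\Dist[G]{u}<2pn$ separately. With $\alpha=n$, the first yields $|S_u|<2p$ for every $u$, hence $|S_u|\leq D$ where $D:=\lfloor 2p\rfloor$ is a constant independent of $n$. Because $N_G(u)\subseteq S_u$ in any strategy vector realising $G$ (both endpoints must list each other for an edge to form), the maximum degree of $G_n$ is also at most $D$. A standard BFS-layer estimate then gives $|\neig[G]{\leq k}{u}|\leq 1+D+D^2+\cdots+D^k\leq D^{k+1}$ whenever $D\geq 2$; the cases $D\leq 1$ are trivial, since then $G_n$ is a disjoint union of paths and cycles, in which some vertex already has distance cost $\Omega(n^2)$.

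Taking $k:=\lfloor\log_D(n/2)\rfloor-1$ keeps $D^{k+1}\leq n/2$, so at least $n/2$ vertices of $G_n$ lie at distance at least $k+1$ from $u$, giving
\[\Dist[G]{u}\;\geq\;\frac{n}{2}\Bigl\lfloor\log_D\tfrac{n}{2}\Bigr\rfloor.\]
Combined with $\Dist[G]{u}<2pn$, this forces $\lfloor\log_D(n/2)\rfloor<4p$, so $n<2D^{4p+1}\leq 2(2p)^{4p+1}$. This bound depends only on $p$, contradicting the existence of $G_n$ for arbitrarily large $n\in\N$. The main subtlety I expect is the clean passage from the strategy-based bound $|S_u|<2p$ to the graph-theoretic maximum-degree bound via the inclusion $N_G(u)\subseteq S_u$; once that observation is made, the remainder is a routine Moore-bound-style counting argument.
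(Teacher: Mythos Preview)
Your argument is correct and follows essentially the same route as the paper's proof: bound the maximum degree from the buying-cost constraint, use a Moore-type BFS-layer count to cap \(\card{\neig{\leq k}{u}}\), and contradict the distance-cost bound once \(n\) is large enough. The only slip is the description of the \(D\le 1\) case (max degree at most \(1\) gives a matching plus isolated vertices, not paths and cycles), but this is harmless since for \(n\ge 3\) such a graph is disconnected and the distance cost is at least \(M\).
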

As an extension of this proposition, a constant PoA for~\(\alpha = n\) would imply that some graphs can only reach an equilibrium state through a series of multiple improving coalitional moves.
In particular, we have already seen that for~\(\alpha = n\), every agent in an almost complete binary tree has costs in~\(\Theta(n\log n)\). But if the PoA is constant, any graph in BSE must have a node with degree~\(\Omega(\sqrt[4p]{n})\) and thus costs in~\(\Omega(n\sqrt[4p]{n})\), which is asymptotically worse. Evidently, this transition cannot be achieved in a single move, as only agents within the coalition can increase their degree, but they would only do so if they benefit from the change.

\section{Conclusion and Outlook}
We analyzed the BNCG under different amounts of agent cooperation. Previously, only Pairwise Stability, where cooperation is restricted to single edge additions, has been studied.

On tree networks our results convey the general picture that the PoA improves asymptotically as we progress towards more cooperation among the agents. In particular, the significant improvements achieved by BSwE, BGE, BNE and 3-BSE give valuable insights for system designers.
When defining the protocol or contracts by which agents establish and remove connections, a system designer should try to allow for edge swaps or even joint changes by coalitions of size at least~\(3\), instead of only permitting single edges to be removed or added.
This is a positive result, as coordinating a swap or a~\(3\)-ways contract is presumably significantly less demanding than coordinating joint changes by larger coalitions.

Our results for tree networks raise the most pressing open question of whether these bounds on the PoA also carry over to general networks. For the latter, we showed that BSE have a constant PoA for most ranges of~\(\alpha\) and we conjecture that this extends to the full range of~\(\alpha\).  However, coordinating simultaneous infrastructural changes by large coalitions might prove an inconceivable effort in many practical applications. Hence, settling the PoA for solution concepts with more restricted agent coordination, like the 3-BSE, seems even more relevant.

\bibliographystyle{ACM-Reference-Format}
\bibliography{references}

%%
%% If your work has an appendix, this is the place to put it.

\appendix

\section{Omitted Details from Section~\ref{sec:relationship_between_equilibria} }
\label{appendix:equilibria_relationships}
\textbf{Remove Equilibria, Bilateral Add Equilibria, and Bilateral Swap Equilibria: }
We show that RE, BAE, and BSwE are truly distinct. The examples given in~\Cref{fig:base_equilibria_venn} imply the following proposition.
\begin{proposition}
For each combination of RE, BAE, and BSwE, there is a graph that is stable for exactly that combination.
\end{proposition}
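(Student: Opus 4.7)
The plan is to verify that the graphs depicted in the Venn diagram of~\Cref{fig:base_equilibria_venn} collectively realize all eight combinations of membership in RE, BAE, and BSwE, thereby establishing that these three solution concepts are pairwise incomparable (no two of them coincide) and in fact independent in the strongest sense. Since the proposition reduces to a case-analysis of the eight regions of a 3-set Venn diagram, the proof is essentially a verification argument, not a construction from scratch.

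I would proceed as follows. First, I would list the eight target combinations explicitly: the triple intersection $\text{RE}\cap\text{BAE}\cap\text{BSwE}$, the three double-but-not-triple regions, the three "only in one of them" regions, and the outside region (in none of the three). For each region I would point to one concrete example graph (together with a parameter range for $\alpha$, since the equilibrium properties depend on it), taken from the figure. The triple intersection is realized by the star for $\alpha$ in an appropriate middle range; the outside region is witnessed, e.g., by a disconnected graph with an edge whose removal improves both endpoints, whose completion would improve two other nodes, and admitting a beneficial swap. The remaining six regions are filled by small graphs on $4$--$6$ nodes whose structure isolates exactly one or two of the three deviation types.

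For each example the verification proceeds by directly enumerating the finitely many deviations of each type: for RE, check every edge $e\in E$ and compare the costs of incident agents in $G$ and $G-e$; for BAE, check every non-edge $uv\notin E$ and compare costs of both endpoints in $G$ and $G+uv$; for BSwE, check every triple $(u,v,w)$ with $uv\in E$, $uw\notin E$, and compare costs of $u$ and $w$ before and after the swap. In each example, these checks boil down to computing shortest-path sums that change by small integers and comparing them against $\alpha$, which pins down the parameter range in which the example is valid. I would tabulate for each example: (i) which single deviation witnesses the failure of the equilibrium conditions one wants to fail, and (ii) a uniform argument (typically a symmetry or diameter argument) ruling out the deviation types one wants to hold.

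The main obstacle is the BSwE verification, since it involves triples rather than pairs or singletons and can have many subtle cases on even small graphs; in particular, the isolating examples for the "in RE and BAE but not in BSwE" region and the "in BSwE but not in RE or BAE" region require care, because a swap is conceptually close to a combined add-and-remove and one must be sure that a proposed swap is not secretly mimicked by some remove or add (or vice versa). To mitigate this, I would pick examples with small diameter so that the distance changes induced by any single edge modification are bounded by a small constant, allowing all of RE/BAE/BSwE to be expressed as a small number of linear inequalities in $\alpha$, and then choose $\alpha$ in the intersection of the inequalities realizing exactly the desired membership pattern. Once the inequalities are written down, verifying the eight regions becomes a routine check, and the proposition follows.
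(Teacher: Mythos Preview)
Your proposal is correct and takes essentially the same approach as the paper: both rely on the concrete examples depicted in \Cref{fig:base_equilibria_venn} to witness each of the eight regions of the Venn diagram, with the proposition following by direct verification. The paper's own proof is in fact even terser than yours---it simply points to the figure---so your plan of explicitly enumerating the deviation checks for each example is a more detailed version of the same argument.
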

Interestingly, Remove Equilibria and Pure Nash Equilibria are identical. This essentially follows from an observation by Corbo and Parkes~\cite{corbo2005price}, that removing a single edge is just as powerful as removing multiple edges at once. This carries over from the unilateral NCG.
\begin{proposition}
\label{theorem:remove_eq_is_nash_eq}
The set of Remove Equilibria coincides with the set of Pure Nash Equilibria. 
\end{proposition}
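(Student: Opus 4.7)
The direction $\text{NE} \subseteq \text{RE}$ is immediate from the definitions, since any single-node removal from $S_u$ is a special case of a unilateral strategy change; every NE is therefore trivially in RE. My plan is to focus on the reverse inclusion $\text{RE} \subseteq \text{NE}$.

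As a first step, I would argue that in any RE, each agent's strategy must be consistent with the realized graph in the sense that $S_u$ equals the neighborhood of $u$ in $G$ for every agent $u$. Indeed, if some agent $u$ had $v \in S_u$ but $u \notin S_v$, then $u$ would be paying $\alpha$ for a non-existent edge, and removing $v$ from $S_u$ would strictly decrease her buying cost without changing any distance---a single-element removal improvement that contradicts RE. Given this consistency, any unilateral deviation $S'_u$ by agent $u$ can only destroy edges, never create them: adding a fresh $v \notin S_u$ to $S'_u$ fails to create the edge $uv$ (since consistency forces $u \notin S_v$ whenever $v \notin S_u$) and merely wastes $\alpha$. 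Hence every meaningful deviation has the form $S'_u = S_u \setminus R$ for some subset $R$ of edges incident to $u$ in $G$, and is improving exactly when $\alpha |R| > \Delta(R)$, where $\Delta(R) := \Dist[G-R]{u} - \Dist[G]{u}$ is the total distance increase for $u$.

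The crux, and the main obstacle, is to show that whenever such an $R$ is improving there is already some single $e \in R$ that is individually improving in $G$, i.e., $\alpha > \Delta(\{e\})$. This is the bilateral version of the Corbo--Parkes single-vs-multi observation from the unilateral NCG, and the key inequality behind it is $\Delta(R) \ge \sum_{e \in R} \Delta(\{e\})$. My plan to establish this inequality is to telescope along an arbitrary ordering $e_1, \ldots, e_k$ of $R$, writing $\Delta(R) = \sum_{i=1}^{k} (\Delta(R_i) - \Delta(R_{i-1}))$ with $R_i = \{e_1, \ldots, e_i\}$, and to argue that each marginal $\Delta(R_i) - \Delta(R_{i-1})$ dominates $\Delta(\{e_i\})$. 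Intuitively this holds because removing $e_i$ in the sparser graph $G - R_{i-1}$ can only increase distances by at least as much as removing $e_i$ in $G$ itself, since fewer alternative shortest paths are available after $R_{i-1}$ has already been deleted; making this rigorous requires a short case analysis on whether the shortest $u$-$v$ path in $G$ uses $e_i$ or an edge of $R_{i-1}$, summed over targets $v$. Once the inequality is in place, combining it with $\alpha |R| > \Delta(R)$ and averaging yields some $e \in R$ with $\Delta(\{e\}) < \alpha$, contradicting $G$ being in RE and completing the proof.
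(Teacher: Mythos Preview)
Your proposal is correct and follows essentially the same route as the paper's: your consistency observation is precisely the paper's Case~1 (an agent paying for a non-edge can profitably drop it), and your telescoping/supermodularity argument for $\Delta(R) \geq \sum_{e \in R} \Delta(\{e\})$ is the content of the Corbo--Parkes lemma that the paper simply cites. The only difference is presentational---you front-load the consistency check and spell out the superadditivity of the distance-increase function, whereas the paper folds the former into a best-response case split and defers the latter to the literature.
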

\begin{proof}
NE are trivially a subset of RE, as removing an edge is performed by changing the strategy of a single agent. Thus, it remains to show that if a graph~\(G\) is in RE, then it is also in NE.

Assume for the contrapositive that~\(G\) is not in NE. Then there is some agent~\(u\in V\) and a best-response strategy~\(S_u'\subseteq V\) such that agent~\(u\) can improve by changing to strategy~\(S_u'\).
We do a case distinction based on whether~\(S_u'\) is a subset of~\(S_u\).

If~\(S_u'\) adds a new target node~\(v\in S_u'\setminus S_u\), then the new edge~\(uv\) is only built if~\(u\) is already included in~\(S_v\). If this is the case, then~\(G\) is not in RE, as agent~\(v\) can improve by removing~\(u\) from her strategy. Otherwise, adopting~\(S_u'\setminus \autoset{v}\) is a better-response than~\(S_u'\), which contradicts our assumption.

The remaining case is that~\(S_u'\subset S_u\) holds, so the new strategy does not add any new target nodes but only removes existing ones. As shown by Corbo and Parkes~\cite{corbo2005price}, this implies that there must also be a single target node~\(v\in S_u\) whose removal is beneficial for agent~\(u\). Thus, the graph is not in~RE.
\end{proof}

\noindent\textbf{Pairwise Stability and Bilateral Greedy Equilibria:}
We compare graphs in PS and BGE with RE, BAE, and BSwE.
\begin{proposition}
PS is the intersection of the sets of BAE and RE. The set of GE is the intersection of the sets PS and BSwE.
\end{proposition}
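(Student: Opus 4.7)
The plan is essentially to unfold the definitions, since both claims are stated to follow ``directly from the definitions'' in the paragraph introducing \Cref{fig:equilibrium_hierarchy}. The proposition formalizes this, so the main work is to carefully verify that nothing more than the definitions is needed, and to note (presumably ``GE'' in the statement is a typo for ``BGE'', since we are in the bilateral setting) that the claim is stated for strategy vectors but applies at the level of graphs as well, given the bijection between strategy vectors and created graphs in the BNCG.

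For the first claim, I would recall that the definition of Pairwise Stability given in \Cref{sec:model_and_notation} reads: ``A strategy vector $S$ is pairwise stable (PS) if it is in Remove Equilibrium and in Bilateral Add Equilibrium.'' Thus the set $\mathrm{PS} = \mathrm{RE} \cap \mathrm{BAE}$ by definition, and both inclusions are immediate. No additional work (such as constructing improving moves or ruling out pathological cases) is required.

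For the second claim, I would similarly invoke the definition of BGE: ``A strategy vector $S$ is in Bilateral Greedy Equilibrium (BGE) if it is pairwise stable and in Bilateral Swap Equilibrium.'' Therefore $\mathrm{BGE} = \mathrm{PS} \cap \mathrm{BSwE}$, again by definition. Combined with the first claim, this gives $\mathrm{BGE} = \mathrm{RE} \cap \mathrm{BAE} \cap \mathrm{BSwE}$, consistent with the hierarchy depicted in \Cref{fig:equilibrium_hierarchy}.

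The only ``obstacle'' worth mentioning is verifying that the three solution concepts considered here (RE, BAE, BSwE) are pairwise truly distinct (otherwise the intersection characterization would be vacuous in some direction) — but this is already handled by the preceding proposition which exhibits, via \Cref{fig:base_equilibria_venn}, graphs realizing every combination of memberships in RE, BAE, and BSwE. So in the final writeup I would simply state both equalities, cite the definitions, and point to the preceding proposition for non-triviality of the intersections.
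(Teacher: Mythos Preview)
Your proposal is correct and matches the paper's own proof almost exactly: both simply unfold the definitions of PS and BGE given in \Cref{sec:model_and_notation} to read off the two set equalities. The only minor difference is that the paper additionally notes, via \Cref{fig:base_equilibria_venn}, that BGE is a \emph{proper} subset of PS, whereas you instead point to the preceding proposition for the pairwise distinctness of RE, BAE, and BSwE; neither addendum is strictly required by the proposition as stated.
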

\begin{proof}
The intersections follow directly from the definitions. Since both sets are visible in \Cref{fig:base_equilibria_venn}, we also know that BGE is a proper subset of PS.
\end{proof}

\noindent\textbf{Bilateral Neighborhood Equilibria:}
\begin{proposition}
The set of BNE is a proper subset of the intersection of the sets of
BAE and BGE.
\end{proposition}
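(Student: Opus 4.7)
The plan is to decompose the statement into the two claims BNE $\subseteq$ BAE $\cap$ BGE and BNE $\neq$ BAE $\cap$ BGE, and to notice first that BGE $=$ PS $\cap$ BSwE $=$ RE $\cap$ BAE $\cap$ BSwE is already a subset of BAE, so the intersection BAE $\cap$ BGE collapses to BGE. The containment statement therefore reduces to showing BNE $\subseteq$ RE $\cap$ BAE $\cap$ BSwE, and the proper-subset part requires exhibiting a concrete graph that lies in BGE but not in BNE.

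For the containment, I would proceed by contrapositive on each component of BGE. Let $G$ be an arbitrary graph that fails one of the three conditions. If $G \notin$ RE, some agent $u$ improves by dropping a single $v \in S_u$; specialising the BNE move-template to $R = \{v\}$, $A = \emptyset$ yields an improving BNE move (the side-condition "$u$ and all nodes in $A$ benefit" degenerates to "$u$ benefits" since $A$ is empty). If $G \notin$ BAE, two agents $u,v$ both benefit from adding $uv$; this is exactly the BNE move with $R = \emptyset$, $A = \{v\}$ from $u$'s perspective. If $G \notin$ BSwE, there are $u,v,w$ for which $u$ and $w$ benefit from $u$'s replacement of $v$ by $w$; this matches the BNE move with $R = \{v\}$, $A = \{w\}$, since the BSwE and BNE benefit conditions agree on these parties. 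Therefore any BAE-, RE-, or BSwE-violation is simultaneously a BNE-violation, giving BNE $\subseteq$ BGE.

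For strictness, I would construct a small witness graph $G$ in BGE that admits a compound BNE-move. The natural template is an agent $u$ together with two candidate neighbours $v_1, v_2$ whose single-edge additions each fail the BAE condition (one of the two parties does not strictly benefit), but whose simultaneous addition $A = \{v_1, v_2\}$ makes $u$, $v_1$ and $v_2$ all strictly better off, because each of $v_1, v_2$ gains transit benefit through $u$ to reach the other once both edges are present. Around $u, v_1, v_2$ I would place enough rigid structure (e.g., a long path or a star backbone) so that: no single removal helps any agent (RE), no single addition is mutually improving (BAE), and no single swap is improving for both the swapper and the new endpoint (BSwE); then tune $\alpha$ into the window where only the joint two-edge addition clears every benefit threshold.

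The routine part is the containment; I expect the real obstacle to be pinning down the witness graph and edge price. The delicate point is simultaneously satisfying all three single-change equilibrium conditions while leaving room for a strictly beneficial compound move, since small graphs tend to collapse one of these windows into emptiness. Once a candidate gadget is in hand, the verification is just a finite case analysis over the incident moves, using \Cref{lemma:dist_u_bounds_poa_in_eqRemove}-style direct cost bookkeeping to check each inequality.
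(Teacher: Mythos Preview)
Your containment argument is correct and makes explicit what the paper dismisses as ``follows from the definition''; the observation that BAE $\cap$ BGE collapses to BGE is clean and the three contrapositive cases are handled properly.

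For strictness, your plan is sound but you have not actually produced a witness, and the double-\emph{add} gadget you sketch is harder to tune than necessary. The paper's witness uses a double-\emph{swap} instead: a central agent $a$ removes two existing edges $ab_1, ab_2$ and simultaneously adds $ac_1, ac_2$. Because this leaves $a$'s buying cost unchanged, $a$ benefits from any positive distance improvement, however small (in the paper's gadget the gain is exactly~$2$). Each new partner $c_i$ pays for one extra edge; the single swap $ab_i \to ac_i$ gives $c_i$ a distance reduction of $104 < \alpha = 104.5$, so BSwE holds, but the second simultaneous swap brings the other branch one hop closer and pushes each $c_i$'s gain to $105 > \alpha$. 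In your double-add template, by contrast, the centre must absorb $2\alpha$ of extra buying cost; subadditivity of distance gains then forces at least one single add to already benefit $u$, so the BAE obstruction has to sit on the $v_i$ side anyway---precisely the coupling you end up relying on, but now with a tighter window to hit. The double-swap sidesteps this asymmetry and makes the $\alpha$-window trivial to locate. Your approach would work, but the paper's choice of compound move is the simpler execution.
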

\begin{proof}
Both subset relationships follow from the definition.
In \Cref{fig:example_graph_multi_add_greedy_not_neighborhood} we can see a graph, which is both in
BAE and in BGE but not in BNE.
\begin{figure}[h]
  \centering
  \includegraphics[width=0.3\textwidth]{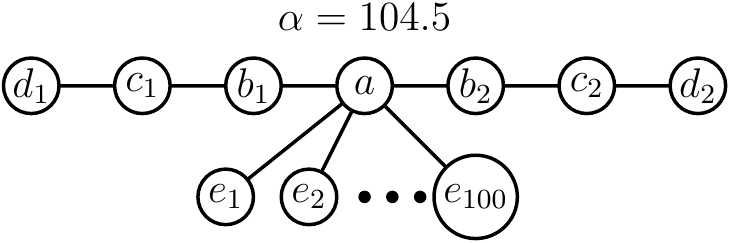}
\caption{Example of a graph that is in
BAE and in BGE but not in BNE.}
\label{fig:example_graph_multi_add_greedy_not_neighborhood}
\end{figure}

We start by arguing that it is in
BAE. Due to the high edge cost, agents only agree to pay for an additional edge, if their distance to node~\(a\) decreases. But one of the nodes involved in the construction of new edges has to be the closest one to node~\(a\), so it cannot be beneficial for all nodes involved.

For BGE, we already know that it is in AE and since the graph is a tree, it is also in RE, so it only remains to show that the graph is in BSwE.
Swapping around any of the type~\(c\),~\(d\) or~\(e\) nodes will not bring the swap partner closer to~\(a\), even though the partner has to pay for an extra edge, so it cannot be mutually beneficial. The type~\(b\) nodes already have ideal connections, so they also do not want to swap an incident edge. Finally, agent~\(a\) would benefit from swapping~\(ab_1\) for~\(ac_1\) (or~\(ab_2\) for~\(ac_2\)), but that change reduces the distance cost of~\(c_1\) only by~\(104\), which is less than the increase in buying cost.
Hence, the graph is in BGE.

On the other hand, the graph is not in BNE as~\(a\) can remove~\(ab_1\) and~\(ab_2\) while adding~\(ac_1\) and~\(ac_2\), so essentially doing two swaps simultaneously. This decreases the distance cost of~\(a\) by~\(2\). For~\(c_1\) and~\(c_2\), the distance cost is reduced by~\(105 > \alpha\), so they also benefit from the change.
\end{proof}

\noindent\textbf{Bilateral $k$-Strong Equilibrium:} We show that $k$-BSE and BNE are not comparable.
\begin{proposition}
There is a graph~\(G\) which is in BNE but not in~\(2\)-BSE. 
\end{proposition}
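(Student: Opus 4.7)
The plan is to exhibit a specific graph $G$ together with an edge price $\alpha$ witnessing the separation. The structural difference to exploit is that a BNE move is driven by a single proposer~$u$ who alone may alter her strategy (removing a set $R\subseteq S_u$ of edges and adding a set $A\subseteq V\setminus S_u$ of new neighbors, each of whom must also benefit), whereas a 2-BSE move with coalition $\{u,v\}$ additionally lets $v$ drop her own incident edges to non-coalition members in parallel with $u$. Accordingly, the example should be engineered so that the only improving coalitional move essentially requires both $u$ and $v$ to drop an incident edge simultaneously, something no single-proposer BNE move can reproduce.

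Concretely, I would place two distinguished agents $u$ and $v$ in a non-tree graph with the following three features: (a)~adding the edge $uv$ provides significant distance savings to both $u$ and $v$ and to their ``private'' neighborhoods, but by itself cannot justify the two $\alpha$-payments required to install it; (b)~each of $u$ and $v$ owns a single ``hub'' edge whose isolated deletion is too costly in distance to be profitable under the given $\alpha$ (so Remove Equilibrium, and hence BNE, is preserved against unilateral drops), yet which becomes redundant once $uv$ is in place because the former hub-neighbor is still reachable through the partner's remaining chain; and (c)~the graph carries enough redundant connectivity that simultaneously removing both hub-edges does not disconnect it. The coalition $\{u,v\}$ then improves by adding $uv$ while each drops her hub-edge: the two dropped $\alpha$-payments exactly cancel the two added $\alpha$-payments creating $uv$, and the joint distance savings remain, so the move is strictly beneficial for both.

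To verify BNE, I would enumerate the candidate single-proposer moves. The critical case is the one mimicking the coalitional swap: with $u$ as proposer, $R$ equal to $u$'s hub-edge, and $A=\{v\}$. The parameters will be set so that $v$'s cost strictly increases in this move: $v$ still owns her hub-edge, so the $\alpha$-payment she must put up for $uv$ is uncancelled, and the parameters ensure that her private distance savings from the new shortcut are strictly below $\alpha$. The symmetric case with $v$ as proposer is analogous. Pure additions, pure removals, and multi-add moves $|A|\ge 2$ are ruled out by direct cost arithmetic in the spirit of the checks carried out in \Cref{sec:swap_on_tree,sec:greedy_on_tree,sec:neighborhood_on_tree}, combined with an appropriate choice of the branching parameters.

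The main obstacle is the simultaneous calibration: the graph must be dense enough that the two hub-edges can be dropped in parallel without disconnecting it, yet sparse and balanced enough that no unilateral drop is improving, and $\alpha$ must lie in an interval that blocks every BNE move while still admitting the coalitional swap. I expect the cleanest witness to be a small graph of fewer than a dozen nodes, for which all BNE checks are a finite and manageable case analysis; the creative step is selecting the topology so that the interplay between the two hub-edges and the redundant detour through the partner is simultaneously favourable for the coalition and unfavourable for every single proposer.
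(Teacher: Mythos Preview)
Your proposal is correct and follows essentially the same route as the paper: the paper's witness is a ten-node non-tree graph in which two symmetric agents $a_1,a_3$ improve by simultaneously dropping their respective edges $a_1c_1$ and $a_3c_2$ while adding $a_1a_3$, keeping their buying cost unchanged but gaining distance; the BNE verification is exactly the finite case analysis you outline, with the key obstruction being that when only one of the two acts as proposer, the other's uncancelled $\alpha$-payment for the new edge exceeds her distance gain. What remains is precisely the execution you anticipate---choosing the concrete graph and $\alpha$ and carrying out the enumeration---which the paper does explicitly.
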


\begin{proof}
Consider graph~\(G\) in \Cref{fig:example_graph_neig_eq_not_2_eq}. We start by showing that it is in BNE.
\begin{figure}[h]
  \centering
  \includegraphics[width=0.2\textwidth]{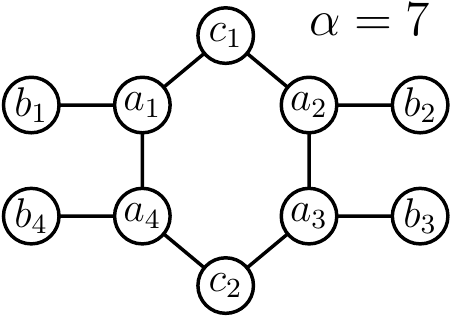}
\caption{This graph is a Neighborhood Equilibrium but not a 2-Strong Equilibrium.}
\label{fig:example_graph_neig_eq_not_2_eq}
\end{figure}
For brevity, we define the sets~\(A=\autoset{a_i}_{i\in[4]}\), \(B=\autoset{b_i}_{i\in[4]}\), and~\(C=\autoset{c_i}_{i\in[2]}\). For each of those sets, the nodes are symmetrical to all others in the set, which keeps the number of cases we need to consider small.

The initial distance costs are~\(\Dist{a_1} = 19\),~\(\Dist{b_1} = 27\), and~\(\Dist{c_1} = 19\). As there are~\(10\) nodes, the trivial distance cost lower bound for a node~\(u\in V\) is~\(2\cdot 9 - \card{S_v}\), this will be useful when limiting the number of additional edges the agents are willing to buy.

We start by showing that the nodes in~\(A\) and~\(C\) are not willing to cooperate by serving as new edge targets in a neighborhood change around some other node. To show this, we will utilize the following upper bound on the distance cost reduction.
Consider a neighborhood change around a node~\(u\in V\), where the change adds an edge to~\(v\in V\). Then, the distance cost reduction of~\(v\) is at most 
\[\sum_{w\in V}\maxset{0, \dist{v}{w}-2}+1.\] 
The reason for this is, if the distance to~\(w\in V\setminus\autoset{u}\) decreases by the change, then the new shortest path contains~\(u\), so the new distance is at least~\(2\). Only the distance to~\(u\) is decreased to~\(1\), which is the origin for the addition of~\(1\) in the term.

For~\(a\in A\), there are two nodes at distance~\(3\) and one at distance~\(4\), so the maximum distance reduction~\(a_1\) can hope to achieve by participating in a neighborhood change around a different node is~\(5\), which is less than the increase in buying cost for the extra edge. For~\(c\in C\), there are~\(3\) nodes at distance~\(3\), so the maximum distance cost reduction is even lower. Consequently, the nodes of~\(A\) and~\(C\) are not willing to act as new edge partners.

With that in mind, we now consider the different potential centers of the neighborhood changes.

First, consider a change around~\(a\in A\), without loss of generality let~\(a=a_1\). Let~\(S_a'\) denote the new strategy. Even if agent~\(a\) would add connections to all nodes in~\(B\), its distance cost would decrease by only~\(6\), which implies that~\(\card{S_c'}\leq 3\).
Based on the options we excluded, we can assume that~\(S_a' \subseteq B\cup\autoset{a_4, c_1}\) and~\(b_1\in S_a'\). Further, we observe that a connection to~\(a_4\) is never worse than a connection to~\(b_4\), and~\(c_1\) and~\(b_2\) are exchangeable. In the case~\(\card{S_a'}=1\), having a connection to~\(b_3\) is the best option, but that is still not profitable for agent~\(a\) as its distance cost would increase to~\(28\). For~\(\card{S_a'}=3\), the only interesting scenario is~\(b_3\in S_a'\), in which case the connection to~\(a_4\) is preferable over one to~\(c_1\) (which is exchangeable with~\(b_2\)). Here, so for~\(S_a' = \autoset{b_1, b_3, a_4}\), the distance cost increases to~\(21\), so agent~\(a\) also does not benefit.

Second, consider a change around~\(b\in B\), without loss of generality let~\(b=b_1\). Let~\(S_b'\) denote the new strategy. Even if it were to connect to all other nodes in~\(B\), its distance cost would decrease by only~\(12\), which implies~\(\card{S_b'}\leq 2\).
We can exclude~\(\card{S_b'} = 1\) because the target node~\(v\in S_b'\) only decreases its distance to~\(b\), which is not worth paying for an extra edge.
If~\(a_1\in S_b'\), the best target node for the additional edge is~\(b_3\), but that only decreases the distance cost by~\(6\) which is insufficient.
At the same time, a connection to~\(a_1\) is preferable over~\(b_4\), so we can also exclude~\(b_4\in S_b'\). This only leaves the case~\(S_b' = \autoset{b_2, b_3}\) which is clearly not an improvement for~\(b_1\).

Finally, consider a change around~\(c\in C\), without loss of generality let~\(c=c_1\). Let~\(S_c'\) denote the new strategy. Even if~\(c\) would add connections to all nodes in~\(B\), its distance cost would decrease by only~\(6\), which implies~\(\card{S_c'}\leq 2\).
With regard to the graph topology, a connection to~\(a_1\) is preferable over~\(b_1\) and~\(b_4\), and~\(a_2\) is preferable over~\(b_2\) and~\(b_3\). Hence, there is no improving change for~\(\card{S_c'}=2\). On the other hand, removing an existing edge increases the distance cost to~\(31\).

Thus, the graph~\(G\) is in BNE. By contrast, the coalition~\(\autoset{a_1, a_3}\) can improve by removing~\(a_1c_1\) and~\(a_3c_2\) while adding~\(a_1a_3\). Consequently, the graph~\(G\) is not in~\(2\)-BSE.
\end{proof}
Since BNE is a subset of BGE, we get the following corollary.
\begin{corollary}
 \(2\)-BSE are a proper subset of BGE.
\end{corollary}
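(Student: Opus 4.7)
The plan is to prove the containment and the properness separately, with the bulk of the work being a careful type-checking of definitions rather than a new construction.

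For containment, I would show that every ``bad'' move forbidden by the BGE definition is a special case of a coalitional move forbidden by $2$-BSE, so stability against the latter implies stability against the former. Recall $\textrm{BGE} = \textrm{RE}\cap\textrm{BAE}\cap\textrm{BSwE}$. A single-edge removal (RE) corresponds to a coalition $\Gamma$ of size $1$ consisting of one endpoint of the removed edge, with $R$ that edge and $A=\emptyset$. A bilateral addition of $uv$ (BAE) corresponds to $\Gamma = \{u,v\}$ with $R = \emptyset$ and $A = \{uv\}$: the added edge has both endpoints in $\Gamma$, as required. A bilateral swap where $u$ replaces $v$ by $w$ in $S_u$ and $w$ adds $u$ to $S_w$ (BSwE) corresponds to $\Gamma = \{u,w\}$, $R = \{uv\}$ and $A = \{uw\}$: the deleted edge $uv$ satisfies $uv\cap \Gamma = \{u\}\neq\emptyset$ and the added edge $uw$ has both endpoints in $\Gamma$. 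In each of the three cases the improvement condition of BGE is that exactly the agents in $\Gamma$ strictly benefit, which matches the $2$-BSE improvement clause verbatim. Therefore any improving BGE-move would also be an improving $2$-BSE-move, and $2$-BSE $\subseteq$ BGE follows by contrapositive.

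For properness, I would invoke the proposition immediately preceding the corollary, which exhibits a graph $G$ in BNE that is not in $2$-BSE. Earlier in the appendix it is established that BNE is a subset of BGE, so $G$ lies in BGE $\setminus$ $2$-BSE, witnessing that the inclusion is strict.

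The one subtlety worth flagging, rather than a genuine obstacle, is the asymmetry in the swap case: agent $v$ loses the edge $uv$ even though $v\notin\Gamma$. This is legal under the $2$-BSE definition because only incidence to $\Gamma$ (not membership in $\Gamma$) is required for deleted edges, and only members of $\Gamma$ must strictly benefit. Once this matching of the two definitions is made explicit, both directions of the corollary follow without any further computation.
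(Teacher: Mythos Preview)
Your proposal is correct and mirrors the paper's approach: the paper already records that the containment $2\text{-BSE}\subseteq\text{BGE}$ ``follows directly from the definitions'' (you simply make this explicit by matching RE, BAE, and BSwE moves to size-$\leq 2$ coalitional moves), and for properness the paper derives the corollary exactly as you do, from the preceding proposition together with $\text{BNE}\subseteq\text{BGE}$. Your flagged subtlety about $v\notin\Gamma$ in the swap case is handled correctly.
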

Now we show that not all graphs in $k$-BSE are in BNE.
\begin{proposition}
For each~\(k\in\N_{\geq 2}\), there is a graph~\(G\) which is in~\(k\)-BSE but not in BNE. 
\end{proposition}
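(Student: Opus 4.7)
The plan is to construct, for each $k \geq 2$, an explicit graph $G_k$ that is in $k$-BSE but admits a beneficial BNE move, thereby generalizing the ``double-swap'' gadget used in the preceding proposition to separate BNE from $2$-BSE. The guiding principle is to design a graph where one agent $u$ wants to perform $k$ simultaneous edge swaps at once, in such a way that no subset of the swaps, and no other small-coalition move, is improving.

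The key reduction is the following observation. Any BNE-improving move by an agent $u$, consisting of removing a set $R \subseteq S_u$ and adding a set $A \subseteq V \setminus S_u$, is exactly realizable as a coalitional move of $\Gamma = \{u\} \cup A$ of size $|A|+1$: each added edge has both endpoints in $\Gamma$, each removed edge has $u \in \Gamma$ as an endpoint, and every member of $\Gamma$ strictly benefits by the definition of BNE. Hence, for $G_k$ to be in $k$-BSE but not BNE, the witnessing BNE move must have $|A| \geq k$, and we additionally must rule out every other coalitional move of size at most $k$.

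Concretely, I would give $u$ two ``rings'' of $k$ neighbors each: currently chosen neighbors $b_1, \dots, b_k$ (to be dropped) and candidate new partners $c_1, \dots, c_k$ (to be picked up). The $b_i$ and $c_i$ are embedded in a central subgraph with auxiliary nodes so that, by construction, (i) $u$'s distance cost strictly drops only when all $k$ swaps occur simultaneously --- each shortcut $uc_i$ is redundant while any $b_j$ remains, but becomes useful once all $b_j$ have been removed; (ii) each $c_i$ strictly benefits from the full simultaneous swap because the reshaped graph brings $c_i$ closer to a large portion of the vertex set via the newly well-connected $u$; and (iii) the parameters --- the edge price $\alpha$, the distances inside the central subgraph, and the number of auxiliary leaves attached to various nodes --- are tuned so that $u$ has nothing to gain from dropping fewer than $k$ of the $b_i$ or adding fewer than $k$ of the $c_i$. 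Symmetry among the $b_i$ and among the $c_i$ keeps the construction uniform in $k$.

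Verifying $k$-BSE then reduces to a finite case analysis with two non-trivial classes of candidate coalitional moves. The first class consists of ``partial'' variants of the BNE move taken over coalitions of size at most $k$ inside $\{u\} \cup \{c_1, \dots, c_k\}$; these are ruled out by the calibration in (i), which ensures $u$ gains nothing from a partial swap. The second class consists of coalitional moves not involving $u$, such as swaps or additions among the $b_i$, $c_i$, or auxiliary nodes; these are ruled out by keeping the background structure uniform enough that every non-$u$ node is locally near-optimal in a BGE-like sense, and by tuning $\alpha$ so that all such swaps are neutral rather than strictly improving. The main obstacle is exactly this second class: ensuring the construction is rigid enough that no unexpected small coalition outside $\{u, c_1, \dots, c_k\}$ discovers an improving move, which in practice produces the tightest constraint on the parameter window for $\alpha$.
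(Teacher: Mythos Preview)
Your opening reduction is correct --- a BNE move by $u$ that adds $A$ is realized by the coalition $\{u\}\cup A$, so any witnessing move must have $|A|\geq k$ --- but from there on you give only a plan, not a proof: you list properties (i)--(iii) your gadget should satisfy without exhibiting a graph that has them or verifying a single bound. The paper sidesteps your delicate ``exactly $k$ swaps'' calibration entirely. It takes a center $a$ attached to $i=20k$ identical length-$3$ paths (rows $b_j,c_j,d_j$), and the BNE move is $a$ rewiring \emph{all} $i$ edges from the $b$-layer to the $c$-layer at once. Because $i\gg k$, any coalition of size at most $k$ can intersect at most $k$ rows, and a uniform bound on the distance benefit obtainable per touched row shows no such coalition profits. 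The slack between $20k$ and $k$ is precisely what replaces your tight case analysis by a short global counting argument.

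Your tight approach also faces a concrete obstruction you do not address. In the $k$-BSE definition, deletions require only \emph{one} endpoint in the coalition, so the size-$k$ coalition $\{u,c_1,\dots,c_{k-1}\}$ may delete \emph{all} $k$ edges $ub_1,\dots,ub_k$ while adding only the $k-1$ edges $uc_1,\dots,uc_{k-1}$. This saves $u$ an $\alpha$ in buying cost; for $u$ not to benefit, the single missing edge $uc_k$ must by itself be worth more than $\alpha$ plus the full-swap gain, and by symmetry the same must hold for every $uc_i$. Engineering a symmetric gadget where each last $c$-edge is this valuable, yet no other size-$k$ coalition (including ones that also build edges among the $c_i$'s or involve auxiliary nodes) can exploit the structure, is exactly the ``main obstacle'' you name at the end and leave unresolved.
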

\begin{proof}
For a fixed~\(k\), consider the graph~\(G\) depicted in \Cref{fig:example_graph_k_se_not_neig_eq}. For brevity, we define the sets~\(B=\autoset{b_j}_{j\in[i]}\), \(C=\autoset{c_j}_{j\in[i]}\), and \(D=\autoset{d_j}_{j\in[i]}\).
\begin{figure}[h]
  \centering
  \includegraphics[width=0.2\textwidth,page=1]{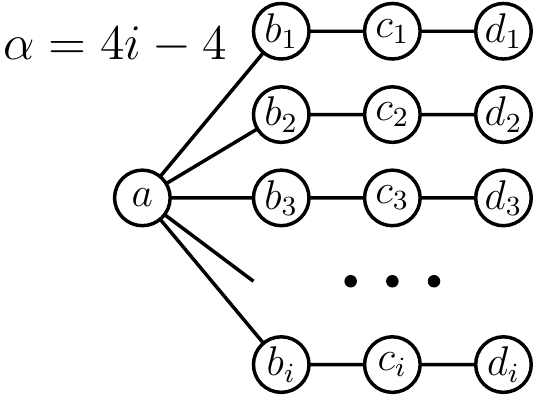}
\caption{For~\(k\in\N_{\geq 2}\) and~\(i=20k\), this graph is in~\(k\)-BSE but not in BNE.}
\label{fig:example_graph_k_se_not_neig_eq}
\end{figure}

We start by showing that~\(G\) is in~\(k\)-BSE.
Assume towards a contradiction a coalition~\(\Gamma\subset V\) of size at most~\(k\) which transforms the graph into~\(G'\) such that all members of the coalition reduce their cost.

For an agent~\(u\in V\), we account for two ways of reducing the distance cost. The first way is getting closer to~\(a\), where getting~\(x\in\N\) hops closer to~\(a\) reduces the distance cost by at most~\(xn\).
The other way of reducing the distance cost is based on getting closer to a row~\(R_j = \autoset{b_j, c_j, d_j}\) with~\(j\in[i]\) via a path not containing~\(a\) (unless~\(u=a\)). This is only viable if there is a node~\(v\in R_j\) which is part of the coalition~\(v\in\Gamma\). Seeing that the distance to~\(R_j\) in~\(G\) is at most~\(\dist{u}{R_j}\leq 15\), it can be reduced by at most~\(12\). So, the total distance reduction of this accounting category is upper bounded by~\(12k\).

With that in mind, the distance cost reduction for agent~\(b\in B\) is upper bounded by~\(12k\) and for agent~\(c\in C\) by~\(n+12k\), which implies that neither is willing to increase her buying cost.
On the other hand, agent~\(d\in D\) might be willing to pay for an extra edge if she gets a direct connection to agent~\(a\), but not otherwise.
This already limits the maximum degree of all nodes but node~\(a\) in~\(G'\) to~\(2\).
The agent~\(a\) cannot decrease the distance to herself, so we also know that agent~\(a\) is not willing to increase her buying cost.

If~\(a\notin\Gamma\), then no agent is willing to increase her degree. Since the number of edges must not decrease to ensure that~\(G'\) is connected, this even implies that each node has exactly the same degree in~\(G'\) as it did in~\(G\), so the agents can only hope to change their distance costs.
For all~\(b\in B\), this means they maintain their edges to node~\(a\), as increasing the distance to node~\(a\) and by extension all rows outside of the coalition outweighs any potential distance cost reduction within the coalition rows. The nodes~\(d\in D\) are the only one with degree~\(1\), so they remain the ends of each row. Finally, the nodes~\(c\in C\) insist on being connected to a node from~\(B\) to maintain their distance to node~\(a\), so they are also not willing to migrate to other rows beyond swapping places among each other (which evidently cannot be mutually beneficial). Consequently, there is no mutually beneficial coalitional move without agent~\(a\).

It remains to consider~\(a\in\Gamma\). As agent~\(a\) does not buy more edges than before and the degree of all other nodes is at most~\(2\) in~\(G'\), it is not possible for agent~\(a\) to decrease her distance cost since the paths attached to node~\(a\) are already perfectly balanced in size in~\(G\). This implies that agent~\(a\) needs to decrease her buying cost to benefit from the coalition.

Now, we conclude that the number of nodes with degree~\(1\) has to increase if agent~\(a\) decreases her degree. Each agent~\(b\in B\) which has her connection to node~\(a\) being removed, must have degree~\(1\) in~\(G'\) for the following reason. If~\(b\notin\Gamma\), she does not establish new connections, so losing the connection to node~\(a\) decreases the degree of node~\(b\). If~\(b\in\Gamma\), her distance cost increases from the change, so agent~\(b\) can only benefit from the change if her buying cost is reduced. On the other hand, the degree of~\(d\in D\) might be increased if agent~\(a\) builds an edge towards her, but that does not impact the number of nodes with degree~\(1\) as each new connection to a node~\(d\in D\) means that agent~\(a\) abandons an existing~\(b\in B\).

Finally, this means the number of nodes with degree~\(1\) is larger in~\(G'\) and therefore the number of nodes with degree~\(2\) must be smaller. At the same time, the degree of node~\(a\) decreases. Then, the graph~\(G'\) cannot be connected because it has less than~\(n-1\) edges, so the coalitional change is not beneficial.

In conclusion, this means that~\(G\) is in~\(k\)-BSE. On the other hand, there is a mutually beneficial neighborhood change around node~\(a\), in which all edges towards the nodes in~\(B\) are removed and edges towards all nodes in~\(C\) are created. This allows agent~\(a\) to reduce her distance cost while maintaining her buying cost. For agent~\(c\in C\), this change is beneficial because it decreases her distance cost from~\(4+12(i-1)\) to~\(3+8(i-1)\), so she improves by~\(1 + 4(i-1) > \alpha\).
\end{proof}

\noindent\textbf{Unilateral versus Bilateral Equilibria:}
We compare the NCG and the BNCG and with regard to Remove Equilibria and Add Equilibria in order to better understand their differences. In the unilateral NCG, the corresponding definition of an Add Equilibrium considers that a single agent might add a single edge without any strategy changes of other agents.

For simplicity, we assume for the NCG that each edge of the graph~\(G\) is owned by exactly one incident agent. This allows us to model the \emph{edge assignment} as a function~\(f:E\to V\), where each edge is mapped to one of its incident nodes. Under these assumptions, a graph~\(G\) and edge assignment~\(f\) completely capture the strategy vector of the NCG.

\thmone*
\begin{proof}
The first claim is shown by Corbo and Parkes~\cite[Proof of Proposition 5]{corbo2005price}.

For the other direction, consider the graph $G$ in~\Cref{fig:example_graph_add_eq_not_multi_add_eq}. 
\begin{figure}[h]
  \centering
  \includegraphics[width=0.2\textwidth]{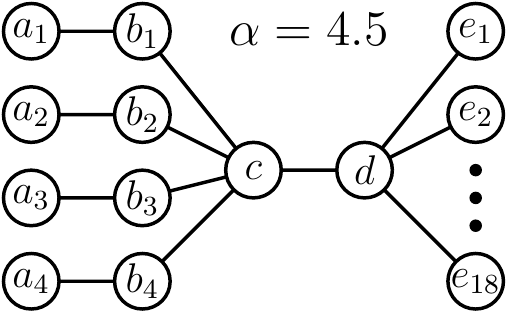}
\caption{An example of a graph that is in BAE but not in AE for the unilateral NCG.}
\label{fig:example_graph_add_eq_not_multi_add_eq}
\end{figure}
We argue that there are no single edges which can be added mutually beneficially. For brevity, we define the node subsets \(V_A=\autoset{a_i}_{i\in[4]}\), \(V_B=\autoset{b_i}_{i\in[4]}\) and \(V_E=\autoset{e_i}_{i\in[18]}\). Going through the different node types, we see that there is no mutually beneficial edge to be added.
\begin{itemize}
\item Node~\(d\) does not want to connect to anyone, as connecting to~\(a\in V_A\) or~\(b\in V_B\) only reduces its distance cost by~\(2\).
\item Node~\(c\) does not want to connect to anyone, as connecting to~\(a\in V_A\) or~\(e\in V_E\) only reduces its distance cost by~\(1\).
\item Node~\(b\in V_B\) does not want to connect to~\(e\in V_E\),~\(a\in V_A\) and~\(b'\in V_B\), as that reduces its distance cost by at most~\(2\).
\item Node~\(e\in V_E\) does not want to connect to~\(a\in V_A\) as that only reduces its distance cost by~\(4\). Connecting to~\(e\in V_E\) reduces the distance cost by~\(1\).
\item Node~\(a\in V_A\) does not want to connect with~\(a_j\) as it only reduces the distance cost by~\(4\).
\end{itemize}
Consequently,~\(G\) is in BAE.

However, agent~\(a_1\) benefits from buying~\(a_1d\), so $G$ with any edge assignment is not an Add Equilibrium in the unilateral NCG.
\end{proof}

\thmtwo*
\begin{proof}
Let~\(G\) be a Remove Equilibrium in the BNCG and consider an arbitrary edge assignment~\(f\). For any edge~\(uv\in E\), consider the owner~\(u=f(uv)\) in the assignment. Since~\(u\) does not improve by removing edge~\(uv\) in the bilateral game, she also does not improve by removing it in the unilateral game, as the resulting buying cost reduction and distance cost increase are the same in both games. Hence, the graph~\(G\) with edge assignment~\(f\) is in unilateral Remove Equilibrium.

For the other direction, let~\(G\) not be in Remove Equilibrium in the BNCG. Then there is an agent~\(u\in V\) and an edge~\(uv\in E\) such that agent~\(u\) improves by removing the edge~\(uv\).
Then, consider any edge assignment~\(f\) such that~\(f(uv)=u\). As agent~\(u\) owns~\(uv\) in this assignment, it has to pay for the edge~\(uv\) in the unilateral game and consequently benefits from removing it. Hence, the graph~\(G\) with edge assignment~\(f\) is not in Remove Equilibrium in the unilateral NCG.
\end{proof}

\lemmafournew*
\begin{proof}
We show that~\(C_n\) is in BSE for \(\frac{n^2}{4}-(n-1)<\alpha<\frac{n(n-2)}{4}\), if \(n\) is even, and for \(\frac{(n+1)(n-1)}{4} - (n-1) < \alpha < \frac{(n+1)(n-1)}{4}\), if~\(n\) is odd. We already know from \cite{corbo2005price} that~\(C_n\) is in RE for this range of~\(\alpha\).
Assume for contradiction that there exists a coalition~\(\Gamma\) that transforms the graph into~\(G'\) such that all members of the coalition benefit from this change.

First, consider the case where~\(G'\) has a degree of at most~\(2\). If every node in~\(G'\) has degree~\(2\), then~\(G'\) is isomorphic to~\(C_n\) and no agent benefits from the change.
If there are nodes with a lower degree, we conclude that there are exactly two nodes with degree~\(1\) and every other node has degree~\(2\), as the graph needs to be connected. Consequently, the graph~\(G'\) consists of a single path and one of its end nodes~\(u\in V\) must be part of the coalition. However, the change from agent~\(u\)'s perspective is equivalent to unilaterally removing an incident edge in~\(C_n\), which cannot be beneficial as~\(C_n\) is in RE. Thus, there is no improving coalitional move where~\(G'\) has degree~\(2\).

Thus, we can assume that there is a node~\(u\in V\) whose degree in~\(G'\) is larger than~\(2\). This agent~\(u\) must be part of the coalition~\(\Gamma\) and her buying cost increased by at least~\(\alpha\).
If~\(n\) is odd, we have \(\Dist[C_n]{u} = \frac{(n-1)(n-2)}{4}\). Combining this with the trivial lower bound \(\Dist[G']{u} \geq n-1\) implies that the distance cost of agent~\(u\) is reduced by at most \(\frac{(n-1)(n+1)}{4} - (n-1) < \alpha\).
Analogously, if~\(n\) is even, we have \(\Dist[C_n]{u} = \frac{n^2}{4}\) and the distance cost is reduced by at most \(\frac{n^2}{4}-(n-1) < \alpha\). So, in both cases, the distance cost reduction is smaller than the additional buying cost.

In conclusion, there cannot be a coalitional move where all members of the coalition strictly decrease their cost, so~\(C_n\) is in BSE for the given range of~\(\alpha\).
\end{proof}

\section{Omitted Details from Section~\ref{sec:preliminaries}}\label{app:prelim}

For the unilateral NCG, Albers, Eilts, Even-Dar, Mansour, and Roditty~\cite{albers2014nash} showed that the social cost of a NE can be upper bounded based on the distance cost of any node in the graph. We show a similar result for RE in the BNCG, albeit with slightly weaker bounds on the social cost.
\begin{restatable}{lemma}{lemmafour}
\label{lemma:cost_bound_by_dist_in_eqRemove}
Let~\(G\) be in RE and connected, then for any node~\(u\in V\) the social cost of $G$ can be upper bounded by
\[\Cost{G} \leq 2(n-1)(\alpha + \Dist{u}).\qedhere\]
\end{restatable}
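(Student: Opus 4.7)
The plan is to bound $\Cost{G} = \buy{G} + \Dist{G}$ by separately estimating the total distance cost and the total buying cost, using a BFS spanning tree of $G$ rooted at $u$ as a reference.

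For the distance cost, I would invoke the triangle inequality $\dist{w}{v} \leq \dist{w}{u} + \dist{u}{v}$. For each $w \neq u$, summing over all $v \in V$ (with the cases $v = u$ and $v = w$ handled directly) gives $\Dist{w} \leq (n-2)\dist{u}{w} + \Dist{u}$. Summing again over $w$, including the trivial contribution from $w = u$, yields $\Dist{G} \leq 2(n-1)\Dist{u}$.

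For the buying cost, I would fix a BFS spanning tree $T$ of $G$ rooted at $u$; its $n-1$ edges contribute $2(n-1)\alpha$. For each extra edge $e = wv \in E \setminus T$, the RE condition at each endpoint $x\in\{w,v\}$ gives $\alpha \leq \Dist[G-e]{x} - \Dist[G]{x}$, and since $T \subseteq G - e$ we have $\Dist[G-e]{x} \leq \Dist[T]{x}$. Aggregating the per-edge bound over both endpoints of each extra edge would yield $2\alpha(|E| - (n-1)) \leq \Dist{T} - \Dist{G}$. Combined with the triangle-inequality bound $\Dist{T} \leq 2(n-1)\Dist{u}$ applied to the BFS tree (using that $\dist[T]{u}{v} = \dist{u}{v}$ for all $v$), this gives $\Cost{G} \leq 2(n-1)\alpha + \Dist{T} \leq 2(n-1)(\alpha + \Dist{u})$.

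The main obstacle is the aggregation step: naively summing the per-edge bounds over extras incident to a node $x$ might overcount the joint slack $\Dist[T]{x} - \Dist[G]{x}$. To handle this, I would argue pair-by-pair that, for each other node $y$, at most one incident extra of $x$ can increase $\dist{x}{y}$ when removed individually. Namely, if $\dist[G-e]{x}{y} > \dist[G]{x}{y}$ for some extra $e$ incident to $x$, then every shortest $x$-to-$y$ path in $G$ must begin with $e$, so no other incident extra of $x$ affects $\dist{x}{y}$. Summing over $y$ yields $\sum_{e\,:\,x\in e,\, e \in E \setminus T}(\Dist[G-e]{x} - \Dist[G]{x}) \leq \Dist[T]{x} - \Dist[G]{x}$, and summing over $x$ provides the required global bound.
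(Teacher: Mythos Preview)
Your argument is correct, but it takes a genuinely different route from the paper's. The paper invokes the equivalence of RE and NE (Proposition~\ref{theorem:remove_eq_is_nash_eq}): since a Remove Equilibrium is already a Nash Equilibrium, each agent~$v$ can drop \emph{all} of her non-tree edges simultaneously without improving, which immediately gives $\Cost[G]{v}\leq\Cost[T]{v}$ and hence $\Cost{G}\leq\Cost{T}$. The rest is just the triangle-inequality bound on~$\Dist{T}$. Your proof avoids that auxiliary equivalence and works directly from the single-edge RE condition, compensating with the combinatorial observation that for each pair~$(x,y)$ at most one extra edge incident to~$x$ can lie on every shortest $x$--$y$ path. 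This makes your argument self-contained, at the price of the extra per-pair bookkeeping; the paper's version is shorter but depends on the Corbo--Parkes observation hidden inside the RE\,=\,NE proposition. As a minor remark, your opening bound $\Dist{G}\leq 2(n-1)\Dist{u}$ is never actually used: once you establish $2\alpha(|E|-(n-1))\leq\Dist{T}-\Dist{G}$, the $\Dist{G}$ term cancels against the distance part of $\Cost{G}$, and only the bound on~$\Dist{T}$ matters.
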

\begin{proof}
Let~\(T\) denote a BFS-tree of~\(G\) from node~\(u\). We distinguish between tree edges~\(E_T\) and non-tree edges~\(E_G\setminus E_T\). Consider a node~\(v\in V\). Since RE and NE are equivalent in the BNCG, according to \Cref{theorem:remove_eq_is_nash_eq}, we consider the change where agent~\(v\) removes all incident non-tree edges. Her distance cost in the resulting graph would be at most~\(\Dist[T]{v}\), as all edges from~\(T\) would remain available.
This allows us to upper bound~\(\Cost[G]{v}\) by~\(\Cost[T]{v}\), because~\(G\) is in RE.
Applying this inequality to all nodes yields
\[\Cost{G} = \sum_{v\in V}\Cost[G]{v} \leq \sum_{v\in V}\Cost[T]{v} = \Cost{T}.\]

The proof concludes by upper bounding the social cost of~\(T\). To upper bound the total distance cost, we expand its definition \(\Dist{T} = \sum_{v\in V}\sum_{w\in V\setminus\autoset{v}}\dist[T]{v}{w}\) and then apply the triangle inequality to estimate~\(\dist[T]{v}{w} \leq \dist[T]{v}{u} + \dist[T]{u}{w}\).
Next, we rearrange the terms as follows
\begin{align*}
\Dist{T} &\leq \sum_{v\in V}\sum_{w\in V\setminus\autoset{v}}\left(\dist[T]{v}{u} + \dist[T]{u}{w}\right)\\
&= \sum_{v\in V}\Big((n-1)\dist[T]{v}{u} + \sum_{\mathclap{w\in V\setminus\autoset{v}}} \dist[T]{u}{w}\Big)\\
&= (n-1)\Dist[T]{u} + \sum_{v\in V}\sum_{w\in V\setminus\autoset{v}} \dist[T]{u}{w}\\
&= (n-1)\Dist[T]{u} + (n-1)\sum_{w\in V}\dist[T]{u}{w}\\
&= 2(n-1)\Dist[T]{u}.
\end{align*}
Finally, we add the buying cost for the~\(n-1\) edges in~\(T\) and get
\[\Cost{T} \leq 2(n-1)\alpha + 2(n-1)\Dist{u}.\qedhere\]
\end{proof}

In combination with the cost of the social optimum for~\(\alpha \geq 1\), \Cref{lemma:cost_bound_by_dist_in_eqRemove} allows us to upper bound the PoA by analyzing the distance cost of a single node. This is very powerful as this implies that it is sufficient to show a small distance cost for a single node. while the buying cost automatically follows suit.

\theoremfour*
\begin{proof}
For~\(\alpha\geq 1\), the social optimum~\(\opt\) has a total cost of~\(2(n-1)(\alpha+n-1)\).
Using \Cref{lemma:cost_bound_by_dist_in_eqRemove}, we upper bound \(\Cost{G}\) with \(2(n-1)(\alpha + \Dist{u})\).
By eliminating the common factor~\(2(n-1)\) from both sides of the division, we get
\[\poa{G} = \frac{\Cost{G}}{\Cost{OPT}} \leq \frac{\alpha + \Dist{u}}{\alpha + n-1}.\qedhere\]
\end{proof}
Fabrikant, Luthra, Maneva, Papadimitriou, and Shenker~\cite{fabrikant2003network} showed for the NCG that the maximum distance between two nodes in an Add Equilibrium is at most~\(2\sqrt{\alpha}\). The same bound applies in the BNCG. This and~\Cref{lemma:dist_u_bounds_poa_in_eqRemove} gives us an alternative proof for the PoA bound for PS shown by Corbo and Parkes~\cite{corbo2005price}.

\corollaryfive*
\begin{proof}
For any node~\(u\in V\), we can trivially derive~\(\Dist{u} < n^2\). Applying \Cref{lemma:dist_u_bounds_poa_in_eqRemove} yields
\[\poa{G} \leq \frac{\alpha + n^2}{\alpha + n-1} \leq 1 + \frac{n^2}{\alpha}. \qedhere\]
\end{proof}

\section{Omitted Details from Section~\ref{sec:swap_on_tree}}\label{app:swap_on_tree}

\begin{comment}
\lemmasix*
\begin{proof}
If there are two~\(T_u\)-1-medians~\(v_1, v_2\in T_u\), then we choose the one closer to \(u\), i.e., we choose~\(v\) with~\(\layer{v} = \minset{\layer{v_1}, \layer{v_2}}\). If~\(u=r\), we have~\(u=v\) and the claim holds.

Otherwise, there is a parent~\(p\in V\) of~\(u\). If~\(u\) is not a 1-median of~\(T_u\), then~\(p\) prefers swapping~\(pu\) for~\(pv\) since that would reduce~\(\dist{p}{T_u}\). 
For agent~\(v\), accepting the proposed swap decreases her distance to~\(r\) by~\(\dist{v}{p} - 1 = \layer{v} - \layer{u}\). By definition of the 1-median, the path from~\(v\) to at least~\(\frac{n}{2}\) nodes must contain~\(r\), hence we get~\((\layer{v} - \layer{u})\frac{n}{2} \leq \alpha\) as~\(G\) is in BSwE. Rearranging this inequality concludes the proof.
\end{proof}
\end{comment}

\lemmaseven*
\begin{proof}
We show this by induction over the cardinality of~\(T_u\). For~\(\card{T_u} = 1\), we have~\(\depth{T_u} = 0\) so the claim evidently holds.
Assume by induction that the claim holds for all subtrees of cardinality smaller than~\(\card{T_u}\). According to \Cref{lemma:close_median}, there is a~\(T_u\)-1-median~\(v\in T_u\) with~\(\layer{v} \leq \layer{u} + \frac{2\alpha}{n}\). By definition of the~\(T_u\)-1-median, for all~\(w\in T_u\) with~\(\layer{w} \geq \layer{v} + 1\) it holds that~\(\card{T_w} \leq \frac{\card{T_u}}{2}\) and, by induction, it follows that
\[\depth{T_w} \leq \left(1+\frac{2\alpha}{n}\right) \log \card{T_w} \leq \left(1+\frac{2\alpha}{n}\right) (\log \card{T_u} - 1).\] 
Thus, we can upper bound the depth of~\(T_u\) by
\[\depth{T_u} \leq 1 + \frac{2\alpha}{n} + \left(1+\frac{2\alpha}{n}\right) (\log \card{T_u} - 1) = \left(1+\frac{2\alpha}{n}\right) \log \card{T_u},\]
which concludes the induction and the proof.
\end{proof}

\lemmaeight*
\begin{proof}
If~\(\alpha \geq \frac{n}{2}\), the claim holds by definition of the 1-median~\(r\), so we assume~\(\alpha < \frac{n}{2}\) for the remainder of the proof.
Assume towards a contradiction that~\(\card{T_u} > \frac{\alpha}{\layer{u} - 1}\) and let~\(p\in V\) denote the parent of~\(u\). We argue that~\(u\) would swap the edge~\(up\) for~\(ur\) in this case.

Let~\(a\in V\) be the child of~\(r\) such that~\(u\in T_a\).
The swap is beneficial for~\(u\) as she gets closer by~\(\layer{u}-1\) to at least~\(\card{V\setminus S_a} \geq \frac{n}{2}\) nodes, while her distance to at most~\(\card{T_a\setminus T_u} < \frac{n}{2}\) nodes increases by at most~\(\layer{u}-1\).
For the root~\(r\), the distance cost decreases by~\(\card{T_u} (\layer{u} - 1)> \alpha\), which outweighs the increase in buying cost. Hence,~\(G\) is not in BSwE.
\end{proof}

\section{Omitted Details from Section~\ref{sec:greedy_on_tree}}\label{appendix:BGE}

\theoremten*
\begin{proof}
We already established that $2$-BSE is a subset of BGE, so it remains to show that every graph in BGE is also in $2$-BSE.

We show the contrapositive.
Assume~\(G\) is not in $2$-BSE. If~\(G\) is not in RE, it is also not in BGE. 
If~\(G\) is in RE, then there must be a coalition of two nodes~\(u,v\in V\) who can improve by building an edge between each other and potentially deleting incident edges. To maintain connectivity of the tree, they must not remove more edges than they add. Thus, the coalitional move is either adding a single edge or a swap around~\(u\) or~\(v\). Either way, this implies that~\(G\) is not in BGE.
\end{proof}

Now we introduce the \emph{stretched binary tree} that will be used for the PoA bound of~\(\Omega(\log\alpha)\).
A \emph{stretched binary tree}~\(T\) with parameters~\(d\in\N\) and~\(k\in\N_{\geq 1}\) is defined as follows.
Let~\(B\) be a complete binary tree of depth~\(d\) with root~\(r\). For~\(u\in V_B\setminus\autoset{r}\), we define~\(P_u = \autoset{u^i}_{i\in[k-1]}\cup\autoset{u}\) and define the node set of~\(T\) as~\(V_T = \autoset{r}\cup\bigcup_{u\in V_B\setminus\autoset{r}}P_u\). For~\(uv\in E_B\), where~\(u\) is the parent, the tree~\(T\) contains the edges~\(uv^1, v^1v^2,\dots,v^{k-1}v\). See \Cref{fig:stretched_binary_tree} for an example.

The resulting graph has~\(\left(\card{V_B}-1\right)k+1 = \left(2^{d+1}-2\right)k+1\) nodes and is a tree. For~\(u,v\in V_B\), it holds that~\(\dist[T]{u}{v} = k\cdot\dist[B]{u}{v}\) and thus~\(\depth{T} = k\cdot\depth{B}\).

The next lemma will be useful later, when computing a lower bound on the PoA.

\begin{lemma}
\label{lemma:average_depth_in_stretched_tree}
Let~\(T\) be a stretched binary tree with parameters \(d, k\in\N\), then the average layer of the nodes in~\(T\) is lower bounded by~\(k\left(d - \frac{3}{2}\right)\).
\end{lemma}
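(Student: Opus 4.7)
The plan is to compute the sum of layers $S = \sum_{u \in V_T}\layer{u}$ in closed form and then divide by $n = (2^{d+1}-2)k+1$. The bound on the average then becomes a purely arithmetic inequality.

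First I would partition $V_T$ according to the construction of $T$ from the underlying complete binary tree $B$. There are three types of nodes: the root $r$, at layer $0$; the ``original'' non-root nodes $v \in V_B \setminus \{r\}$, each sitting at layer $k\cdot\ell_B(v)$ in $T$ (since every edge of $B$ on the root-to-$v$ path is replaced by a path of $k$ edges); and for every such $v$, the $k-1$ subdivision points $v^1,\dots,v^{k-1}$ on the subdivided edge from the parent of $v$ to $v$, which occupy consecutive layers $k(\ell_B(v)-1)+1, \dots, k\,\ell_B(v)-1$ in $T$.

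Second, I would sum the layer contributions type by type. The subdivision points on one edge contribute an arithmetic progression $\sum_{i=1}^{k-1}\bigl(k(\ell_B(v)-1)+i\bigr) = k(k-1)(\ell_B(v)-1) + \tfrac{k(k-1)}{2}$, while the original node $v$ itself contributes $k\,\ell_B(v)$. Summing these over $v \in V_B \setminus \{r\}$ (plus the root) and telescoping the $(\ell_B(v)-1)$ terms against $\ell_B(v)$ reduces everything to the quantity $\sigma_B := \sum_{v \in V_B}\ell_B(v)$. After simplification I expect the clean identity
\[ S \;=\; k^2\,\sigma_B \;-\; \frac{k(k-1)}{2}\bigl(|V_B|-1\bigr).\]

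Third, I would plug in the standard closed forms $|V_B|-1 = 2^{d+1}-2$ and $\sigma_B = (d-1)\,2^{d+1}+2$ (both by routine induction on $d$, or by differentiating the geometric series) together with $n = (2^{d+1}-2)k+1$, and verify algebraically that $S - k(d-\tfrac{3}{2})\,n$ simplifies to a manifestly nonnegative expression; in fact I expect a sum of the form $\tfrac{k(2^{d+1}-2)}{2} + kd(2k-1) + \tfrac{3k}{2}$, each of whose terms is nonnegative for $k\ge 1$ and $d \ge 0$. Dividing by $n$ then yields the lower bound $k(d-\tfrac{3}{2})$ on the average layer.

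The only real obstacle is bookkeeping: keeping the subdivision indexing consistent with the paper's convention that $v^1$ is adjacent to the parent of $v$ and $v$ itself plays the role of $v^k$, and being careful not to double-count the root or the original nodes when summing over the subdivided edges. Once the identity for $S$ is established, the remainder is a short, direct calculation.
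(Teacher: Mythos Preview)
Your proposal is correct and follows essentially the same route as the paper: partition $V_T$ into the root together with the stretched paths $P_v$ for $v\in V_B\setminus\{r\}$, sum the layers along each path as an arithmetic progression, and then aggregate over the layer structure of the underlying complete binary tree $B$. The only difference is bookkeeping precision: the paper lower-bounds the average layer on each $P_v$ by $(\ell_B(v)-\tfrac12)k$, drops a $+3$ after summing, and divides by the upper bound $n\le k\,2^{d+1}$, whereas you carry the exact identity $S=k^2\sigma_B-\tfrac{k(k-1)}{2}(|V_B|-1)$ all the way through and verify $S-k(d-\tfrac32)n\ge 0$ directly. Both land on the same bound with the same decomposition.
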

\begin{proof}
Let~\(B\) be the original binary tree of depth~\(d\).
We start by computing the sum of the layers~\(\sum_{u\in V_T}\layer[T]{v}\). As the root~\(r\) has layer~\(0\), we can ignore it. We partition the remaining nodes according to their paths, getting~\(\sum_{u\in V_B\setminus\autoset{r}}\sum_{v\in P_u}\layer[T]{v}\).

For a node~\(u\in V_B\setminus\autoset{r}\), there are~\(k\) nodes in~\(P_u\). The layers on the path range from \((\layer[B]{u} - 1)k + 1\) for~\(u^1\) up to~\(\layer[B]{u}k\) for~\(u\), so the average layer in~\(P_u\) is at least~\((\layer[B]{u}-\frac{1}{2})k\). Consequently, it holds that \[\sum_{v\in P_u}\layer[T]{v} \geq k\left(\layer[B]{u}-\frac{1}{2}\right)k.\] For~\(i\in[d]\), there are~\(2^i\) nodes in layer~\(i\) of~\(B\), so we get the inequality
\[\sum_{u\in V_T}\layer[T]{v} \geq k^2\sum_{i=1}^{d}2^i\left(i-\frac{1}{2}\right).\]
By deriving a closed form for the right hand side formula, we get
\[\sum_{u\in V_T}\layer[T]{v} \geq k^2(2^d(2d-3)+3) \geq k^22^{d}(2d-3).\]
From this, we now compute a lower bound for the average node layer, by dividing by the number of nodes~\((2^{d+1}-2)k+1 \leq k2^{d+1}\), yielding
\[\frac{k^22^{d}(2d-3)}{k2^{d+1}} \geq k\left(d - \frac{3}{2}\right).\qedhere\]
\end{proof}

Next, we show two lemmas to simplify the upcoming proofs for BAE and BSwE. These lemmas will be useful to reduce the number of cases we have to consider.
\begin{lemma}
\label{lemma:binary_tree_connection_level}
Let~\(T\) be a~\(k\)-stretched binary tree based on a complete binary tree~\(B\). Consider two nodes~\(u\in V_T\) and~\(v\in V_B\) such that~\(u\notin T_v\). Let~\(p,c\in T_v\) be nodes such that~\(p\) is the parent of~\(c\) and~\(\dist{c}{v} < \dist{u}{v}\). Then, agent~\(u\) prefers adding the edge~\(up\) over adding the edge~\(uc\). In particular, for~\(\dist{u}{v} > k\) holds that agent~\(u\) prefers connecting to~\(v\) over connecting to any other node in~\(T_v\).
\end{lemma}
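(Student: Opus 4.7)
The plan is to compare agent $u$'s total cost after adding edge $up$ versus adding edge $uc$: both options incur the same buying cost $\alpha$, so $u$ strictly prefers $up$ iff her distance sum decreases strictly more. After adding edge $ue$ ($e\in\{p,c\}$), $u$'s effective distance to a node $w$ becomes $\min(\dist{u}{w},1+\dist{e}{w})$, so for each $w\in V$ I will analyze the per-node quantity
\[
D_w := \min(\dist{u}{w},1+\dist{c}{w}) - \min(\dist{u}{w},1+\dist{p}{w}),
\]
and show $\sum_{w} D_w \geq 1$.

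Partition $V$ into $T_c$, $T_v\setminus T_c$, and $V\setminus T_v$. For $w\in T_c$, $\dist{p}{w}=\dist{c}{w}+1$; since $u\notin T_v$, the $u$-to-$w$ path is forced through $v$ and then through $c$, giving $\dist{u}{w}=\dist{u}{v}+\dist{v}{c}+\dist{c}{w}$. The hypothesis forces $\dist{u}{v}\geq\dist{c}{v}+1\geq 2$, and $c\in T_v$ being a proper descendant of $v$ (otherwise $p\notin T_v$) forces $\dist{v}{c}\geq 1$, so $\dist{u}{w}\geq 3+\dist{c}{w}>2+\dist{c}{w}$. Both minima therefore collapse onto the shortcut and $D_w=-1$. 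For $w\in T_v\setminus T_c$, $\dist{c}{w}=\dist{p}{w}+1$; using the triangle inequality and the hypothesis $\dist{p}{v}=\dist{c}{v}-1\leq\dist{u}{v}-2$, I get $1+\dist{p}{w}\leq\dist{u}{v}-1+\dist{v}{w}<\dist{u}{v}+\dist{v}{w}=\dist{u}{w}$, which by integrality forces $\dist{u}{w}\geq 2+\dist{p}{w}$, so the $p$-side equals $1+\dist{p}{w}$ and the $c$-side equals $2+\dist{p}{w}$, giving $D_w=+1$. For $w\notin T_v$ an identical case split shows $D_w\in\{0,1\}$.

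Summing, $\sum_w D_w \geq -|T_c|+(|T_v|-|T_c|)+0 = |T_v|-2|T_c|$. The decisive structural step is that in a $k$-stretched binary tree, any $v\in V_B$ and any proper descendant $c\in T_v$ in $T$ satisfy $|T_v|\geq 2|T_c|+1$: the nearest branching $B$-ancestor $a$ of $c$ inside $T_v$ has exactly two symmetric, equally-sized stretched subtrees as its $T$-children, and $c$ lies inside one of them, so the "sibling" subtree alone contains at least $|T_c|$ nodes and $a$ itself adds one more. This yields $\sum_w D_w \geq 1>0$, proving the first statement.

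For the ``in particular'' claim, I apply the first statement iteratively along the tree path from $v$ to an arbitrary $x\in T_v\setminus\{v\}$. Along a $v$-to-$B$-child path in $T$ (of length exactly $k$) each intermediate node $c$ satisfies $\dist{c}{v}\leq k<\dist{u}{v}$, so the first statement chains through $v_1^{1},\dots,v_1^{k-1},v_1$ to give $uv\succeq uv_1$ for each $B$-child $v_1$ of $v$. For deeper $x\in T_{v_1}$, using $u\notin T_{v_1}$ and $\dist{u}{v_1}=\dist{u}{v}+k>k$, induction on the $B$-depth of the base node extends this to $uv_1\succeq ux$, so $uv\succeq ux$. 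The principal obstacle is the careful per-node bookkeeping for $w\in T_c$: verifying that both minima are attained at the shortcut side is exactly where the hypothesis $\dist{c}{v}<\dist{u}{v}$ becomes essential, and the structural inequality $|T_v|\geq 2|T_c|+1$ is where the ``$v\in V_B$'' assumption and the balanced nature of the stretched tree genuinely enter.
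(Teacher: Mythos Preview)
Your proof is correct and follows essentially the same approach as the paper: both partition $V$ into $T_c$, $T_v\setminus T_c$, and $V\setminus T_v$, compute the per-node distance difference on each part, and finish via $|T_v\setminus T_c|>|T_c|$. You supply a bit more detail than the paper (an explicit structural argument for $|T_v|\geq 2|T_c|+1$, and an inductive chain for the ``in particular'' clause where the paper simply notes that a best option must exist and cannot be any $c\neq v$), but the underlying idea is the same.
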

\begin{proof}
We show that~\(\Dist[T+up]{u} < \Dist[T+uc]{u}\). For any node \(w\in V_T\setminus T_v\), it holds that~\(u\) is not closer to~\(w\) in~\(T+uc\) than it is in~\(T+up\). Consequently, it suffices to consider the nodes in~\(T_v\).

For all nodes~\(w\in T_c\), we have~\[\dist[T+up]{u}{w} = \dist[T+uc]{u}{w} + 1.\]
Since we assume that \(\dist{c}{v} < \dist{u}{v}\) and therefore also \(\dist{p}{v} + 1 < \dist{u}{v}\), we get that \[\dist[T+up]{u}{v} = \dist[T+uc]{u}{v} - 1\] and thus for \(w\in T_v\setminus\autoset{T_c}\) we get \[\dist[T+up]{u}{w} = \dist[T+uc]{u}{w} - 1.\] Then the claim holds because~\(\card{T_v\setminus\autoset{T_c}}\) is larger than~\(\card{T_c}\).

For the case \(\dist{u}{v} > k\), let \(c\in T_v\) be any node of the subtree, it is not restricted by \(\dist{c}{v}\geq \dist{u}{v}\). Let \(v'\in V_B\) denote the closest real ancestor of \(c\) included in \(B\). Then, we have \(\dist{c}{v'} \leq k\) and \(\dist{u}{v'} \geq \dist{u}{v} > k\), so agent~\(u\) prefers connecting to the parent of~\(c\). Hence, no node other than \(v\) can be the best option, which finishes the proof as there needs to be a best option.
\end{proof}

\begin{lemma}
\label{lemma:add_edge_across_subtrees}
Let~\(G\) be a tree with root \(r\) and let \(a, b\in \neig{=1}{r}\) be children of $r$ such that there is a layer-preserving isomorphism~\(f\) from~\(T_a\) to ~\(T_b\). If there are~\(u,v\in T_a\) with~\(\layer{u} \leq \layer{v}\), then~\(u\) prefers adding an edge to~\(v'=f(v)\) over adding an edge to~\(v\).
\end{lemma}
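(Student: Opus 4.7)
The plan is to show that adding the edge $uv'$ leaves $u$ with total distance cost no larger than adding the edge $uv$; since both edges incur the same buying cost $\alpha$ for $u$, this establishes the claimed preference. The key idea is to exploit the tree structure to split $\Dist{u}$ into contributions from $T_a$, $T_b$, $\{r\}$, and the remaining subtrees $R := V\setminus(T_a\cup T_b\cup\{r\})$, and then pair each $w\in T_a$ with its image $w'=f(w)\in T_b$ using the isomorphism $f$.

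First I would verify that neither $uv$ nor $uv'$ creates a new shortest path from $u$ to $r$ or to any node in $R$. The hypothesis $\layer{u}\leq\layer{v}=\layer{v'}$ is crucial: any path through $v$ or $v'$ needs at least $1+\layer{v}\geq 1+\layer{u}$ steps before reaching $r$, which is strictly worse than the existing distance $\layer{u}$ from $u$ to $r$; extending through $r$ inherits the same inefficiency for every node of $R$. Hence nodes in $\{r\}\cup R$ contribute identically to $\Dist{u}$ in both scenarios and cancel from the comparison.

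Next I would argue that within $T_a$ only $uv$ can act as a shortcut, and within $T_b$ only $uv'$ can. Indeed, for any $w'\in T_b$ the path from $u$ through $v$ must still cross $r$, so its length $1+\layer{v}+\layer{w'}\geq 1+\layer{u}+\layer{w'}$ does not beat the original $\dist[G]{u}{w'}=\layer{u}+\layer{w'}$; a symmetric estimate rules out $uv'$ as a shortcut inside $T_a$. The pointwise distance reduction from $uv$ at $w\in T_a$ is therefore $\max\bigl(0,\dist[G]{u}{w}-1-\dist{v}{w}\bigr)$, while the reduction from $uv'$ at $w'=f(w)$ is $\max\bigl(0,\dist[G]{u}{w'}-1-\dist{v'}{w'}\bigr)$. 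Using that $f$ preserves layers and subtree distances, the latter equals $\max\bigl(0,\layer{u}+\layer{w}-1-\dist{v}{w}\bigr)$. Since the unique $u$-to-$w$ path through $r$ yields $\dist[G]{u}{w}\leq\layer{u}+\layer{w}$, the $uv'$-reduction pointwise dominates the $uv$-reduction. Summing over the pairs gives $\Dist[G+uv']{u}\leq\Dist[G+uv]{u}$.

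The main obstacle is careful bookkeeping: one has to use the inequality $\layer{u}\leq\layer{v}$ at several points to confine the distance savings of $uv'$ entirely to $T_b$ (and of $uv$ entirely to $T_a$), so that the isomorphism $f$ supplies the direct pointwise comparison needed to conclude the proof.
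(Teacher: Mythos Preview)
Your proposal is correct and follows essentially the same approach as the paper: both arguments use the hypothesis \(\layer{u}\leq\layer{v}\) to confine the savings of \(uv\) to \(T_a\) and those of \(uv'\) to \(T_b\), pair each \(w\in T_a\) with \(w'=f(w)\) via the isomorphism, and compare the pointwise distance reductions. The only minor difference is that the paper uses the sharper bound \(\dist[G]{u}{w}\leq\layer{u}+\layer{w}-2\) (common ancestor \(a\)) to obtain a strict inequality, whereas your bound \(\dist[G]{u}{w}\leq\layer{u}+\layer{w}\) yields only weak preference; for the lemma's intended use this is immaterial.
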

\begin{proof}
The distance reductions discussed in this proof always refer to~\(u\)'s distance cost. From \(\layer{u} \leq \layer{v}\) follows that the distance from~\(u\) to~\(r\) is not reduced by adding~\(uv\), so it only decreases the distance to nodes in~\(T_a\). Analogously, adding~\(uv'\) only decreases the distance to nodes in~\(T_b\).

Let~\(w\in T_a\) denote a node for which the distance improves by adding~\(uv\). This implies that the shortest~\(u\)-\(w\)-path in~\(T+uv\) contains~\(v\).
Then, since~\(f\) is an isomorphism, this implies for the node~\(w' = f(w)\in T_b\) that~\(\dist[G+uv']{u}{w'}=\dist[G+uv]{u}{w}\).
We upper bound~\(\dist[G]{u}{w}\)  by~\(\layer{u}+\layer{w}-2\) as~\(u\) and~\(v\) have the common ancestor~\(a\). By contrast, the only path between~\(u\) and~\(w'\) contains~\(r\), from which follows that~\(\dist[G]{u}{w'} = \layer{u} + \layer{w'}\).
Consequently, the distance reduction achieved towards~\(w'\) by adding~\(uv'\) is larger than the distance reduction towards~\(w\) by adding~\(uv\). And since we have a unique~\(w'\) for each~\(w\in T_a\), the total distance reduction must also be larger when adding~\(uv'\).
\end{proof}

Now we prove a sufficient condition under which a stretched binary tree is in BAE.
\begin{restatable}{lemma}{lemmaeleven}
\label{lemma:stretched_binary_tree_in_eqAdd}
Let~\(T\) be a~\(k\)-stretched binary tree and~\(\alpha\geq 5kn\). For~\(u,v \in V\) with~\(\layer{u} \leq \layer{v}\), agent~\(u\) does not benefit from adding~\(uv\). This implies that~\(T\) is in BAE.
\end{restatable}
%\lemmaeleven*
\begin{proof}
We start by bounding~\(\dist{u}{v}\).
For any node~\(w\in V\), the distance from~\(u\) is only reduced by adding~\(uv\), if the shortest path in~\(T+uv\) contains~\(v\). This implies that \[\dist{u}{w} - \dist[T+uv]{u}{w} \leq \dist{u}{v}-1.\] Hence, for the total distance cost improvement to be greater than~\(\alpha\), it must hold that \[\dist{u}{v} > \frac{\alpha}{n} \geq 5k.\]

In the case~\(\layer{u}\leq k\), we can follow that \(\layer{v}> 4k\) based on the minimum distance requirement. Moreover, for~\(\layer{v} > 2k\), we know that~\(v\) is a worse candidate than its ancestor in layer~\(2k\), according to \Cref{lemma:binary_tree_connection_level}, and since we already excluded said ancestor, we can assume that~\(\layer{u} > k\) holds for the remainder of the proof.

Let~\(a,b\in V_B\) be the children of~\(r\) in~\(B\). By \Cref{lemma:add_edge_across_subtrees} and by symmetry, we can assume that~\(u\in T_a\) and~\(v\in T_b\). 
Furthermore, it suffices to consider the case~\(\layer{u} = \layer{v}\), according to \Cref{lemma:binary_tree_connection_level}.

Since~\(u\) does not get closer to~\(r\) by adding~\(uv\), she can only decrease her distance towards nodes in~\(T_b^1\). Let~\(v'\in V_B\cap T_b\) be the closest ancestor (including~\(v\)) of~\(v\) in~\(B\). We upper bound the distance reduction for~\(u\) when adding~\(uv'\) as it also limits the reduction for adding~\(uv\).

Let~\(i=\layer[B]{v'}\) and let~\(b=x_1, \dots, x_i=v'\) denote the path from~\(b\) to~\(v'\) in~\(B\). For~\(j\in[i-1]\), the distance to~\(x_j\) and hence to every node in~\(T_{x_j}^1\setminus T_{x_{j+1}}^1\) is reduced by at most
\begin{align*}
\dist{u}{x_j} - \dist[T+uv']{u}{x_j} &= \layer{u} + \layer{x_j} - \left(\layer{v'} - \layer{x_j} + 1\right)\\
 &= 2\layer{x_j} + \left(\layer{u} - \layer{v'}\right) - 1\\
 &\leq 2kj + k = 2k\left(j+\frac{1}{2}\right).
\end{align*}
With this, we can proceed to upper bound the distance benefit of agent~\(u\) as follows:
\begin{align*}
&\Dist[T]{u} - \Dist[T+uv']{u}\\
<{}& \sum_{j=1}^{i-1} 2k\left(j+\frac{1}{2}\right)\card{T_{x_j}^1\setminus T_{x_{j+1}}^1} + 2k\left(i+\frac{1}{2}\right)\card{T_{x_i}^1}\\
\leq{}& 2k\sum_{j=1}^{i} \left(j+\frac{1}{2}\right)\card{T_{x_j}^1}\\
\leq{}& 2k\sum_{j=1}^{i} \left(j+\frac{1}{2}\right)2^{-j}n\\
\leq{}& 2kn\sum_{j=1}^{\infty} \left(j+\frac{1}{2}\right)2^{-j}\\
={}& 5kn \leq \alpha.
\end{align*}
Finally, this means that the distance benefit is smaller than the cost~\(\alpha\) of adding the edge, so~\(u\) does not benefit from adding~\(uv'\) and, by extension, also not from adding~\(uv\). Consequently,~\(T\) is in BAE.
\end{proof}

The next lemmas will help us in the upcoming proof for BSwE.

\begin{lemma}
\label{lemma:binary_tree_swap_level}
Let~\(T\) be a~\(k\)-stretched binary tree with nodes~\(u,v\in V\) such that~\(u\) is the child of~\(v\). With~\(i\in\N\), let~\(x_1,\dots, x_i\in V_B\) be a downwards path in~\(B\) with~\(u\notin T_{x_1}\). Then, it holds that
\[\Dist[T - uv + ux_i]{u} \geq \Dist[T - uv + ux_1]{u} + k\card{T_{x_1}} (i-3).\qedhere\]
\end{lemma}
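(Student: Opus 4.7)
The plan is to derive $\Dist[T-uv+ux_i]{u}-\Dist[T-uv+ux_1]{u}$ via a telescoping argument that moves the new edge endpoint from $x_j$ down to $x_{j+1}$ one step at a time, and then to bound the resulting expression with two small structural facts about the stretched binary tree.

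First I would set up the geometry. Removing the edge $uv$ splits $T$ into the hanging subtree $T_u$ and its complement $V\setminus T_u$. The hypothesis $u\notin T_{x_1}$, together with the implicit requirement that the swap produce a tree (so $x_1\in V\setminus T_u$), makes $T_u$ and $T_{x_1}$ disjoint subsets of $V$, and therefore $x_1,\dots,x_i\in V\setminus T_u$. In each swapped tree $T-uv+ux_j$, internal distances within $T_u$ are unchanged, and for $w\in V\setminus T_u$ the only path from $u$ to $w$ uses the new edge, giving $\dist[T-uv+ux_j]{u}{w}=1+\dist[T]{x_j}{w}$.

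Next I would compare two consecutive swaps. Since $x_{j+1}$ is a child of $x_j$ in $B$, the $T$-path from $x_j$ to $x_{j+1}$ has length $k$, so for every $w\in V\setminus T_u$,
\[
  \dist[T]{x_{j+1}}{w}-\dist[T]{x_j}{w}=
  \begin{cases}
    -k & \text{if } w\in T_{x_{j+1}},\\
    +k & \text{otherwise}.
  \end{cases}
\]
Summing over $w\in V\setminus T_u$ yields $k(n-|T_u|)-2k|T_{x_{j+1}}|$, and telescoping over $j=1,\dots,i-1$ gives
\[
  \Dist[T-uv+ux_i]{u}-\Dist[T-uv+ux_1]{u}=(i-1)k(n-|T_u|)-2k\sum_{j=2}^{i}|T_{x_j}|.
\]

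Finally I would bound the right-hand side below by $k|T_{x_1}|(i-3)$ using two facts. (a) Since $T_u$ and $T_{x_1}$ are disjoint, $n-|T_u|\geq |T_{x_1}|$. (b) The subtree sizes satisfy $\sum_{j=2}^{i}|T_{x_j}|\leq |T_{x_1}|$; indeed, the complete binary structure of $B$ together with uniform $k$-stretching yields $|T_{x_j}|=2|T_{x_{j+1}}|+2k-1$, hence $|T_{x_{j+1}}|<|T_{x_j}|/2$, and a geometric series gives $\sum_{j\geq 2}|T_{x_j}|<2|T_{x_2}|=|T_{x_1}|-(2k-1)\leq |T_{x_1}|$. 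Plugging (a) and (b) into the telescoping identity yields $(i-1)k|T_{x_1}|-2k|T_{x_1}|=k(i-3)|T_{x_1}|$, as required. The main obstacle is verifying the per-step identity carefully: one must argue that distances in $T$ restricted to $V\setminus T_u$ coincide with those in $T$, so that the increment of $\Dist{u}$ along one step really is $k(n-|T_u|)-2k|T_{x_{j+1}}|$; once that is pinned down, the subtree-size bound is a short geometric computation.
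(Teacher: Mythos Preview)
Your telescoping plan is sound, but the per-step identity is incorrect for \(k\geq 2\). When you move the attachment point from \(x_j\) down to \(x_{j+1}\), the \(k-1\) intermediate stretched-path nodes \(x_{j+1}^1,\dots,x_{j+1}^{k-1}\) lie in \(V\setminus T_u\) but \emph{not} in \(T_{x_{j+1}}\), and for \(w=x_{j+1}^m\) one has \(\dist[T]{x_{j+1}}{w}-\dist[T]{x_j}{w}=k-2m\), not \(+k\). Summed over these \(k-1\) nodes the contribution is \(0\), so the true increment is \(k(n-|T_u|)-2k|T_{x_{j+1}}|-k(k-1)\). After telescoping this introduces an extra \(-(i-1)k(k-1)\), and with your bounds (a) \(n-|T_u|\geq |T_{x_1}|\) and (b) \(\sum_{j\geq 2}|T_{x_j}|<|T_{x_1}|\) taken at face value, the target inequality no longer follows for \(k\geq 2\) and growing \(i\). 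The gap is repairable---for instance, since \(u\notin T_{x_1}\) forces \(x_1\neq r\), the \(k-1\) stretched nodes above \(x_1\) together with \(x_1\)'s \(B\)-parent lie in \((V\setminus T_u)\setminus T_{x_1}\), giving \(n-|T_u|\geq |T_{x_1}|+k\), which exactly cancels the deficit---but as written the step fails.

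The paper sidesteps this bookkeeping by using a per-step \emph{inequality} rather than an identity. Going from \(x_{j-1}\) to \(x_j\) (for \(j\geq 3\)), the distance from \(u\) to every node of \(T_{x_1}\setminus T_{x_{j-1}}\) increases by exactly \(k\), while the total distance to \(T_{x_{j-1}}\) does not decrease, because \(x_{j-1}\) is a \(1\)-median of its own subtree (this is where the earlier connection-level lemma is invoked). The stretched-path nodes between \(x_{j-1}\) and \(x_j\) sit inside \(T_{x_{j-1}}\), so they are absorbed by the \(1\)-median argument instead of a \(\pm k\) dichotomy. The paper then bounds \(|T_{x_{j-1}}|\leq 2^{-(j-2)}|T_{x_1}|\) and sums. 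Your approach, once patched, is arguably more transparent since it produces an exact telescoped formula; the paper's is shorter because it never needs to look at the intermediate stretched nodes explicitly.
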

\begin{proof}
For~\(j\in[i]\), we define the shorthand~\(T(x_j)\) for~\(T - uv + ux_j\), which is the graph in which~\(u\) has swapped~\(uv\) for~\(ux_j\). 
By considering a swap as a deletion followed by an addition, we can apply \Cref{lemma:binary_tree_connection_level}.
It follows that \(\Dist[T(x_2)]{u} \geq \Dist[T(x_1)]{u}\). For~\(j\geq 3\), we proceed recursively with \[\Dist[T(j)]{u} \geq \Dist[T(j-1)]{u} + k\card{T_{x_1} \setminus T_{x_{j-1}}},\] as the distance to all nodes outside of~\(T_{x_{j-1}}\) increases by~\(k\) while the total distance within~\(T_{x_{j-1}}\) does not improve according to \Cref{lemma:binary_tree_connection_level}. By applying the lower bound~\(\card{T_{x_1} \setminus T_{x_{j-1}}} \geq \card{T_{x_1}}(1-2^{-(j-1)})\), we get
\begin{align*}
\Dist[T(x_i)]{u} &\geq \Dist[T(x_1)]{u} + k\card{T_{x_1}}\sum_{j=3}^i \left(1-2^{-(j-1)}\right)\\
&= \Dist[T(x_1)]{u} + k\card{T_{x_1}}\sum_{j=2}^{i-1} \left(1-2^{-j}\right)\\
&\geq \Dist[T(x_1)]{u} + k\card{T_{x_1}}\left(i-3\right).\qedhere
\end{align*}
\end{proof}

\begin{lemma}
\label{lemma:distance_to_path_after_swap_in_tree}
Let~\(G\) be a tree with~\(u,v,w\in V\) such that~\(uv\in E\) and~\(uw\notin E\). Let~\(G'\) denote the graph~\(G-uv+uw\).
With~\(i\in\N\), let~\(P = x_1, \dots, x_i\) denote a path such that the path from~\(u\) to each node in~\(P\) contains~\(v\) and~\(x_1\). If the path from~\(u\) to each node in~\(P\) contains~\(w\) and~\(x_i\) in~\(G'\), then it holds that
\[\dist[G']{u}{P} = \dist[G]{u}{P} + i(\dist{w}{x_i} - \dist{v}{x_1}).\qedhere\]
\end{lemma}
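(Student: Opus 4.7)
\smallskip
\noindent\textbf{Proof plan for \Cref{lemma:distance_to_path_after_swap_in_tree}.} The proof is a direct calculation exploiting the fact that in a tree each pair of nodes is connected by a unique path, so that each of the hypothesized path-containments pins down the distance exactly. The plan is to write $\dist[G]{u}{x_j}$ and $\dist[G']{u}{x_j}$ as explicit sums of three segment lengths for every $j\in[i]$, and then simply subtract and sum over $j$.

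First, I would set up the decomposition in $G$. Since the unique $u$-$x_j$-path in $G$ contains $v$ and $x_1$, and since $P=x_1,\dots,x_i$ is a path (so its nodes lie consecutively along a shortest path, giving $\dist{x_1}{x_j}=j-1$), one obtains
\[
\dist[G]{u}{x_j}\;=\;1+\dist{v}{x_1}+(j-1).
\]
Analogously, in $G'$ the unique $u$-$x_j$-path contains $w$ (as neighbor of $u$ via the new edge $uw$) and $x_i$, which yields
\[
\dist[G']{u}{x_j}\;=\;1+\dist{w}{x_i}+(i-j).
\]
A small observation that may need mentioning is that the notation $\dist{v}{x_1}$ and $\dist{w}{x_i}$ is unambiguous here: the $v$-$x_1$-path in $G$ avoids the edge $uv$ (since the $u$-$x_1$-path in $G$ uses $uv$ and trees admit only one path between two nodes), and symmetrically the $w$-$x_i$-path in $G'$ avoids the new edge $uw$, so both distances are equal in $G$ and $G'$.

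Finally, I would subtract the two identities and sum over $j\in[i]$:
\[
\dist[G']{u}{P}-\dist[G]{u}{P}\;=\;\sum_{j=1}^{i}\Bigl[\dist{w}{x_i}-\dist{v}{x_1}+(i-j)-(j-1)\Bigr].
\]
The $j$-dependent remainder $\sum_{j=1}^{i}\bigl((i-j)-(j-1)\bigr)=\sum_{j=1}^{i}\bigl((i+1)-2j\bigr)=i(i+1)-i(i+1)=0$ vanishes, leaving exactly $i\bigl(\dist{w}{x_i}-\dist{v}{x_1}\bigr)$, which is the claimed identity.

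I do not anticipate a real obstacle: the lemma is a bookkeeping statement that isolates a clean ``stretch/shrink'' contribution of a swap on a path whose two endpoints flip roles between $G$ and $G'$. The only care needed is in verifying that the path-containment hypotheses really force the stated decomposition of each $\dist[G]{u}{x_j}$ and $\dist[G']{u}{x_j}$ and that the $P$-internal distances are identical on both sides, but this is immediate from the tree structure once one spells it out as above.
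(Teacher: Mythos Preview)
Your proposal is correct and follows essentially the same approach as the paper: both decompose each $\dist{u}{x_j}$ into the three segments $u$-to-entry-point, entry-point-to-path-endpoint, and the within-$P$ offset, and then observe that the within-$P$ contributions are the same for $G$ and $G'$ by the symmetry of the path. The only cosmetic difference is that the paper first sums over $j$ for each of $G$ and $G'$ separately and then subtracts, whereas you subtract per node and then sum; your extra remark that $\dist{v}{x_1}$ and $\dist{w}{x_i}$ are unchanged between $G$ and $G'$ is a useful clarification the paper leaves implicit.
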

\begin{proof}
The idea of the proof is that all paths from~\(u\) to nodes in~\(P\) go through one of the endpoints of~\(P\)  in~\(G\) and~\(G'\), and by symmetry of the path, it does not matter through which endpoint they go.

We analyze~\(\dist[G]{u}{P}\) by breaking it down into multiple components, which yields
\begin{align*}
	\dist[G]{u}{P} &= \sum_{x\in P}(\dist[G]{u}{v} + \dist[G]{v}{x_1} + \dist[G]{x_1}{x}) \\
	&= i + i\cdot\dist[G]{v}{x_1} + \sum_{j=0}^{i-1}j.
\end{align*}
Analogously, it holds that~\[\dist[G']{u}{P} = i + i\cdot\dist[G']{w}{x_i} + \sum_{j=0}^{i-1}j.\] Computing the difference between both terms yields that~\[\dist[G']{u}{P} - \dist[G]{u}{P} = i\cdot\dist[G']{w}{x_i} - i\cdot\dist[G]{v}{x_1},\] which is equivalent to the claim.
\end{proof}

Next, we prove a sufficient condition under which a stretched binary tree is in BSwE.
\begin{restatable}{lemma}{lemmatwelve}
\label{lemma:stretched_binary_tree_in_eqSwap}
Let~\(T\) be a~\(k\)-stretched binary tree, then~\(T\) is in BSwE for~\(\alpha \geq 7kn\).
\end{restatable}
%\lemmatwelve*
\begin{proof}
Let~\(B\) be the complete binary tree on which~\(T\) is based on.
We show that there are no nodes~\(u,v,w\in V\), with~\(uv\in E, uw\notin E\) such that swapping~\(uv\) for~\(uw\) is beneficial for agents~\(u\) and~\(w\). For~\(x\in V\), we define the shorthand~\(T(x)\) to denote~\(T-uv + ux\).

From the perspective of agent~\(w\), the swap is not better than simply adding~\(uw\), so, by~\Cref{lemma:stretched_binary_tree_in_eqAdd}, we can exclude that \(\layer{u} \leq \layer{w}\).
Moreover, it holds that \(\dist{u}{w} > \frac{\alpha}{n} \geq 7k\), as agent~\(w\) decreases her distance to at most~\(n-1\) nodes by at most~\(\dist{u}{w} - 1\) each when accepting the swap.

If~\(v\) is a child of~\(u\), then we need~\(w\in T_v\) to keep the graph connected. Due to the minimum distance between~\(u\) and~\(w\), there need to be at least six nodes from~\(B\) on the path from~\(u\) to~\(w\), not including~\(u\) and~\(w\) themselves. With~\(i\in\N_{\geq 6}\), let~\(x_1, \dots, x_i\in V_B\) denote these nodes ordered by proximity to~\(u\).

Now, we argue that agent~\(u\) does not benefit from swapping towards~\(w\) by upper bounding the distance cost reduction for~\(u\).
Note that it suffices to focus on the distance to the nodes in~\(T_{x_1}\), as, by~\Cref{lemma:distance_to_path_after_swap_in_tree}, the total distance to the nodes on the path between~\(v\) and~\(x_1\) will not decrease by the swap.

As \(\dist{u}{x_1} \leq k\), we can upper bound the benefit of swapping to~\(x_1\) by \(k\card{T_{x_1}}\), from which the inequality \[\Dist[T(x_1)]{u} \geq \Dist[T]{u} - k\card{T_{x_1}}\] follows.
Using \Cref{lemma:binary_tree_swap_level}, we get
\[\Dist[T(x_4)]{u} \geq \Dist[T]{u} - k\card{T_{x_1}} + k\card{T_{x_1}}(4-3) = \Dist[T]{u},\]
so agent~\(u\) does not benefit from swapping~\(uv\) for~\(ux_4\).
Moreover, by~\Cref{lemma:binary_tree_connection_level}, it holds that \(\Dist[T(w)]{u} \geq \Dist[T(x_4)]{u}\), so there cannot be a mutually beneficial swap. Hence, we assume for the rest of the proof that~\(u\) is a child of~\(v\).

Since~\(v\) is the parent of~\(u\), it must hold that \(w\notin T_u\) to keep the graph connected. 
Let~\(a, b\in B\) be children of the root~\(r\) in~\(B\). Since the distance and layer requirements allow us to exclude \(\layer{w} \leq \frac{7k}{2}\), we assume without loss of generality that~\(u\in T_a\) and~\(w\in T_b\).

Now, we establish necessary conditions for the swap to reduce the distance cost of~\(u\). Swapping does not get~\(u\) closer to~\(r\) or to any node in~\(V\setminus T_b^1\). By~\Cref{lemma:distance_to_path_after_swap_in_tree}, we can also neglect the nodes in~\(\neig{\leq k-1}{r}\setminus T_u\), as they form a path. Thus, it suffices to consider the distance cost reduction towards nodes in~\(T_b\).

We again consider a sequence of nodes from~\(B\). With~\(i=\floor{ \frac{\layer{w}}{k}}\), let~\(x_1, \dots, x_i\in V_B\) denote the nodes from~\(B\) on the path from~\(b\) to~\(w\). In particular, these nodes form a path in~\(B\) and~\(x_1\) denotes the node~\(b\). We upper bound the distance cost reduction by considering swaps to these intermediate nodes.
When swapping~\(uv\) for~\(ub\) and thus transforming~\(T\) to~\(T(b)\), the distance cost of agent~\(u\) restricted to~\(T_b\) is reduced by at most \[(\layer{u}+\layer{b} - 1)\card{T_b} < (\layer{u} + k)\card{T_b}).\] Applying~\Cref{lemma:binary_tree_swap_level}, we get that
\begin{align*}
\dist[T(x_i)]{u}{T_b} &\geq \dist[T]{u}{T_b} - (\layer{u} + k)\card{T_b} + k\card{T_b}(i-3) \\
&= \dist[T]{u}{T_b} - k\card{T_b}\left(\frac{\layer{u}}{k} + 4 - i\right).
\end{align*}
In order for \(\dist[T(x_i)]{u}{T_b}\) to be less than \(\dist[T]{u}{T_b}\), we need \(\frac{\layer{u}}{k} + 4 - i > 0\). Consequently, it must hold that~\(i\) is less than~\(\frac{\layer{u}}{k}+4\), which implies that
\[\layer{w} \leq \layer{x_i} + k-1 < k\left(\frac{\layer{u}}{k}+4\right) + k-1 < \layer{u} + 5k.\]

On the other side of the swap, agent~\(w\) only gets closer to nodes in~\(T_u\), so her distance reduction is \(\card{T_u}(\dist{u}{w} - 1)\). We insert \(\layer{w} < \layer{u}+5k\) to get \(\dist{u}{w} \leq 2\layer{u} + 5k\). Let \(l = \floor{\frac{\layer{u}}{k}}\), then we can upper bound \(\card{T_u}\) by~\(2^{-l}n\).
It follows that the distance cost reduction~\(w\) is at most
\[\Dist[T(w)]{w} - \Dist[T]{w} < 2^{-l}n(2(l+1)k + 5k).\]
The bound \(2(l+1)k + 5k \leq (2l + 7)k\) implies that \[\Dist[T(w)]{w} - \Dist[T]{w} < 2^{-l}(2l+7)nk.\] As this is maximized for~\(l=0\), we get that the distance reduction for agent~\(w\) is no more than~\(7kn\) and thus does not exceed the increase in buying cost. Finally, we conclude that there cannot be a swap beneficial for~\(u\) and~\(w\) at the same time. Thus, \(T\) is in BSwE.
\end{proof}

\theoremthirteen*
\begin{proof}
The graph~\(T\) is in RE because it is a tree, in BAE by \Cref{lemma:stretched_binary_tree_in_eqAdd}, and in BSwE by \Cref{lemma:stretched_binary_tree_in_eqSwap}.
\end{proof}

Before we proceed to the resulting lower bound proof for the PoA, we need a few bounds to show that we can build trees of a size close to a target node count~\(\eta\in\N\). Moreover, we establish multiple approximations which will be useful in the upcoming calculations.

\begin{lemma}
\label{lemma:stretched_tree_parameter_bounds}
For stretch factor~\(k\in\N_{\geq 1}\) and target size~\(t\in\R\), with~\(t\geq 2k+1\), let~\(T\) be a~\(k\)-stretched tree with~\(d\in\N\) chosen maximal subject to~\(n\leq t\). Then, it holds that \(\frac{t}{3} \leq n \leq t\) and \(k \log\left(\frac{t}{6k}\right) \leq \depth{T} \leq k\log t\).
\end{lemma}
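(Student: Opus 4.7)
The proof reduces to manipulating the explicit formulas $n(d) = (2^{d+1}-2)k + 1$ (for the node count) and $\depth{T} = kd$ (for the depth), both established earlier in the paper, together with the maximality condition $n(d+1) > t$.

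The upper bound $n \leq t$ is immediate by the choice of $d$. For the lower bound $n \geq t/3$, I would first note that $n(d+1) = 2n(d) + 2k - 1$, so maximality gives $2n + 2k - 1 > t$, hence $n > (t - 2k + 1)/2$. This inequality alone only implies $n \geq t/3$ when $t \geq 6k - 3$, so I would split into two cases. The assumption $t \geq 2k+1$ forces $d \geq 1$. If $d = 1$ then $n = 2k+1$ and $t < n(2) = 6k+1$, so $n = 2k+1 \geq (6k+1)/3 > t/3$. If $d \geq 2$ then $n \geq n(2) = 6k+1 > 6k-3$, so the estimate $(t-2k+1)/2 \geq t/3$ kicks in since $t \geq n \geq 6k+1$.

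For the depth bounds I would use $\depth{T} = kd$, so everything reduces to controlling $d$. Rewriting $n(d) \leq t$ gives $2^{d+1}k \leq t + 2k - 1$, and using $t \geq 2k+1$ to estimate $2k - 1 \leq t - 2$, this yields $2^{d+1} \leq (t + 2k - 1)/k \leq 2t$, so $d \leq \log t$ and $\depth{T} \leq k \log t$. For the lower bound, $n(d+1) > t$ rewrites to $2^{d+2}k > t + 2k - 1$, and since $k \geq 1$ this gives $2^{d+2}k > t$, whence $2^d > t/(4k)$, i.e., $d > \log(t/(4k))$. Since $4k \leq 6k$, this implies $\depth{T} = kd > k\log(t/(4k)) \geq k\log(t/(6k))$.

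The main obstacle, such as it is, is the case split for $n \geq t/3$: the clean recurrence-based estimate $n > (t-2k+1)/2$ degenerates when $t$ is only slightly above $2k+1$, because then the additive term $-2k+1$ is comparable to $t$ itself. The case $d=1$ is precisely the small-$t$ regime, and there we do not apply this inequality at all, but instead use that $t < n(2) = 6k+1$ to verify the ratio numerically. The remaining manipulations are elementary; the factor $6$ in $\log(t/(6k))$ is slack that absorbs the discrepancy between $4k$ and $6k$, so we need not be tight in the $2^{d+2}k > t$ step.
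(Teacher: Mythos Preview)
Your argument is correct. The route differs from the paper's mainly in how you establish \(n \geq t/3\). The paper avoids your case split by a short combinatorial observation: if \(L\) is the set of leaves of the underlying binary tree \(B\), then adding one more layer to \(B\) introduces a set \(A\) of new nodes with \(|A| = 2\sum_{v\in L}|P_v| < 2n\); maximality gives \(n + |A| > t\), hence \(3n > t\) in one stroke. Your recurrence \(n(d+1) = 2n(d) + 2k - 1\) encodes the same information, but the additive \(2k-1\) forces you to treat small \(t\) (equivalently \(d=1\)) separately. Both approaches are elementary; the paper's is a bit cleaner, while yours stays entirely within the closed-form formula and never appeals to the tree structure.

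For the depth bounds the two proofs are essentially the same computation, just organized differently: the paper first proves \(n \geq t/3\), deduces \(n_B \geq t/(3k)\), and then bounds \(d\); you bypass \(n\) and bound \(d\) directly from \(n(d+1) > t\), which in fact gives the slightly sharper \(d > \log(t/(4k))\) before you relax to \(\log(t/(6k))\). Either way the stated bounds follow.
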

\begin{proof}
The upper bound for~\(n\) holds trivially. Let \(L\subseteq V_B\) denote the leaves of the original tree~\(B\).
For all~\(v\in L\), the tree~\(T\) contains~\(P_v\), with~\(\card{P_v} = k\), since~\(t\geq 2k+1\). Let~\(A\) be the set of nodes added to~\(T\) if~\(B\) were to have an additional layer, then~\(\card{A}\) can be upper bounded by~\(2\card{\bigcup_{v\in L}P_v}\), which is less than~\(2n\). By definition of~\(T\), it holds that~\(n + \card{A} > t\), which implies that~\(n\geq \frac{t}{3}\).

The depth of~\(T\) is at most~\(k\log t\), as~\(\depth{B}\leq\log t\).
For the lower bound, we estimate~\(d\) for the original binary tree~\(B\). We already know that~\(n\geq\frac{t}{3}\) and~\(n_B\geq \frac{n}{k}\), so~\(B\) has at least~\(\frac{t}{3k}\) nodes. From this, we get 
\[d \geq \log\left(\frac{t}{3k}\right) -1 = \log\left(\frac{t}{6k}\right).\]
Multiplying by~\(k\) concludes the proof as~\(\depth{T} = k\depth{B}\).
\end{proof}

\theoremfourteen*
\begin{proof}
We choose~\(k=\ceil{\frac{\alpha}{\eta}} \leq 2\frac{\alpha}{\eta}\) and~\(d\) maximal such that~\(n\leq \frac{\eta}{14}\). Then, by \Cref{lemma:stretched_tree_parameter_bounds}, it holds that~\(\frac{\eta}{42} \leq n \leq \frac{\eta}{14}\). As~\(\alpha \geq 7nk\), the graph is in BGE according to~\Cref{theorem:stretched_tree_in_Greedy}.

It remains to compute~\(\poa{G}\). \Cref{lemma:stretched_tree_parameter_bounds} provides the following lower bound on the depth of the graph:
\begin{align*}
	\depth{B}&\geq \log\left(\frac{\frac{\eta}{14}}{6k}\right) \geq \log\left(\frac{\eta^2}{168\alpha}\right)\\
	&\geq \log\left(\eta^\gamma\right) - \log(168) > \gamma\log\eta - \frac{15}{2}.
\end{align*}
With this, we apply \Cref{lemma:average_depth_in_stretched_tree} to derive
\[\Dist{r} \geq n k\left(\depth{B} - \frac{3}{2}\right) \geq (n-1)\frac{\alpha}{\eta}\left(\gamma\log\eta - 9\right).\]
In order to approximate~\(\poa{G}\), we lower bound~\(\Dist{G}\) by~\(n\Dist{r}\). This yields
\[\poa{G} \geq \frac{2(n-1)\alpha + n(n-1)\frac{\alpha}{\eta}(\gamma \log\eta - 9)}{2(n-1)(\alpha + n -1)}.\]
Eliminating the factor~\((n-1)\) on both sides and lower bounding~\(n\frac{\alpha}{\eta}\) by~\(\frac{\alpha}{42}\) gives
\[\poa{G} \geq \frac{2\alpha + \frac{\alpha}{42}(\gamma \log\eta - 9)}{2(\alpha + n -1)} \geq \frac{\frac{25}{14}\alpha + \frac{\alpha}{42}\gamma\log\eta}{2(\alpha + n - 1)}.\]
With the inequalities~\(\alpha > \eta\) and~\(n\leq \frac{\eta}{7}\), we upper bound the denominator by~\(\frac{16}{7}\alpha\). This yields
\[\poa{G} \geq \frac{25\cdot 7}{14\cdot 16} + \frac{7}{42\cdot 16}\gamma\log\eta = \frac{25}{32} + \frac{1}{96}\gamma\log\eta.\qedhere\]
\end{proof}

Since we construct a lower bound for a large range of~\(\alpha\), we need to be able to scale our graph, so we can provide many different values of~\(n\) for a given~\(\alpha\). Thus, the next definition combines many stretched trees into a single graph.
We define a \emph{stretched tree star}~\(G\) with stretch factor~\(k\in\N_{\geq 1}\), target subtree size~\(t\in\R_{\geq 2k+1}\) and target size~\(\eta\in\N_{\geq 2t+1}\) as follows.
Let~\(T\) be a stretched tree with parameter~\(k\) and~\(d\) maximal subject to~\(|T|\leq t\). Then~\(G\) consists of a root~\(r\) with~\(\ceil{\frac{\eta-1}{|T|}}\) copies of~\(T\) as child subtrees.

The next lemma provides multiple bounds in preparation for the following proofs. It helps reduce redundancy by allowing calculations to be reused.

\begin{lemma}
\label{lemma:stretched_tree_star_parameter_bounds}
Let~\(G\) be a stretched tree star with \(k,t,\eta\in\N\). Then it holds that \(\eta \leq n \leq \frac{3}{2}\eta\) and \(\depth{T} \leq \depth{G} \leq 2k\log t\).
\end{lemma}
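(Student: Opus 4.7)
The plan is to split the statement into two independent parts: the size bound $\eta \leq n \leq \tfrac{3}{2}\eta$ and the depth bound $\depth{T}\leq\depth{G}\leq 2k\log t$. Both are direct consequences of the construction of a stretched tree star combined with \Cref{lemma:stretched_tree_parameter_bounds}. I would first write out the explicit formula $n = 1 + \lceil \tfrac{\eta-1}{|T|}\rceil\cdot |T|$, which follows immediately from the definition: one root $r$ plus $\lceil\tfrac{\eta-1}{|T|}\rceil$ child copies of $T$, each contributing $|T|$ nodes.

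For the size, the lower bound $n\geq \eta$ follows from $\lceil x\rceil \geq x$, giving $n \geq 1 + \tfrac{\eta-1}{|T|}\cdot |T| = \eta$. For the upper bound I would use $\lceil x \rceil \leq x+1$ to obtain $n\leq 1 + \bigl(\tfrac{\eta-1}{|T|}+1\bigr)|T| = \eta + |T|$, and then invoke $|T|\leq t$ together with the standing hypothesis $\eta\geq 2t+1$ (from the definition of stretched tree star) to conclude $|T|\leq t \leq \tfrac{\eta-1}{2}\leq \tfrac{\eta}{2}$, hence $n\leq \tfrac{3}{2}\eta$.

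For the depth, observe that since the copies of $T$ are attached as child subtrees of the new root $r$, we have the exact identity $\depth{G}=\depth{T}+1$, which already gives $\depth{T}\leq \depth{G}$. For the upper bound, I would apply \Cref{lemma:stretched_tree_parameter_bounds} to $T$, yielding $\depth{T}\leq k\log t$, so $\depth{G}\leq k\log t + 1$. The remaining step is to absorb the additive $1$ into the factor $2k\log t$, which holds as soon as $k\log t\geq 1$; this is automatic since $k\geq 1$ and, by the requirement $t\geq 2k+1\geq 3$, we have $\log t\geq \log 3 > 1$.

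I do not anticipate a genuine obstacle here: the statement is essentially bookkeeping over the construction. The only subtlety is remembering that $|T|$ need not equal $t$ (it can be strictly smaller because $d$ is an integer), so one must carefully distinguish $|T|$ from the target parameter $t$ in the inequalities, and then discharge the $|T|\leq t$ step at the right moment. Ensuring that the standing assumption $\eta\geq 2t+1$ from the definition is used explicitly is what makes the $\tfrac{3}{2}\eta$ upper bound go through.
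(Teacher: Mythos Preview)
Your proposal is correct and follows essentially the same approach as the paper: both write out $n = 1 + \lceil(\eta-1)/|T|\rceil\,|T|$, use $\lceil x\rceil \geq x$ for the lower bound and $\lceil x\rceil \leq x+1$ together with $|T|\leq t \leq (\eta-1)/2$ for the upper bound, then use $\depth{G}=\depth{T}+1$ with \Cref{lemma:stretched_tree_parameter_bounds} and absorb the $+1$ via $k\log t \geq 1$. Your write-up is in fact slightly more explicit than the paper's (which contains a small typo, writing $\eta - n < |T|$ where $n - \eta < |T|$ is meant), and your caution about distinguishing $|T|$ from $t$ is well placed.
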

\begin{proof}
For the bounds on~\(n\), we consider that~\(n = \ceil{\frac{\eta-1}{|T|}}\ |T| + 1\). This is guaranteed to be at least~\(\eta\) for any positive~\(|T|\). For the upper bound, we consider that~\(\eta - n < |T|\), which is less than~\(\frac{\eta}{2}\) by the parameter restrictions of the definition. 

The depth of~\(G\) is equal to~\(\depth{T}+1\). By \Cref{lemma:stretched_tree_parameter_bounds}, the depth of~\(T\) is at most~\(k\log t\), and since this is at least~\(1\), we can upper bound~\(\depth{T} + 1\) by~\(2\depth{T}\).
\end{proof}

Next, we lower bound~\(\poa{G}\) based on the parameters.

\begin{restatable}{lemma}{lemmabeforefifteen}
\label{lemma:stretched_tree_star_poa}
Let~\(G\) be a stretched tree star with~\(k,t,\eta\in\N\), then 
\[\poa{G} \geq \frac{nk\left(\log\left(\frac{t}{k}\right) - \frac{9}{2}\right)}{2(\alpha + n - 1)}.\qedhere\]
\end{restatable}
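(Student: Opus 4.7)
The plan is to lower-bound $\Cost{G}$ by the total distance cost $\Dist{G}$, and in turn to bound $\Dist{G}$ from below using only distances between pairs of nodes lying in different subtree copies of the star---whose shortest paths are forced to pass through the central root $r$. Write $s = \ceil{(\eta-1)/|T|}$ for the number of copies of $T$, so that $n-1 = s|T|$, and set $S = \sum_{u \in V_T}\layer[T]{u}$. Since each copy's root is a child of $r$, for $u,v$ in different copies the unique path goes through $r$ and gives $\dist{u}{v} = \layer[T]{u} + \layer[T]{v} + 2$.

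Summing these cross-subtree distances over ordered pairs of non-root nodes, together with the distances to and from $r$, a direct double-counting will yield
\[\Dist{G} \;\geq\; 2s(S+|T|)\,(n-|T|).\]
Here $s(S+|T|) = \sum_{u \neq r}\layer[G]{u}$ is the total $G$-layer mass over non-root nodes, and $n-|T|$ is the number of nodes outside any fixed copy (including $r$). I will then substitute $S \geq |T|\cdot k(d-\tfrac{3}{2})$ from \Cref{lemma:average_depth_in_stretched_tree}, where $d$ is the depth of the binary tree underlying $T$, to obtain $s(S+|T|) \geq (n-1)\bigl(k(d-\tfrac{3}{2})+1\bigr)$. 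The definitional constraint $\eta \geq 2t+1$, combined with $|T|\leq t$ and $n\geq\eta$ from \Cref{lemma:stretched_tree_star_parameter_bounds}, forces $|T|\leq (n-1)/2$, hence $n-|T|\geq n/2$. Plugging these in gives $\Dist{G} \geq n(n-1)\bigl(k(d-\tfrac{3}{2})+1\bigr)$.

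For the final step, \Cref{lemma:stretched_tree_parameter_bounds} furnishes $d \geq \log(t/(6k)) = \log(t/k)-\log 6$. Because $\log 6 < 3$ and $k \geq 1$, the trivial inequality $k(3-\log 6)+1 > 0$ is exactly what allows absorbing the constant $-\log 6 - \tfrac{3}{2}$ into $-\tfrac{9}{2}$, i.e.\ $k(d-\tfrac{3}{2})+1 \geq k(\log(t/k)-\tfrac{9}{2})$. Combining with $\Cost{G}\geq\Dist{G}$ and dividing by $\Cost{\opt} = 2(n-1)(\alpha+n-1)$ then delivers the claimed bound.

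The main obstacle is careful bookkeeping of the additive $+|T|$ term in $s(S+|T|)$: retaining this term (instead of dropping to the weaker $sS$) is precisely what produces the $+1$ summand in $k(d-\tfrac{3}{2})+1$, without which the constant would only reach $-\log 6 - \tfrac{3}{2} \approx -4.085$ rather than the $-\tfrac{9}{2}$ stated in the lemma. A secondary nuisance is making sure the ``root plus cross-subtree'' collapse $2s(S+|T|)(n-1-|T|) + 2s(S+|T|) = 2s(S+|T|)(n-|T|)$ is applied correctly, so that $r$ itself is counted exactly once on each side of every pair.
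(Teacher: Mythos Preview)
Your proof is correct and lands on the same intermediate quantity as the paper: both establish $\Cost{G}\geq n\cdot\Dist{r}$ with $\Dist{r}=s(S+|T|)=(n-1)\bigl(1+k(d-\tfrac{3}{2})\bigr)$ and then simplify identically. The only difference is how that inequality is justified. The paper observes (implicitly) that $r$ is a $1$-median of $G$---each child subtree has size $|T|\leq (n-1)/2$---so $\Dist{v}\geq\Dist{r}$ for every $v$ and hence $\Dist{G}\geq n\Dist{r}$ in one line. You instead count cross-subtree ordered pairs directly to obtain $\Dist{G}\geq 2\Dist{r}(n-|T|)$ and then use $n-|T|\geq n/2$; this is slightly more bookkeeping but self-contained and avoids the $1$-median machinery.

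One correction to your commentary: the claim that retaining the $+1$ is ``precisely what produces'' the constant $-\tfrac{9}{2}$ is backwards. Dropping the $+1$, as the paper does, already yields the constant $-\log 6-\tfrac{3}{2}\approx -4.085$, which is \emph{larger} than $-\tfrac{9}{2}$; thus $k\bigl(\log(t/k)-\log 6-\tfrac{3}{2}\bigr)\geq k\bigl(\log(t/k)-\tfrac{9}{2}\bigr)$ holds without any help from the $+1$. Your extra term only strengthens an inequality that was already slack enough, so this is not the ``main obstacle'' you describe---but it does no harm to the argument.
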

\begin{proof}
Based on \Cref{lemma:stretched_tree_star_parameter_bounds}, we can lower bound the depth~\(d\) of the original binary tree~\(B\) with~\(d \geq \log\left(\frac{t}{6k}\right)\). Combining this with \Cref{lemma:average_depth_in_stretched_tree} yields
\[\Dist{r} \geq (n-1)\left(1 + k\left(d - \frac{3}{2}\right)\right) \geq (n-1)k\left(\log\left(\frac{t}{6k}\right) - \frac{3}{2}\right).\]
By lower bounding~\(\Cost{G}\) by~\(n\cdot\Dist{r}\), we follow that
\begin{align*}
		\poa{G} &\geq \frac{n\cdot\Dist{r}}{2(n-1)(\alpha + n-1)}\\
		&\geq \frac{n(n-1)k\left(\log\left(\frac{t}{6k}\right) - \frac{3}{2}\right)}{2(n-1)(\alpha + n-1)}\\
		&\geq \frac{nk\left(\log\left(\frac{t}{k}\right) - \frac{9}{2}\right)}{2(\alpha + n - 1)}.\qedhere
\end{align*}
\end{proof}

\section{Omitted Details from Section~\ref{sec:neighborhood_on_tree}}\label{app:neighborhood_on_tree}

\lemmasixteen*
\begin{proof}
We start by showing that the conditions imply that \(\alpha > 6\card{T}\cdot\depth{G}\) holds. Assume towards a contradiction that \(\alpha \leq 6\card{T}\cdot\depth{G}\), then we use \(n\geq 2\card{T}+1\) to get the lower bound \(\frac{3n\cdot\depth{G}}{\alpha}+1 > 2\) and the upper bound \(\frac{\alpha}{3\card{T}\cdot\depth{G}} \leq 2\), which contradicts the inequality in the premise.

Now, we consider a neighborhood change around a node~\(u\in V\) where edges to~\(A\subseteq V\) get added and existing edges to~\(R\subseteq V\) get removed. To maintain connectivity, it must hold that~\(\card{A} \geq \card{R}\).

First, we consider the case~\(u=r\). For each child~\(v\in V\) of~\(u\), there needs to be at least one edge between~\(u\) and~\(T_v\). Paying for more than one edge towards~\(T_v\) cannot benefit agent~\(u\) as \(\dist{u}{T_v}\leq \card{T}\cdot\depth{G} < \frac{\alpha}{6}\). Moreover, agent~\(u\) is already connected to the 1-median of~\(T_v\), so there is also no benefit in swapping the edge for a different target in~\(T_v\). Hence, agent~\(u\) cannot improve if she is the root.

Now, we argue that even if agent~\(u\) is not the root, she cannot improve by only applying changes restricted to~\(T_u\). Just as in the previous case, for each child~\(v\in V\) of~\(u\), there needs be at least one edge from~\(u\) to~\(T_v\), so the buying cost cannot be reduced. Again, buying an extra edge is also not beneficial since~\(\dist{u}{T_v} < \frac{\alpha}{6}\).

It remains to consider a swap from~\(uv\) to~\(uw\) with~\(w\in T_v\). If~\(k=1\), node~\(v\) is already the 1-median of~\(T_v\), so agent~\(u\) does not benefit from such a swap.
If~\(k>1\), agent~\(u\) might benefit from the swap.
In the proof of~\Cref{lemma:stretched_binary_tree_in_eqSwap}, we have already shown that such a swap can only reduce the distance cost of agent~\(u\) if it holds that~\(\dist{u}{w} < 5k\).
As the swap improves the distance cost of agent~\(w\) by at most \((\dist{u}{w}-1)n\), we upper bound the distance reduction for agent~\(w\) by \(5kn\leq\frac{5}{6}\alpha\), which is less than she has to pay for the new edge. Thus, agent~\(w\) is not willing to accept the swap without additional changes. And even if~\(u\) performs other changes at the same time, as long as they are restricted to~\(T_u\), the distance cost of agent~\(w\) will be reduced by no more than an additional~\(\dist{u}{T_u}\), which is less than~\(\frac{\alpha}{6}\).
Consequently, agent~\(u\) cannot improve by only doing changes within~\(T_u\).

So, assume for the rest of the proof that agent~\(u\) also does changes outside of~\(T_u\).
As agent~\(u\) might still combine these changes with changes internal to~\(T_u\), we account for an internal distance benefit of less than~\(\dist{u}{T_u} < \frac{\alpha}{6}\).
Next, our analysis focuses on the changes outside of~\(T_u\), so we define~\(A'=A\setminus T_u\) and~\(R'=R\setminus T_u\), where we assume that~\(A'\neq\emptyset\) and~\(\card{A'}\geq\card{R'}\). Among the nodes in~\(A'\), let \(v\in A'\) denote one with the smallest distance to~\(r\). In particular, it holds for all \(w\in A'\) that \(\layer{v} \leq \layer{w}\).

If \(\layer{v}\geq\layer{u}-1\), agent~\(u\) does not get any closer to~\(r\) by this change. The distance reduction provided by each new partner in~\(A'\) is at most \(2\cdot\depth{G}\card{T} < \frac{\alpha}{3}\).
Hence, the distance cost of agent~\(u\) is reduced by at most \(\card{A'}\frac{\alpha}{3}+\frac{\alpha}{6}\).
If \(\card{A'} > \card{R'}\), this implies that the distance cost reduction cannot be larger than the additional buying cost of \((\card{A'}-\card{R'})\alpha\) with~\(\card{R'}\leq 1\).
If \(\card{A'} = \card{R'} = 1\), the buying cost of agent~\(u\) does not increase, but as agent~\(u\) has removed her parent edge, the distance reduction for agent~\(v\) is upper bounded by \(2\cdot\depth{G}\card{T_u} < \frac{\alpha}{3}\), so agent~\(v\) does not benefit from the change.

Consequently, we assume that \(\layer{v} < \layer{u}-1\). We now consider requirements such that agents~\(u\) and~\(v\) both benefit from this change.
Then, we show that these requirements contradict each other from which follows that no mutually beneficial change exists.

The distance reduction for agent~\(u\) from the complete neighborhood change is trivially upper bounded by \(\Dist{u} \leq 2\cdot\depth{G}n\). Consequently, the number of new edges going outside of~\(T_u\) is upper bounded by
\[\card{A'} < \frac{2n\cdot\depth{G}}{\alpha} + 1,\]
where the addition of~\(1\) is due to the fact that~\(u\) might delete its parent edge, so~\(\card{R'}=1\).

For~\(v\), the buying cost increases by~\(\alpha\). Let \(C\subset V\) denote the children of~\(r\). We define \[C' = \autoset{c\in C \mid T_c\cap ((A'\cup\autoset{u})\setminus\autoset{v}) \neq \emptyset}\] and find that the distance from~\(v\) only decreases to descendants of the nodes in~\(C'\), as agent~\(v\) does not get any closer to the root~\(r\). For each~\(c\in C'\), the distance reduction to the nodes in~\(T_c\) is trivially upper bounded by \(\dist{v}{T_c} < 2\card{T}\cdot\depth{G}\). So, for~\(v\) to benefit from the change, it is necessary that \(\card{C'} > \frac{\alpha}{2\card{T}\cdot\depth{G}}\).
Since~\(u\notin A'\) and~\(v\in A'\), it holds that \(\card{A'}\geq \card{C'}\), and thus
\[\card{A'} > \frac{\alpha}{2\card{T}\cdot\depth{G}}.\]

Finally, we can combine the upper and lower bound to obtain
\[\frac{\alpha}{2\card{T}\cdot\depth{G}} < \card{A'} < \frac{2n\cdot\depth{G}}{\alpha} + 1.\]
By the assumption of the lemma, the lower bound is larger than the upper bound. Thus, there is no~\(\card{A'}\) fulfilling the requirements. Thus, no mutually beneficial neighborhood change exists.
\end{proof}

\theoremseventeen*
\begin{proof}
We prove both statements separately:
\begin{itemize}
 \item[(i)] We construct~\(G\) as a stretched tree star with parameters \(k=\floor{\frac{\alpha}{9\eta}}\),~\(t = \eta^{1-\varepsilon/2}\) and~\(\eta\) as provided.
Using \Cref{lemma:stretched_tree_star_parameter_bounds}, we upper bound~\(\depth{G}\) by~\(2k\log t\), which is upper bounded by~\(2\frac{\alpha}{9\eta}\log \eta\). We bound the terms from \Cref{lemma:stretched_tree_star_neighborhood} with
\begin{align*}
	&\frac{3n\cdot\depth{G}}{\alpha} + 1 \leq \frac{3n\cdot 2\frac{\alpha}{9\eta}\log \eta}{\alpha}+1\\
	={}&\frac{2n}{3\eta}\log\eta + 1 \leq \log \eta + 1,
\end{align*}
and with
\[\frac{\alpha}{3\card{T}\cdot\depth{G}} \geq \frac{\alpha}{3t\cdot 2\frac{\alpha}{9\eta}\log \eta} \geq \frac{3\eta}{2t\log \eta} = \frac{3\eta^{\varepsilon/2}}{2\log \eta}.\]
For sufficiently large~\(\eta\), it holds that~\(\log \eta+1\) is smaller than~\(\frac{3\eta^{\varepsilon/2}}{2\log \eta}\). Moreover, it holds that~\(6kn \leq \frac{2\alpha}{3\eta}n \leq \alpha\). Thus, we can apply \Cref{lemma:stretched_tree_star_neighborhood} to conclude that~\(G\) is in BNE.

It remains to show the logarithmic lower bound on~\(\poa{G}\). \Cref{lemma:stretched_tree_star_poa} gives us the lower bound
\[\poa{G} \geq \frac{nk\left(\log\left(\frac{t}{k}\right) - \frac{9}{2}\right)}{2(\alpha + n - 1)}.\]
Due to~\(n\leq\frac{3}{2}\eta\leq\frac{1}{6}\alpha\), we can upper bound the denominator by~\(\frac{7}{3}\alpha\).
In the numerator, we lower bound~\(nk\) by~\(n\frac{\alpha}{18\eta}\geq \frac{\alpha}{18}\) and~\(\frac{t}{k}\) by~\(9\eta^{\varepsilon/2}\).
Hence, we get
\begin{align*}
	\poa{G} &\geq \frac{\frac{\alpha}{18}\left(\log\left(9\eta^{\varepsilon/2}\right) - \frac{9}{2}\right)}{\frac{7}{3}\alpha}\\
	&= \frac{1}{42}\log\left(9\eta^{\varepsilon/2}\right) - \frac{3}{28}\\
	&\geq \frac{\varepsilon}{84}\log(\eta) - \frac{3}{28}.
\end{align*}
We conclude the proof by inserting~\(\alpha^{1/2} \leq \eta\), which implies that
\[\poa{G} \geq \frac{\varepsilon}{168}\log \alpha - \frac{3}{28}.\]
\item[(ii)] We construct~\(G\) as a stretched tree star with~\(k=1\),~\(t = \eta^\varepsilon\) and~\(\eta\) as provided.
Using \Cref{lemma:stretched_tree_star_parameter_bounds}, we can now bound \(\depth{G} \leq 2k\log t\), which is upper bounded by~\(2\varepsilon\log \eta\). We bound the terms from \Cref{lemma:stretched_tree_star_neighborhood} with
\[\frac{3n\cdot\depth{G}}{\alpha} + 1 \leq \frac{3n\cdot 2\varepsilon\log \eta}{\alpha}+1 \leq 9\varepsilon \eta^{1/2-\varepsilon}\log \eta + 1,\]
and with
\[\frac{\alpha}{3\card{T}\cdot\depth{G}} \geq \frac{\alpha}{3t\cdot 2\varepsilon\log \eta} \geq \frac{\eta^{1/2}}{6\varepsilon\log \eta}.\]
For sufficiently large~\(\eta\), it holds that~\(9\varepsilon \eta^{1/2-\varepsilon}\log \eta + 1\) is smaller than~\(\frac{\eta^{1/2}}{6\varepsilon\log\eta}\), so~\(G\) is in BNE according to \Cref{lemma:stretched_tree_star_neighborhood}.

It remains to show the logarithmic lower bound on~\(\poa{G}\). \Cref{lemma:stretched_tree_star_poa} gives us the lower bound
\[\poa{G} \geq \frac{nk\left(\log\left(\frac{t}{k}\right) - \frac{9}{2}\right)}{2(\alpha + n - 1)}.\]
Due to~\(\alpha\leq \eta\) and~\(\eta\leq n\), we can upper bound the denominator by~\(4n\). Inserting the parameter values in the numerator concludes the proof with the inequality
\[\poa{G} \geq \frac{n\left(\log\left(\eta^{\varepsilon}\right) - \frac{9}{2}\right)}{4n} = \frac{1}{4}\varepsilon\log \eta - \frac{9}{8} \geq \frac{1}{4}\varepsilon\log\alpha - \frac{9}{8}.\qedhere\]
\end{itemize}
\end{proof}

\theoremnineteen*
\begin{proof}
Let~\(r\) denote the 1-median and root of \(G\).
Let~\(u\in V\) be a node of maximum layer. If \(\layer{u} \leq 2\), the claim holds. So, we assume that \(\layer{u} \geq 3\). Moreover, it holds that \(\alpha < \frac{n}{2}\).
We assume that \(n > 4\) and thus \(\alpha < \frac{n}{2}\), as otherwise the diameter is at most~\(3\) and the claim holds.

With \(i = \card{\neig{=3}{r}}\), let \(\autoset{c_j}_{j\in[i]}\) denote the nodes in layer~\(3\) sorted descendingly by their subtree size. Let \(k=\minset{\floor{\frac{n}{\alpha}-1}, i}\).
We consider the following change around \(u\). Agent~\(u\) buys an edge towards \(r\) and towards the nodes in \(\autoset{c_j}_{j\in[k]}\setminus\autoset{u}\).

Agent~\(u\) pays for at most \(\frac{n}{\alpha}\) additional edges, so her buying cost increases by at most \(n\). Connecting to \(r\) decreases her distance cost by at least \(2\frac{n}{2}\). Further, agent~\(u\) profits from the new direct connections to the nodes in \(\autoset{c_j}_{j\in[k]}\setminus\autoset{u}\), so the overall change is beneficial for~\(u\). Each agent \(c\in\autoset{c_j}_{j\in[k]}\) profits because her distance to \(r\) decreases by~\(1\), so her distance cost decreases by at least \(\frac{n}{2}\) while she only has to pay for one additional edge.
Then, agent~\(r\) must not benefit from the proposed change, as \(G\) is in BNE.
As agent~\(r\) decreases her distance to each node in \(c\in\autoset{c_j}_{j\in[k]}\setminus\autoset{u}\) by~\(1\) and to \(u\) by \(2\), it must hold that
\(\sum_{j=1}^k \card{T_{c_j}} < \alpha.\)
Then, if \(i > k\), this allows us to upper bound \(\card{T_{c_{k+1}}}\) as follows
\[\card{T_{c_{k+1}}} < \frac{\alpha}{k} < \frac{\alpha}{\frac{n}{\alpha}-2} \leq \frac{\alpha}{\frac{n}{2\alpha}} = 2.\]
So, if the node~\(c_{k+1}\) exists, then it does not have any children.
This means that only the agents \(c\in\autoset{c_j}_{j\in[k]}\) have descendants beyond layer~\(3\). Carrying over from BGE, we get for \(c\in\autoset{c_j}_{j\in[k]}\) that \(\depth{T_c} \leq \log\alpha\).
We conclude that \(\Dist{r} \leq 3(n-1) + \alpha\log\alpha\) and apply \Cref{lemma:dist_u_bounds_poa_in_eqRemove} to get
\[\poa{G} \leq \frac{\alpha + 3(n-1) + \alpha\log \alpha}{\alpha + (n-1)} \leq 3 + \frac{\sqrt{n}\left(1 + \log\sqrt{n}\right)}{n} \leq 4.\qedhere\]
\end{proof}

\section{Omitted Details from Section~\ref{chapter:general_equilibria}}\label{appendix:general}

\theoremtwentytwo*
\begin{proof}
We start by lower bounding the cost for an agent \(u\in V\) depending on the number of incident edges denoted by \(\card{E_u}\). This will then allow us to argue that agents cannot benefit from any coalitional moves as they already have the lowest cost possible.
For any agent \(v\in V\), it holds that \(\dist{u}{v} = 1\) if the edge~\(uv\) is part of the graph and \(\dist{u}{v}\geq 2\) otherwise. Thus, the distance cost of agent~\(u\) is at least \(2(n-1) - \card{E_u}\). Adding the buying cost yields the following inequality for the total cost of agent~\(u\):
\[\Cost{u} \geq \card{E_u}\alpha + 2(n-1) - \card{E_u} = \card{E_u}(\alpha - 1) + 2(n-1).\]
For any connected graph, it holds that \(1\leq \card{E_u}\leq n-1\).

For \(\alpha < 1\), the lower bound is minimized when \(\card{E_u}\) is maximized. Namely, the bound is \((\alpha+1)(n-1)\). All agents of the clique already meet the lower bound, hence they cannot improve their cost any further. Furthermore, the clique is the only graph in BSE because Corbo and Parkes~\cite{corbo2005price} have already proven it to be the only pairwise stable graph.

If \(\alpha=1\), the first summand of the lower bound is zero, so only \(2(n-1)\) remains. If a graph~\(G\) has a diameter of at most \(2\), all agents meet that lower bound and thus cannot improve via any change. If \(G\) has a larger diameter, so if there are agents \(u,v\in V\) with \(\dist{u}{v} \geq 3\), they can improve by forming a coalition of size \(2\) and adding the edge~\(uv\).

For \(\alpha > 1\), the lower bound is minimal for \(\card{E_u} = 1\) and equals \(\alpha + 1 + 2(n-2)\). If~\(G\) is a star, this lower bound is met exactly by every leaf agent, so none of them want to change the graph. Only the center agent remains, and while her cost can be very large, she has no coalition partners and can only delete existing edges, which is not beneficial as that would disconnect the graph.

Other graphs in BSE exist for \(\alpha > 1\). For instance, a path of \(4\) nodes is in BSE for \(\alpha = 100\) since no agent is willing to increase her buying cost, as the additional edge cost outweighs any potential distance reduction.
\end{proof}

\lemmatwentythree*
\begin{proof}
We show that~\(\Cost{H}\) is at most \(2(n-1)\Cost[G]{u}\), then the claim follows directly by dividing by the socially optimal cost \(2(n-1)(\alpha + n - 1)\).
Let~\(v\in V_H\) be the agent with the lowest cost in~\(H\). If \(\Dist[H]{v} \leq \Cost[G]{u} - \alpha\), we apply \Cref{lemma:cost_bound_by_dist_in_eqRemove} and get
\[\Cost{H} \leq 2\alpha(n-1) + 2(n-1)(\Cost[G]{u} - \alpha) = 2(n-1)\Cost[G]{u},\]
so the claim holds.

Otherwise, we have \(\Dist[H]{v} > \Cost[G]{u} - \alpha\). Since~\(H\) is connected, the agent~\(v\) has to pay for at least one edge. Then it holds that \(\Cost[H]{v} > \Cost[G]{u}\), which implies that the worst-off agent in~\(G\) has a lower cost than the best-off agent in~\(H\). Hence, all agents in~\(H\) can form a coalition to transform into a graph that is isomorphic to~\(G\). As all agents benefit from this change, graph~\(H\) is not in BSE.
\end{proof}

\lemmatwentyfour*
\begin{proof}
No agent pays for more than~\(d\) child edges and one parent edge, so the buying cost of any agent is upper bounded by~\((d+1)\alpha\).
The depth of the tree is upper bounded by~\(\log_d n\), so the distance between any two nodes is at most~\(2\log_d n\), giving us the upper bound on the distance cost.
\end{proof}

\theoremtwentynine*
\begin{proof}
Slightly relaxed, the requirement means that no agent may have costs exceeding~\(2pn\). With regard to the buying costs, this implies that the degree must not be larger than~\(2p\), which is a constant.
We define the constant~\(k = \sum_{i=0}^{4p}(2p)^i \in\Theta(1)\), and observe that for no agent~\(u\) there may be more than~\(k\) nodes in~\(\neig{\leq 4p}{u}\), as each layer in the shortest path tree from~\(u\) can contain at most~\(2p\) as many nodes as the previous layer.

We now choose~\(n\) such that~\(\frac{n}{2}\) is larger than~\(k\). Let~\(G\) be a graph according to the premise and let~\(u\) be an agent. As the distance cost of agent~\(u\) is upper bounded by~\(2pn\), at least half of all nodes must be included in~\(\neig{\leq 4p}{u}\), but that contradicts our previous assessment that~\(\card{\neig{\leq 4p}{u}}\leq k < \frac{n}{2}\). Consequently, it is not possible for all agents in~\(G\) to have costs below~\(2pn\). In fact, the existence of an agent with distance cost below~\(2pn\) requires at least one node with a degree in~\(\Omega(\sqrt[4p]{n})\).
\end{proof}

\end{document}